\documentclass[moor,nonblindrev]{informs4}
\makeatletter
\if@NONBLINDREV
  \def\theARTICLETOPLEFT{}
  \def\theARTICLETOPRIGHT{}
\fi
\makeatother
\RequirePackage{tgtermes}
\RequirePackage{newtxtext}
\RequirePackage{newtxmath}
\RequirePackage{bm}
\RequirePackage{endnotes}
\usepackage{hyperref}

\OneAndAHalfSpacedXII

\usepackage{graphicx}
\usepackage{dsfont}
\usepackage{tikz}
\usepackage[capitalize]{cleveref}
\usepackage{color}
\usepackage{mathtools}
\usepackage{algorithm}
\usepackage{algpseudocode}
\usepackage{xcolor}
\usepackage{subcaption}
\usepackage{wrapfig}

\usepackage{diagbox}
\usepackage{makecell}

\usepackage[ezlist,ezrestate, ezeq]{ezlib_OR}

\usepackage{natbib}
 \bibpunct[, ]{(}{)}{,}{a}{}{,}%
 \def\bibfont{\small}%

\EquationsNumberedThrough
\TheoremsNumberedThrough
\ECRepeatTheorems

\renewcommand{\leq}{\leqslant}
\renewcommand{\geq}{\geqslant}
\renewcommand{\le}{\leqslant}
\renewcommand{\ge}{\geqslant}

\newcommand{\E}{\mathbb{E}}
\undef\P
\newcommand{\P}{\mathbb{P}}
\newcommand{\inner}[3][2]{\left\langle #2 , #3 \right\rangle}
\newcommand{\I}{\mathcal{I}}

\newcommand{\origduration}{\textnormal{\textsf{original priority index}}}
\newcommand{\remduration}{\textnormal{\textsf{remaining priority index}}}

\newcommand{\SEB}{\mathrm{SEB}}
\newcommand{\PSJF}{\mathrm{PSJF\textnormal{-}1}}
\newcommand{\SRPT}{\mathrm{SRPT\textnormal{-}1}}
\newcommand{\OPT}{\mathrm{OPT}}
\newcommand{\mgone}{\mathrm{M/G/1}}

\allowdisplaybreaks

\begin{document}

\TITLE{Optimal Scheduling in Generalized Switch in Heavy Traffic}

\RUNAUTHOR{Xie, Scully, Righter, and Grosof}
\RUNTITLE{Optimal Scheduling in Generalized Switches in Heavy Traffic}

\ARTICLEAUTHORS{%
\AUTHOR{Runhan Xie}
\AFF{Department of Industrial Engineering and Operations Research, University of California, Berkeley, Berkeley, CA, USA, \EMAIL{runhan\_xie@berkeley.edu}}

\AUTHOR{Ziv Scully}
\AFF{School of Operations Research and Information Engineering, Cornell University, Ithaca, NY, USA, \EMAIL{zivscully@cornell.edu}}

\AUTHOR{Rhonda Righter}
\AFF{Department of Industrial Engineering and Operations Research, University of California, Berkeley, Berkeley, CA, USA, \EMAIL{rrighter@berkeley.edu}}

\AUTHOR{Izzy Grosof}
\AFF{Department of Industrial Engineering and Management Sciences, Northwestern University, Evanston, IL, USA, \EMAIL{izzy.grosof@northwestern.edu}}
}

\ABSTRACT{
The generalized switch is a highly flexible queueing model, covering multiclass, multiserver, and multiresource queueing systems as well as a wide variety of stochastic networks. Although many scheduling policies have been developed for this model, they almost entirely address \emph{unknown} job duration settings. How to optimally use \emph{known} job durations in the generalized switch has remained open. Moreover, optimizing mean response time remains open in both settings. We introduce the first policy to guarantee heavy-traffic optimal mean response time in the generalized switch, our Smallest Equalizing Bucket (SEB) policy. The key challenge in designing an optimal scheduling policy is that we must simultaneously prioritize small jobs and also minimize resource waste, all while fitting within the generalized switch's service options. SEB overcomes this challenge by grouping jobs into duration-based buckets and enforcing an “equalizing’’ service structure that keeps each bucket balanced while still prioritizing the smallest jobs. We prove SEB’s heavy-traffic optimality. Simulations further confirm the effectiveness of SEB-inspired heuristics.}

\maketitle

\section{Introduction}
The generalized switch is a versatile multi-class queueing model in which, at any time, a scheduler can choose one from a set of service options that determines the number of jobs from each class that can be served simultaneously. Since its introduction \citep{stolyar2004maxweight}, it has been an important, highly general model which encompasses many standard queueing models as special cases. Notable examples of theoretical importance include: the multiserver job (MSJ) model, where different jobs require different numbers of servers; the multiresource job model, where jobs require combinations of resources such as servers, memory, and disk space; the compatibility model, where each job class can be served only by a subset of servers; and redundancy models, where identical job copies are sent to multiple servers. Beyond its theoretical significance, the generalized switch has found applications in modeling input-queued switches in wireless networks \citep{mckeown2002achieving}, virtual machines in cloud computing \citep{maguluri2014heavy}, and multi-core jobs in datacenters \citep{harchol2022multiserver}, among others.

Designing and analyzing effective scheduling algorithms has always been a central topic for the generalized switch. A well-chosen scheduling policy can substantially improve key performance metrics (e.g. throughput, mean response time%
\footnote{A job's response time, also known as delay or sojourn time, is the amount of time between its arrival and its completion.})
with no additional resources. Many scheduling policies have been proposed and studied for generalized switch models, such as MaxWeight \citep{tassiulas1990stability,stolyar2004maxweight}, Randomized Timers \citep{ghaderi2016randomized}, and Markovian Service Rate (MSR) policies \citep{chen2025improving}, as well as special cases such as ServerFilling-SRPT/DivisorFilling-SRPT in MSJ systems \citep{grosof2022optimal}, Least-Redundancy-First (LRF) in redundancy systems \citep{anton2022scheduling}, multiserver SRPT in M/G/k systems \citep{grosof2018srpt}, etc. 

However, this extensive body of work overwhelmingly focuses on \emph{unknown-duration} settings, rather than settings where duration is known in advance. Existing  policies either (1) ignore individual job durations (e.g., MaxWeight, MSR policies, LRF), or (2) address known-duration scheduling only in specific subclasses of the generalized switch (e.g., ServerFilling-SRPT/DivisorFilling-SRPT in MSJ systems, multiserver SRPT in M/G/k systems). \emph{Known-duration} scheduling in the generalized switch is an open problem. Motivated by this gap, we propose and analyze a new policy called Smallest Equalizing Bucket (SEB), the first policy which provably achieves heavy-traffic optimal mean response time in the generalized switch (\Cref{thm:heavy-traffic_optimality}).
We now discuss challenges in achieving this result, and the intuition behind the SEB policy.

\subsection{Challenges and Policy Intuition}
When minimizing mean response time, the primary goal of a scheduling policy is to prioritize the jobs of smallest (remaining) duration. In a homogeneous job model like the M/G/k, straightforwardly prioritizing such jobs with the SRPT-$k$ policy is sufficient to achieve an asymptotically optimal mean response time \citep{grosof2018srpt}.
However, in the generalized switch, greedily serving small jobs may fail to use the full capacity of the system, or may leave us with unused resources that can only be utilized by large jobs. This occurs when there is a dramatic imbalance in the resource demands of the small jobs. For instance, in the compatibility scheduling setting, if all of the small jobs are the same class, which only half the servers can serve, there is no way to serve the small jobs with the entire capacity of the system.

We navigate this challenge by \emph{preventing imbalance from arising in the first place}.
Our SEB policy balances the set of small jobs in advance, and more generally balances all sets of jobs of similar durations (\cref{sec:policy}). In particular, SEB works as follows:

\* In place of the usual notion of job durations, we define, for each job in the system, a novel \emph{priority index}, which is a carefully chosen scaling of the job duration.
\* We divide possible job priority indices into disjoint intervals (buckets). Each job is placed into a bucket based on its original priority index.
\* Within each bucket, we find the ideal ``equalizing'' service option that maintains its balance.
\* We find the smallest bucket (in priority-index ordering) for which the ideal service option is available, and serve that bucket. We only move on to a larger bucket if the current bucket cannot fulfill its ideal service option because there are too few jobs in a smaller bucket.
\*/

\subsection{Paper Outline}
The rest of the paper is organized as follows:
\*\Cref{sec:prior_work} reviews prior work. 
\*\Cref{sec:model} formally defines the generalized switch model we consider. 
\* \Cref{sec:resource-pooled-lower-bound} describes the construction of a response time lower bounding system and priority indices of jobs and introduces other assumptions for the results.
\*\Cref{sec:policy} defines the SEB policy and discusses the intuition behind its design. 
\*\Cref{sec:results} states our main result, namely SEB's heavy-traffic optimality.
\*\Cref{sec:analysis_1} proves stability and state-space collapse (bucket balance) lemmas.
\*\Cref{sec:analysis_2} upper bounds mean response time and proves our optimality main result.
\*\Cref{sec:dynamic-SEB} introduces three dynamic SEB policies, all of which are inspired by SEB but overcome SEB's shortcomings in practice.
\*\Cref{sec:simulation} numerically evaluates dynamic SEB policies via simulations.
\*/

\section{Prior work}
\label{sec:prior_work}
We first review prior work on generalized switch scheduling (\cref{sec:prior-gs}), then
discuss prior work on special cases of the generalized switch (\cref{sec:prior-special-case-gs}). A comparison of existing scheduling policies and our policy can be found in \Cref{tab:optimality}.

\subsection{Generalized Switch Scheduling}
\label{sec:prior-gs}
Many scheduling policies focus on throughput optimality in generalized switch models. The most studied is the MaxWeight policy (e.g. \citep{tassiulas1990stability, stolyar2004maxweight}), which assigns each class a weight according the total number of jobs in that class, then serves the class with the highest weight. It is shown that MaxWeight achieves both throughput optimality and mean work optimality. 

Despite being throughput optimal, MaxWeight has high complexity, as a new schedule must be computed by solving a combinatorial optimization problem whenever the system state changes. As a result, other policies have been invented to reduce complexity while maintaining throughput optimality. Low-complexity MaxWeight variants have been shown to achieve mean-work-optimality in the same heavy traffic regime \citep{jhunjhunwala2022low}.
The Randomized Timers policy\footnote{We note that although Randomized Timers and MSR policies are designed for VM scheduling, they can be adapted for the generalized switch model.}  is a non-preemptive policy that adjusts the schedule according to exponential clocks to achieve throughput optimality \citep{ghaderi2016randomized, psychas2018randomized}.
The Markovian Service Rate (MSR) policy \citep{chen2025improving} first constructs a continuous-time Markov chain (CTMC) offline, then uses this CTMC to determine a schedule, resulting in a throughput-optimal policy. The schedules can be preemptive or non-preemptive and avoid solving complicated optimization problems online.

Mean response time results are known for MaxWeight and for MSR, and \citep{chen2025improving} prove that the two policies are constant-competitive.

In contrast to our work, none of these policies use duration information of individual jobs.

\subsubsection{Complete Resource Pooling (CRP) Condition}
Roughly speaking, the CRP condition for heavy-traffic trajectories refers to cases where a system converges to a high-dimensional face of the stability region, rather than a lower-dimensional edge between faces. Under the CRP condition \citep{harrison1999heavy}, MaxWeight exhibits state-space collapse to a single dimension,
maximizes throughput, and minimizes mean work in heavy traffic \citep{stolyar2004maxweight}. \citep{hurtado2022heavy} studied the behavior of MaxWeight when the CRP condition does not hold, demonstrating collapse to a higher-dimensional subspace and characterizing the heavy-traffic stationary behavior.

In this paper, we focus on the CRP setting: See \cref{sec:assumptions}.

\subsection{Scheduling in Special Cases of the Generalized Switch}
\label{sec:prior-special-case-gs}
Here we focus primarily on the MSJ model in  \Cref{sec:prior-MSJ} and the compatibility model in \Cref{sec:prior-compat}.

\begin{table}
\begin{center}
\resizebox{\textwidth}{!}{
\begin{tabular}{|c|c|c|c|c|}
\hline
\diagbox[innerleftsep=12pt]{Policies}{Results}    &  Throughput & Job durations & Mean Response Time & Model Generality \\
\hline
FCFS & Not optimal & Unknown & Analyzed, not optimal & MSJ and MRJ models\\
\hline
MaxWeight & Optimal & Unknown & Analyzed, not optimal & Generalized Switch\\
\hline
Randomized Timers & Optimal & Unknown & Not analyzed & Generalized Switch\\
\hline
\makecell{ServerFilling/ \\ DivisorFilling} & Optimal & Unknown & Analyzed, not optimal & \makecell{Restrictive MSJ Model}\\
\hline
Idle-Avoid $c\mu/m$ Rule &  Optimal & Unknown & \makecell{Optimal in many-server \\ limit over a finite horizon} & MSJ Model\\
\hline
\makecell{ServerFilling-SRPT/ \\ DivisorFilling-SRPT} & Optimal & Known & \makecell{Heavy-traffic optimal} & \makecell{Restrictive MSJ Model}\\
\hline
\makecell{MSR} & Optimal & Unknown & \makecell{Analyzed, not optimal} & \makecell{Generalized Switch}\\
\hline
\makecell{\textbf{Smallest Equalizing} \\ \textbf{Bucket (SEB)}} & \textbf{Optimal} & \textbf{Known} & \makecell{\textbf{Heavy-traffic optimal}} & \textbf{Generalized Switch}\\
\hline
\end{tabular}}
\end{center}
\caption{Comparison of optimality results for our paper and for prior work}
\label{tab:optimality}
\end{table}

\subsubsection{MSJ Scheduling}
\label{sec:prior-MSJ}
MSJ scheduling has recently become a topic of interest in the queueing theory community.
Despite some recent advances, theoretical results on MSJ scheduling remain limited \citep{harchol2022multiserver}.
We now give a brief overview of MSJ scheduling policies with theoretical guarantees.
In \cref{tab:optimality}, we compare our results in this paper to prior policies and results.

\textbf{First-Come-First-Served (FCFS):} FCFS with head-of-line-blocking is the most straightforward policy. However, FCFS is generally not throughput nor mean response time optimal due to blocking at the front of the queue. The stability region under FCFS has been characterized under restrictive assumptions (e.g. \citep{morozov2016stability, rumyantsev2017stability, afanaseva2020stability}).
Mean response time under FCFS is known exactly only in very specific settings \citep{kim1979m, brill1984queues, filippopoulos2007m}.
A general bound and heavy-traffic characterization of the mean response time under FCFS was established by \citet{grosof2023reset}. This general bound extends to the MRJ model as well.

\textbf{ServerFilling/DivisorFilling:} \citet{grosof2022wcfs} introduce a queueing framework called Work Conserving Finite Skip (WCFS) and propose the ServerFilling/DivisorFilling policies, which consider the minimal set of jobs necessary to fill all the servers. A critical assumption in \citep{grosof2022wcfs} is that the server needs of jobs must divide the number of servers, ensuring that it is possible to fill all of the servers whenever enough jobs are present. It is shown that ServerFilling/DivisorFilling is throughput optimal and the heavy-traffic mean response time is comparable to that in M/G/1/FCFS.

\textbf{ServerFilling/DivisorFilling-SRPT:} \citet{grosof2022optimal} propose the ServerFilling/DivisorFilling-SRPT policies, which prioritize jobs of shortest remaining duration. These are the first MSJ scheduling policies that are provably both throughput and heavy-traffic mean response time optimal. Similar to this paper, ServerFilling/DivisorFilling-SRPT uses individual job durations for scheduling. However, here we relax the restrictive assumption that the server needs of jobs must divide the number of servers. 

\textbf{Other policies:}
Other scheduling policies have been proposed: variants of Backfilling e.g. \citep{jones1999scheduling, wang2009application, carastan2019one}, Smallest Area First \citep{carastan2019one}, reinforcement learning based Shortest Job First \citep{guo2018optimal}, and the idle-avoid $c\mu/m$ rule \citep{zylchlinski2023managing}, among many others. Mean response time optimality results have been shown in the many-servers limit \citep{zylchlinski2023managing,hong2022sharp}, but heavy-traffic optimality remains open in general, see \cref{tab:optimality}.

\subsubsection{Compatibility Scheduling}
\label{sec:prior-compat}
In recent years, queues with compatibility between jobs and servers have been extensively studied. In these queues, a job can only enter service at a subset of servers as specified by a compatibility graph. 

On the scheduling side, most existing work in similar settings treats jobs as indistinguishable and focuses on load-balancing algorithms. Optimality results are established in the mean-field limit or many-server limit (e.g. \citep{tsitsiklis2013queueing,mukherjee2018asymptotically,weng2020optimal,rutten2023load}). \citep{anton2024efficient} focus on scheduling based on the compatibility structures.
We are aware of no prior study of scheduling known-duration jobs in the compatibility scheduling setting.

\section{Model}
\label{sec:model}

The generalized switch model we consider is defined as follows:
There are $n_c$ classes of jobs, and a set of service options $\mathcal{R} = \{\mathbf{r}\}$ to choose from.
A service option $\mathbf{r}$ specifies a number of jobs of each class which can be served at once.
We assume that no more than $n_s$ jobs may be served simultaneously for any service option.

At any given time, the scheduling policy $\pi$ selects any service option $\mathbf{r} \in \mathcal{R}$, and serves up to $r_i$ class-$i$ jobs, which can be any of the class-$i$ jobs in the system. If the current service option is $\mathbf{r}$, then the total remaining duration of jobs in class $i$ is decreasing with rate $r_i$\footnote{Note that this can be straightforwardly generalized to the case where different jobs are served at different rates. We use the current description for simplicity.}.

Jobs of class $i$ arrive according to a Poisson process with rate $\lambda_i$, with an overall arrival rate of $\boldsymbol{\lambda}$. Each job has a service duration sampled i.i.d. from some general class-specific distribution with random variable $D_i$.
The remaining service time and class of each job in the system is known to the scheduling policy at all times.
The scheduler may choose any service option from the list $\mathcal{R}$ at any moment in time. 
Jobs may be preempted and resumed with no overhead or loss of work.

Using the job duration and arrival rates, we define the system load vector $\boldsymbol{\rho}$ with class-$i$ load $\rho_i=\lambda_i\E[D_i]$. Then the stability region $\mathcal{S}$ of the system is the open interior of the convex hull formed from all service options:
$
    \mathcal{S} = \text{Interior}(\text{ConvexHull}(\mathcal{R}))
$. It has been shown that the system is stabilizable if and only if $\boldsymbol{\lambda} \in \mathcal{S}$ \citep{stolyar2004maxweight}.

A special subclass of generalized switch models that we will use for examples is the MSJ models. Here, the system has a fixed number of servers, and each job occupies multiple servers concurrently during service. Job classes are defined by the number of servers a job needs. A service option consists of a vector of numbers of jobs of each class in service, such that the total server need never exceeds the total number of servers. For example, consider an MSJ setting with $n_s=7$ servers, and where jobs can have server needs $1, 2,$ or $3$. One possible service option is $\mathbf{r}=[2, 1, 1]$, where two 1-server jobs, one $2$-server job, and one $3$-server job are served.

\section{Comparison with a Resource-Pooled System}
\label{sec:resource-pooled-lower-bound}
Our goal in this paper is to design a scheduling policy that minimizes mean response time in heavy traffic. To prove the optimality of our policy, we establish an upper bound on the mean response time of our policy and a lower bound on the mean response time of \emph{any} scheduling policy and show that the upper and lower bounds match in heavy traffic.

Our approach for establishing such a lower bound is to study the response time in a properly chosen resource-pooled single-server queue. We use the resource-pooled system to prove a universal lower bound for any scheduling policy in the original system (\Cref{sec:lower-bound}). 

In this section, we first introduce a family of resource-pooled systems, all of which can serve as mean response time lower bounds for the generalized switch, then we choose a tight lower bound that we will show converges to our upper bound in heavy traffic. Finally, we specify the class of generalized switch models we focus on in this paper.

\subsection{Lower Bounds from Resource-Pooled Systems}
\label{sec:lower-bound}

We lower bound mean response time in the generalized switch via a carefully selected resource-pooled single-server queue.

A resource-pooled single-server system is an M/G/1 with a related but distinct duration distribution to that of the original generalized switch, where every scheduling policy in the generalized switch can be mapped onto an equivalent scheduling policy in the resource-pooled system, resulting in the same response time distribution. Since this resource-pooled system has many more service options that are not realizable in the generalized switch model, the minimum possible mean response time in the resource-pooled system, given by the single-server SRPT policy, is a lower bound on the minimum possible mean response time in the original generalized switch.

To define this resource-pooled single-server queue, we create a mapping converting a job's duration in the original system to a duration in the resource-pooled system.
We prove the corresponding lower bound as \cref{thm:rp-lower-bound} below.

The key property of our duration mapping is that for any service option in the original system, the corresponding service rate in the resource-pooled system is no more than the capacity of the resource-pooled system, which we normalize to 1. 

\begin{figure}[h]
    \centering
    \begin{tikzpicture}[scale=0.7]
\tikzstyle{every node}=[font=\LARGE]
\draw  [fill=blue!50] (12.5,14.25) circle (0.5cm);
\draw  [fill=blue!50](12.5,13) circle (0.5cm);
\draw  [fill=blue!50](12.5,11.75) circle (0.5cm);
\draw  [fill=blue!50](12.5,10.5) circle (0.5cm);
\draw  [fill=blue!50](12.5,9.25) circle (0.5cm);
\draw  [fill=blue!50](12.5,8) circle (0.5cm);
\draw  (12.5,6.75) circle (0.5cm);

\draw  (1.5,11.25) -- (11.25,11.25);
\draw  (1.5,9.75) -- (11.25,9.75);
\draw  (11.25,11.25) -- (11.25,9.75);
\draw  (9.5,11.25) -- (9.5,9.75);
\draw  (7.75,11.25) -- (7.75,9.75);
\draw  (6,11.25) -- (6,9.75);
\draw  (4.25,11.25) -- (4.25,9.75);
\draw  (2.5,11.25) -- (2.5,9.75);

\draw [ rounded corners = 8.4] (11.75,14.75) rectangle (13.25,11.25);
\draw [ rounded corners = 8.4] (11.75,11) rectangle (13.25, 7.5);
\draw  [fill=blue!50](10.375,12.75) circle (0.5cm);
\draw  [fill=blue!50](10.375,11.5) circle (0.5cm);
\draw  [fill=blue!50](10.375,10.25) circle (0.5cm);
\draw [ rounded corners = 8.4] (9.625,13.25) rectangle (11.125,9.75);

\draw  [fill=green!50](8.625,10.25) circle (0.5cm);
\draw  [fill=green!50](8.625,11.5) circle (0.5cm);
\draw [ rounded corners = 8.4] (7.875,12) rectangle (9.375,9.75);

\draw  [fill=green!50](6.875,10.25) circle (0.5cm);
\draw  [fill=green!50](6.875,11.5) circle (0.5cm);
\draw [ rounded corners = 8.4] (6.125,12) rectangle (7.625,9.75);

\draw  [fill=blue!50](5.125,12.75) circle (0.5cm);
\draw  [fill=blue!50](5.125,11.5) circle (0.5cm);
\draw  [fill=blue!50](5.125,10.25) circle (0.5cm);
\draw [ rounded corners = 8.4] (4.375,13.25) rectangle (5.875,9.75);
\end{tikzpicture}
    \caption{A multiserver-job model, forming a special case of the generalized switch. Green jobs require 2 servers, and blue jobs require 3 servers, out of 7 total servers.}
    \label{fig:msj-example}
\end{figure}

We call a mapping ``valid" when all service options in the original system have mapped service rate less than or equal to 1 in the resource pooled system, which is the capacity of the resource pooled system. However, there might be many valid mappings. For example, consider an MSJ system with 7 servers and two job classes shown in \cref{fig:msj-example}: green class-1 jobs require 2 servers and blue class-2 jobs require 3 servers. Two examples of valid mappings are as follows: In mapping A, a class-1 job with duration $d_1$ is mapped to a duration of $2d_1/7$ in the resource pooled system, and a class-2 job with duration $d_2$ to a duration of $3d_2/7$ in the resource-pooled system. Since the service rate in the resource pooled system is equal to $\frac{1}{7}$ times the number of servers occupied in the original system and the original system only has 7 servers, this is never more than 1. In mapping B, a class-1 job with duration $d_1$ is mapped to a duration of $d_1/3$ in the resource-pooled system and a class-2 job with duration $d_2$ to a duration of $d_2/3$ in the resource-pooled system. Since the service rate in the resource pooled system is equal to $\frac{1}{3}$ times the number jobs in service in the original system and the original system can never serve more than $\lfloor \frac{7}{2} \rfloor = 3$ jobs, this is never more than 1. To define a resource-pooled system, we choose a mapping up front and fix it for all service options.

We can generalize and make explicit the concept of a mapping, and of a valid mapping:
\begin{definition}
    We define a \emph{conversion mapping vector}, $\mathbf{k}$, so that a job of class $i$ with a duration $d\sim D_i$ in the original system has duration $k_id$ in the resource-pooled system. We will refer to the resource-pooled system under conversion multiplier vector $\mathbf{k}$ as the $\mathbf{k}$-mapped resource-pooled system. Moreover, we define a conversion multiplier vector $\mathbf{k}$ to be \emph{valid} if for all $\mathbf{r}\in\mathcal{R}$, $\langle \mathbf{k}, \mathbf{r}\rangle \leq 1$.
\end{definition}

We now prove that the mean response time under any policy in the original system can be matched by an appropriately chosen policy in any valid $\mathbf{k}$-mapped resource-pooled system.
In \cref{sec:converging-bounds}, we discuss how to find a valid conversion mapping vector which gives a tight lower bound on the mean response time of the optimal policy.

\begin{theorem}
    \label{thm:rp-lower-bound}
Let $\mathbf{k}=(k_1,\ldots,k_n)$ be a \emph{valid} conversion multiplier vector. Then for any scheduling policy $\pi$ in the original system, there exists a scheduling policy $\pi'$ in the $\mathbf{k}$-mapped resource-pooled system such that the mean response time is identical in both systems: $\E[T^\pi]$ in the original system matches $\E[T^{\pi'-RP-\mathbf{k}}]$ in the $k$-mapped resource-pooled system.

As a result, for any scheduling policy $\pi$ in the original system, $\E[T^\pi] \ge \E[T^{SRPT-RP-\mathbf{k}}]$.
\end{theorem}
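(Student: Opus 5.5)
The plan is a sample-path coupling. Run the original system under $\pi$ and the $\mathbf{k}$-mapped resource-pooled system on the same probability space, with the same Poisson arrival times and classes. Each job $j$ of class $i$ with original duration $d_j$ gets pooled duration $k_i d_j$. Since $\lambda_i$ and $D_i$ are unchanged, the pooled system is exactly an M/G/1 whose job sizes have the mixture law of $k_I D_I$, with $I$ the class of an arriving job. This mixture is well defined, and the pooled system is a legitimate single-server queue of unit capacity.

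Next we define $\pi'$. At each time $t$, $\pi$ selects a service option $\mathbf{r}(t)\in\mathcal{R}$ and serves some set $S(t)$ of at most $r_i(t)$ class-$i$ jobs. If $\pi$ allows fractional service rates, we let $\sigma_j(t)\in[0,1]$ be the rate at which job $j$ is served, with $\sum_{j \text{ of class } i}\sigma_j(t)\le r_i(t)$. The policy $\pi'$ serves each job $j$ of class $i$ at rate $k_i\sigma_j(t)$ via processor sharing. The total rate used is
\[
\sum_j k_{c(j)}\sigma_j(t)\le \sum_i k_i r_i(t)=\langle \mathbf{k},\mathbf{r}(t)\rangle\le 1,
\]
where $c(j)$ denotes the class of job $j$ and the last inequality is validity of $\mathbf{k}$. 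So $\pi'$ never exceeds the unit capacity and is a feasible preemptive policy in the pooled M/G/1. Information is not an issue: $\pi'$ may simulate $\pi$, because it sees the same arrival times and classes, and it knows the remaining original sizes, since these equal the remaining pooled sizes divided by $k_i$.

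Then we check by induction over the event times that both systems stay in lockstep. The claimed invariant is that at every $t$, each job $j$ of class $i$ has pooled remaining size equal to $k_i$ times its original remaining size. This holds at arrival, and both sides decrease at rates $k_i\sigma_j(t)$ and $\sigma_j(t)$, so it is preserved. Hence every job completes at the same instant in both systems, so the response times agree pathwise and $\E[T^\pi]=\E[T^{\pi'\text{-RP-}\mathbf{k}}]$. One subtlety: if some $k_i=0$, class-$i$ jobs have size zero in the pooled system and would leave immediately. I would handle this by allowing $\pi'$ to hold such jobs until $\pi$ completes them, or else by noting that holding jobs can only increase $\pi'$'s mean response time, which suffices for the inequality. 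The cleanest route is to restrict to $k_i>0$, or to treat zero-size jobs as departing only when their coupled job does.

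Finally, for the lower bound, SRPT minimizes mean response time among all preemptive policies in an M/G/1, including policies that know the sizes and use processor sharing. We can also take it to minimize over policies that may idle or use extra external information, such as the coupled original system's state. Applying this to $\pi'$ gives $\E[T^\pi]=\E[T^{\pi'\text{-RP-}\mathbf{k}}]\ge \E[T^{\mathrm{SRPT}\text{-RP-}\mathbf{k}}]$. The main obstacle I expect is stating precisely the class of policies over which SRPT is optimal. It must contain $\pi'$, which uses class labels and may share the server fractionally. I would cite the sample-path optimality of SRPT: for a single server, SRPT minimizes the number of jobs in system at every time for every arrival sequence, among all work-rate-at-most-one policies. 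That result covers $\pi'$ directly, and Little's law then transfers the comparison to mean response time.
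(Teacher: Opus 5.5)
Your proposal is correct and follows the paper's proof essentially step for step. Both use the same arrival coupling with pooled size $k_i d$, the same processor-sharing policy $\pi'$ that serves each coupled job at rate $k_i$ (feasible because $\mathbf{k}$ is valid), pathwise equality of completion times, and then the optimality of single-server SRPT; your additional care about fractional service rates, the $k_i=0$ edge case, and using Schrage's sample-path optimality with Little's law fills in details the paper leaves implicit.
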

\begin{proof}{\textit{Proof.}}
We use a sample path coupling argument. 
At time $t=0$, let both systems be empty.
We couple the arrival sequence as follows:
Jobs arrive to both systems according to a global Poisson process with rate $\lambda$.
Whenever the Poisson process increments,
a job arrives to each system.
The duration $d$ and class $i$ of the arriving job in the original system are sampled i.i.d. from the joint class-duration distribution, $(d, i) \sim (D, I)$.
In the $k$-mapped resource-pooled system, the job which arrives at the same moment has duration
$k_i d$.
We call the jobs that arrive at the same point in time ``coupled jobs''.

We now specify the scheduling policy $\pi'$ in the $\mathbf{k}$-mapped resource-pooled system,
based on the policy $\pi$ in the original system.
Policy $\pi'$ is a generalized-processor-sharing policy,
meaning that at any given point in time, it serves several jobs at fractional rates,
where those rates sum to at most 1.

At each point in time $t$, for every job that policy $\pi$ is serving at time $t$ in the original system,
the policy $\pi'$ serves the coupled job in the $\mathbf{k}$-mapped resource-pooled system: the job which arrived at the same time.
If a job of class $i$ is served by $\pi$ in the original system,
then the coupled job is served by $\pi'$ at rate $k_i$.

Note that because vector $\mathbf{k}$ is valid, the total of these service rates is at at most 1,
under any service option $r$ in the original system.
Thus, $\pi'$ is a valid generalized-processor-sharing policy.

With policy $\pi'$ defined in this fashion,
if a class-$i$ job in the original system under policy $\pi$ has received
an amount $a$ of time in service,
then the coupled job in the $\mathbf{k}$-mapped resource-pooled system
must have received $k_ia$ service, being served at rate $k_i$ for $a$ time.
In particular, at the moment the job in the original system
completes, it must have been in service for $d$ duration,
and the coupled job must have received $k_i d$ service,
resulting in the coupled job completing at the same time.

As a result, under this coupling and under the policies $\pi$ and $\pi'$, every job experiences the same response time as its coupled job in the other system, which then implies that the mean response time is identical in both systems.
Moreover, because we know that in a single-server queue, Shortest-Remaining-Processing-Time (SRPT) minimizes the mean response time over all scheduling policies, including generalized-processor-sharing policies, we must have for any policy $\pi$ in the original generalized switch:    
\begin{align*}
\forall \pi, \forall \text{ valid } \mathbf{k}, \E[T^\pi] &\ge \E[T^{SRPT-RP-\mathbf{k}}].
\end{align*}
\hfill\Halmos
\end{proof}

\subsection{Tight Lower Bounds in Heavy Traffic}
\label{sec:converging-bounds}

\cref{thm:rp-lower-bound} proves that every valid conversion multiplier vector $\mathbf{k}$ leads to a lower bound on the mean response time in the original system, for all generalized switch scheduling policies. 
Now, we must find a vector which gives a tight lower bound on the optimal policy. As we will see, while there are many possible lower bounds, selecting a lower bound that helps us prove heavy-traffic optimality is nontrivial.

\subsubsection{Motivation and Example}
To obtain a tight lower bound,
as the original system approaches heavy traffic,
the $\mathbf{k}$-mapped resource-pooled system must also approach heavy traffic.

\begin{figure}
\label{fig: HT-trajectory}
    \begin{center}
        \begin{tikzpicture}[scale=1.5]

    \definecolor{lightmagenta}{rgb}{1.0, 0.8, 0.9}
    
    \draw[->] (0,0) -- (4,0) node[anchor=north] {Class 1};  
    \draw[->] (0,0) -- (0,3) node[anchor=east] {Class 2}; 

    \node[anchor=north east] at (0,0) {$\mathbf{0}$};


    
    
    \draw[->, thick, black] (0,0) -- (1.5,0.4);
    \draw[->, dashed, thick, red] (1.5, 0.4) -- (2.32,0.62);
    \node at (1.45, 0.55) {$\boldsymbol{\rho}$};  
    


    

    \node at (0.5, 1.4) {$\mathcal{S}$};

    \node at (3, -0.3) {$[3,0]$};

    \node at (2, 1.3) {$[2,1]$};

    \node at (0.4, 2.1) {$[0,2]$};

    \draw[thick, orange] (3, 0) -- (2, 1);
    \draw[thick] (2, 1) -- (0, 2);

    \filldraw (3, 0) circle (1.5pt);
    \filldraw (2,1) circle (1.5pt);
    \filldraw (0,2) circle (1.5pt);

    \filldraw[red] (2.368, 0.632) circle (1.5pt);

    \node[red] at (2.41, 0.8) {$\boldsymbol{\nu}$};
    

    \node[orange] at (2.9, 0.6) {$\mathcal{F}^{\boldsymbol{\nu}}$};
\end{tikzpicture}
        \caption{Heavy-traffic trajectory of the load vector $\boldsymbol{\rho}$ to an interior point $\boldsymbol{\nu}$ on the facet defined by $[3, 0]$ and $[2,1]$ in an MSJ system introduced in \Cref{fig:msj-example}}
    \end{center}
\end{figure}

There are many ways the original system can approach heavy traffic: one or a combination of different classes can contribute most of the load of the system. Different heavy-traffic regimes have different tight resource-pooled lower bounds. For example, consider the 7-server system shown in \cref{fig:msj-example} and the mappings A and B discussed in \cref{sec:resource-pooled-lower-bound}.
Suppose that the vast majority of the load consists of two server (class-1) jobs (\cref{fig: HT-trajectory}).
Then the system is typically running 3 class-1 jobs at a time.
Under mapping A, the A-mapped resource-pooled system only uses $6/7$ of the total capacity
in this common case, because $A_1 = 2/7$.
As a result, the A-mapped resource pooled does not approach the heavy traffic system when the original system approaches heavy traffic.
Thus, the lower bound of $E[T^{SRPT-RP-A}]$ is a poor lower bound in this heavy-traffic scenario -- it would be a better bound if more class-2 jobs were present.
However, under mapping B,
the B-mapped resource-pooled system uses its full capacity
when 3 class-1 jobs are in service ($\mathbf{r}=[3,0]$), because $B_1 = 1/3$.
Similarly, the B-mapped resource-pooled system also uses its full capacity
when 2 class-1 jobs and 1 class-2 job are in service ($\mathbf{r}=[2,1]$), because $2 B_1+ B_2 = 1$.
As a result, the B-mapped resource-pooled system approaches heavy traffic when the original system approaches heavy traffic in this scenario when most load comes from class-1 jobs. In fact, one can show that the B vector is the only conversion multiplier with this property in this scenario.
The B vector gives the lower bound that we use in this setting.
In the next section, we will discuss how we select,
for a given heavy-traffic trajectory,
a conversion multiplier vector $\mathbf{k}$
which achieves this tightness property.

\subsubsection{Generalization}
\label{sec:LB-generalization}

We now generalize over general sets of service options $\mathcal{R}$.

Recall that the stability region $\mathcal{S}$ is the interior of the convex hull of the service options $\mathcal{R}$. In particular, $\mathcal{S}$ is a polytope.
The boundary of the stability region can therefore be decomposed into $(n-1)$-dimensional facets. Each facet $\mathcal{F}$ is associated with a set  $\mathcal{R}^{\mathcal{F}}$ of service options that mark the extreme points of this facet.

For each facet $\mathcal{F}$, we define a corresponding conversion multiplier vector $\mathbf{k}^{\mathcal{F}}$.
We define $\mathbf{k}^{\mathcal{F}}$ as the unique solution to the system of equations $\langle \mathbf{k}^{\mathcal{F}}, \mathbf{r}\rangle = 1$ for all $\mathbf{r}\in\mathcal{R}^{\mathcal{F}}$: the uniqueness comes from the fact that an $n-1$ dimensional facet is defined by $n$ linearly independent extreme points. Note that $\mathbf{k}^\mathcal{F}$ is valid: if there were an $r$ such that $\langle\mathbf{k}^{\mathcal{F}}, \mathbf{r}\rangle > 1$, then $\mathcal{F}$ would not be on the convex hull.

For a chosen mapping vector $\mathbf{k}$, we define the \emph{$\mathbf{k}$-mapped priority index}
of a class-$i$ job with duration $d$ to be the duration $k_id$ of the matched job in the $\mathbf{k}$-mapped resource-pooled system defined in \cref{thm:rp-lower-bound}.
The $\mathbf{k}$-mapped priority index random variable for of class-$i$ jobs is $Z_i=k_iD_i$.
Similarly, for a load vector $\boldsymbol{\rho}$ where $\rho_i=\lambda_i\E[D_i]$, we define the $\mathbf{k}$-mapped load vector to be $\boldsymbol{\rho}^{\mathcal{F}}_i=\lambda_i\E[Z_i]$. 

Now we specify the heavy traffic regime we focus on in this paper. Let the load limit point $\boldsymbol{\nu}$ be a point on the surface of the stability region $\mathcal{S}$. We consider a sequence of systems, indexed by a stability gap $\varepsilon$, with load vectors $\boldsymbol{\rho}^{(\varepsilon)}=(1-\varepsilon)\boldsymbol{\nu}$.
In this sequence of systems, we hold the duration distribution $D_i$ constant,
and allow the arrival rates $\lambda_i^{(\varepsilon)}$ to scale linearly with $(1-\varepsilon)$.
We define the limiting facet $\mathcal{F}^{\boldsymbol{\nu}}$ to be the facet that contains the load limit point $\boldsymbol{\nu}$.
We will assume that $\boldsymbol{\nu}$ is located within the interior of a single facet, so there is a unique limiting facet $\mathcal{F}^{\boldsymbol{\nu}}$. This assumption is called the Complete Resource Pooling assumption, which we discuss further in \cref{sec:assumptions}. Facet $\mathcal{F}^{\boldsymbol{\nu}}$ in turn induces a conversion multiplier vector $\mathbf{k}^{\mathcal{F}^{\boldsymbol{\nu}}}$ and a set of service vectors $\mathcal{R}^{\mathcal{F}^{\boldsymbol{\nu}}}$.
We therefore define the $\mathbf{k}^{\mathcal{F}^{\boldsymbol{\nu}}}$-mapped priority index, and the $\mathbf{k}^{\mathcal{F}^{\boldsymbol{\nu}}}$-mapped load vector $\boldsymbol{\rho}^{\mathcal{F}^{\boldsymbol{\nu}}}$.
The $\mathbf{k}^{\mathcal{F}^{\boldsymbol{\nu}}}$-resource-pooled system is our tight resource-pooled system.
We define the total system load to be $\rho^{\mathcal{F}^{\boldsymbol{\nu}}}=\|\boldsymbol{\rho}^{\mathcal{F}^{\boldsymbol{\nu}}}\|_1$.
The system can be stabilized only if $\rho^{\mathcal{F}^{\boldsymbol{\nu}}}<1$, which is a necessary condition to $\boldsymbol{\rho} \in \mathcal{S}$, our standard stability condition for this system.

From now on we will work primarily with $\mathbf{k}^{\mathcal{F}^{\boldsymbol{\nu}}}$-mapped priority index (i.e. duration in the $\mathbf{k}^{\mathcal{F}^{\boldsymbol{\nu}}}$-mapped resource pooled system) unless specified otherwise. Because we will mostly work with the limiting facet with a prefixed $\boldsymbol{\nu}$,
we will subsequently replace $\mathcal{F}^{\boldsymbol{\nu}}$ with $\boldsymbol{\nu}$ in the superscripts to ease the notation.

\subsection{Setting Assumptions (CRIB)}
\label{sec:assumptions} 

In this section, we specify the class of generalized switch models we focus on in this paper. These systems of interest satisfy the following conditions, which we abbreviate as CRIB:

\* \emph{Complete Resource Pooling}. We have defined the family of systems approaching heavy traffic in \Cref{sec:converging-bounds}. We will work with the case where the load limit point $\boldsymbol{\nu}$ is in the interior of the limiting facet $\mathcal{F}^{\boldsymbol{\nu}}$. This is known as the Complete Resource Pooling (CRP) condition in literature.
\* \emph{Independence}. Over the joint class-duration distribution $(I, D)$, which induces a joint class-priority-index distribution $(I, Z)$, we assume independence between job class $I$ and job priority index $Z$ under the mapping vector $\mathbf{k}^{\boldsymbol{\nu}}$ corresponding to the limiting facet. 
\* \emph{Boundedness}. The duration distribution $D$ is bounded away from 0 and $\infty$. That is, the support of $D$ is contained in some interval $[d_{\min}, d_{\max}]$ where $0<d_{\min}<d_{\max}<\infty$. Note that as a result, the job priority index distribution $Z$ under any given facet is likewise bounded between some $z_{\min}$ and $z_{\max}$.
\*/

While the CRP and boundedness conditions are relatively standard in the literature, we now discuss the independence condition further.



The independence assumption implies that the priority index distribution for class $i$, $Z_i$, is the same across all classes. In the 7-server system shown in \Cref{fig:msj-example}, the facet with vertices $[2,1]$ and $[3,0]$ has mapping vector B, namely $(1/3, 1/3)$. Under this mapping vector, a job's priority index is proportional to its duration and it follows that the job priority index and class are independent if $D_1 \sim D_2$. On the other hand, the facet with vertices $[0,2]$ and $[2,1]$ has a mapping vector of $(1/4, 1/2)$. Under this mapping vector, $D_1 \sim 2D_2$ implies the independence of $Z$ and $I$.

Despite focusing on the CRIB setting, our result is still far more general than the prior state of the art: No previous paper has shown optimal mean response time under any generalized switch setting.
Moreover, the CRIB setting does not remove the key challenge of the setting, namely that we must simultaneously prioritize serving small jobs while keeping sufficient balance among the job classes to keep all servers well-utilized.

\subsection{Handling non-saturated classes}
\label{sec:non-saturated}

Some facets in the stability region can be parallel to an axis, causing the mapping vector to contain zeros. 
For example, in an MSJ system with 11 servers, where jobs can have server needs 2 (class 1) or 3 (class 2), the stability region $\mathcal{S}$ includes a facet with vertices $[1, 3]$ and $[0, 3]$. The corresponding mapping vector is $[0, 1/3]$. When approaching a point in the interior of this facet in heavy traffic, class-1 jobs are not saturated: The system can serve 1 class-1 job for free without reducing service to class-2 jobs. Correspondingly, the class-1 mapping vector entry is 0. This happens because the corresponding facet is parallel to an axis and indicates the corresponding class of jobs is not in heavy traffic.
As a result, the mean response times of such classes of jobs are negligible in heavy traffic.

For the purpose of our heavy-traffic results, we remove all non-saturated classes of jobs, and consider only systems in which all classes are saturated. We will henceforth assume that all classes are saturated on the limiting facet and refer to this as the saturated facet assumption. This does not reduce the generality of our result, because reintroducing such classes of jobs with an arbitrary stabilizing scheduling policy among the policies on the limiting facet results in no change to our heavy traffic results.

\section{Our Policy: Shortest Equalizing Bucket (SEB)}
\label{sec:policy}

In this section, we discuss the challenges of devising an optimal policy and
the behavior of existing policies (\cref{sec:challenges}), the intuition behind our SEB policy (\cref{sec:policy_intuition}).
We formally define the SEB policy (\cref{sec:def-SEB}),
prove basic properties of the SEB policy (\cref{sec:seb-basic}),
and define general notation for SEB (\cref{sec:SEB_notation}).

\subsection{What are the essential qualities of an optimal policy?}
\label{sec:challenges}
Given our analysis of the lower bounding system (\Cref{sec:resource-pooled-lower-bound}), an optimal policy must mimic resource-pooled SRPT in order to achieve heavy-traffic optimality. One can observe two necessary prerequisites for minimizing mean response time from resource-pooled SRPT.
\begin{enumerate}[(1)]
    \item \textbf{Mapped service rate is 1 as frequently as possible.} Before we analyze the mean response time of any policy in steady state, we must make sure that the policy stabilizes the system. Keeping the mapped service rate at 1 as frequently as possible ensures that the service capacity is fully utilized, which is key to stabilizing the system when the load approaches the total capacity.
    \item \textbf{Prioritize small jobs.} To minimize mean response time, ideally, we would always prioritize jobs from smallest to largest (remaining) priority index, along the lines of SRPT in the resource pooled system. Prioritizing smaller jobs over larger jobs whenever possible is key to achieving optimal mean response time.
\end{enumerate}

MaxWeight falls short because it focuses entirely on keeping all servers busy \citep{stolyar2004maxweight}.
SRPT-$k$, ServerFilling-SRPT, and the idle-avoid $c\mu/m$ rule
all focus on settings where keeping all servers busy is either easy or unnecessary, and focus entirely on prioritizing small jobs \citep{grosof2018srpt,grosof2022optimal,zylchlinski2023managing}.
Note however that there has been no analysis of policies
that manage to keep all servers busy as effectively as MaxWeight,
while simultaneously prioritizing jobs based on duration.
Our SEB policy achieves both goals.

\subsection{Intuition for our policy}
\label{sec:policy_intuition}
To simplify the discussion, we initially discuss a setting in which there are only two priority indices of jobs $z_1 < z_2$, namely ``small''~$z_1$ and ``large''~$z_2$. We explain how to generalize the ideas to general bounded priority index distributions towards the end of this subsection.
Throughout the section, and throughout the remainder of the paper, ``priority index'' refers to $\mathbf{k}^{\boldsymbol{\nu}}$-mapped priority index.

The key idea behind our policy is to balance both the small jobs and the large jobs separately. Balance refers to keeping the work vector of small jobs, $\mathbf{w}_{\text{small}}$, and work vector of large jobs, $\mathbf{w}_{\text{large}}$, roughly parallel to a given load vector ${\boldsymbol{\rho}^{\boldsymbol{\nu}}}$. Here, work of a class of jobs is roughly the total remaining duration of all jobs in the lower bounding system. We will define work more carefully in \Cref{sec:def-SEB}. We define balance in this manner for two reasons: (1) Because ${\boldsymbol{\rho}^{\boldsymbol{\nu}}}$ is not axis-parallel (see \cref{sec:non-saturated}), keeping the work vector aligned with ${\boldsymbol{\rho}^{\boldsymbol{\nu}}}$ prevents excessive depletion of work in certain classes. If too much work is drained from a specific class, capacity might be wasted. (2) if a bucket does not receive any service, arrivals provide a drift in the direction of ${\boldsymbol{\rho}^{\boldsymbol{\nu}}}$, so the bucket becomes more balanced.

We start by selecting a preferred small service option, and a preferred large service option. Both options are chosen from those on the limiting facet, and are chosen to move $\mathbf{w}_{\text{small}}$ and $\mathbf{w}_{\text{large}}$ towards ${\boldsymbol{\rho}^{\boldsymbol{\nu}}}$.
We serve small jobs using their preferred service option if possible. 
Otherwise, we serve large jobs using their preferred service option if possible.
Otherwise we idle the system.
By maintaining our notion of balance, we will prove that each preferred service option will always be available as long as there are more than a few jobs in the corresponding priority index category, which is key to our optimality proof.

Note that our goal is not to squeeze out every drop of performance. Our goal is merely to achieve heavy traffic optimality. We therefore do not focus too much on the scheduling decisions when a bucket is near empty, as this is rare in heavy traffic. We take a simple option, skipping over that bucket, for ease of theoretical analysis. In \Cref{sec:simulation}, we present SEB-inspired policies that show excellent empirical performance under both moderate and heavy traffic.

Our intuition for small and large jobs generalizes to more priority indices of jobs. For an arbitrary bounded priority index distribution, we divide the support of the distribution into disjoint intervals which we call ``buckets''. An arriving job falls into one of the buckets according to its \emph{original} priority index and stays in the same bucket for the entirety of its time in system. Our policy maintains balance for all buckets separately,
and gives smaller buckets (i.e. buckets corresponding to smaller priority indices)
higher priority if their preferred service option can be fulfilled.  

Finally, we must decide how to choose which jobs go in which buckets. Here, we draw inspiration from the dispatching setting \citep{grosof2019load}, where it has been found that geometric buckets, where the ratio of the smallest priority index and largest priority index in a bucket is set to be a properly chosen constant $c$, leads to heavy-traffic optimality. We show that such geometric buckets lead to heavy-traffic optimality in our setting as well.

\subsection{Defining Our Policy}
\label{sec:def-SEB}
Our Shortest Equalizing Bucket (SEB) policy is defined as follows:

Preprocessing: The scheduler finds the limiting duration-based facet $\mathcal{F}^{\boldsymbol{\nu}}$ that contains ${\boldsymbol{\rho}^{\boldsymbol{\nu}}}$ and solves for the mapping coefficients, $\mathbf{k}$, from the systems of equations $\inner{\mathbf{k}}{\mathbf{r}}=1$ for all $\mathbf{r}\in\mathcal{R}^{\boldsymbol{\nu}}$. After converting durations to priority indices, the scheduler then defines priority index buckets based on the interval $[z_{\min}, z_{\max}]$ containing the support of the priority index distribution. Bucket $i$ is defined as an interval $[b_{i-1}, b_i)$ so that $b_i / b_{i-1} = c$ (we set $b_0 = z_{\min}$), where
\[
    c = 1 + \frac{1}{1 + \log\left(\frac{1}{1-\rho^{\boldsymbol{\nu}}}\right)}.
\]
Let $n_b$ denote the total number of buckets.
The above constant-ratio definition for $b_i$ determines $b_0$ through $b_{n_b-1}$. The scheduler sets the last bucket to be $[b_{n_b-1}, b_{n_b}]$, where we define $b_{n_b} = z_{\max}$. When a job arrives, it is added to the bucket containing its original priority index and stays in the same bucket until it is completed.

We now define SEB's online behavior.
The scheduler iterates through the priority index buckets in increasing order.

\begin{enumerate}
    \item For each bucket $i$, the scheduler first computes the bucket work vector $\mathbf{w}^{(i)}$, obtained by summing remaining priority indices of jobs in bucket $i$ by class\footnote{Our definition of work can be understood as the total remaining duration of jobs in the lower bounding system. Note further that work is limiting facet-dependent.}. Then it finds the ideal service rate vector $\mathbf{r}^{(i)*}$ among all service rates on the priority index-based limiting facet $\mathcal{F}^{\boldsymbol{\nu}}$: $
    \mathbf{r}^{(i)*} = \argmax_{\mathbf{r}\in\mathcal{R}^{\boldsymbol{\nu}}} \inner{\mathbf{w}^{(i)}_{\perp{\boldsymbol{\rho}^{\boldsymbol{\nu}}}}}{\mathbf{r}}$.
    The scheduler then checks if the corresponding service option $\widetilde{\mathbf{r}}^{(i)*}$ can be fulfilled using jobs in the bucket.
    \item If service option $\widetilde{\mathbf{r}}^{(i)*}$ can be fulfilled using jobs in bucket $i$, the scheduler places such jobs in service, and the scheduler stops iterating through the buckets. The scheduler pick jobs among those in a given class within the bucket in FCFS order.
    \item Otherwise, the scheduler moves on to bucket $i+1$. If no bucket can fulfill its preferred service options, all servers idle.
\end{enumerate}

The scheduler updates the jobs in service if there is an arrival, a departure, or if the preferred service option changes. It's worth noting that the preferred service option $\widetilde{\mathbf{r}}^{(i)*}$ might change even when there are no arrivals or departures. This happens when, after some service, the work vector arrives at a point where multiple service options become equally preferable. In this case, the scheduler will rapidly alternate between those service options, resulting in a weighted-processor-sharing behavior. This emergent processor-sharing behavior is common in scheduling policies, with the notable example of Least Attained Service (Foreground-Background) \citep{harchol2013performance}.

\subsection{Basic property of SEB}
\label{sec:seb-basic}

Before we analyze how the system behaves under SEB, we establish an important property of the system in \Cref{prop:service_drift}: For any workload vector $\mathbf{w}$ that is not parallel to ${\boldsymbol{\rho}^{\boldsymbol{\nu}}}$, the preferred service option always incurs a non-vanishing drift towards the load vector ${\boldsymbol{\rho}^{\boldsymbol{\nu}}}$.

\begin{proposition}
\label{prop:service_drift}
For any $\mathbf{w}$ such that $\mathbf{w}_{\perp{\boldsymbol{\rho}^{\boldsymbol{\nu}}}}\neq\mathbf{0}$, there exists an $\varepsilon_0>0$, independent of $\mathbf{w}$, such that
$
\max_{\mathbf{r}\in\mathcal{R}^{\boldsymbol{\nu}}}\frac{\inner{\mathbf{w}_{\perp{\boldsymbol{\rho}^{\boldsymbol{\nu}}}}}{\mathbf{r}}}{\|\mathbf{w}_{\perp{\boldsymbol{\rho}^{\boldsymbol{\nu}}}}\|} \geq \varepsilon_0
$.
\end{proposition}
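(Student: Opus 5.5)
Because the ratio in \Cref{prop:service_drift} is invariant under positive scaling of $\mathbf{w}$, I would reduce the statement to a compactness argument on a unit sphere. The key structural fact is that ${\boldsymbol{\rho}^{\boldsymbol{\nu}}}$, or equivalently a positive multiple of the limit point $\boldsymbol{\nu}$, lies in the relative interior of the limiting facet $\mathcal{F}^{\boldsymbol{\nu}}$. This holds by the CRP assumption of \cref{sec:assumptions}.

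Set $\mathbf{u}=\mathbf{w}_{\perp{\boldsymbol{\rho}^{\boldsymbol{\nu}}}}/\|\mathbf{w}_{\perp{\boldsymbol{\rho}^{\boldsymbol{\nu}}}}\|$. This vector ranges over the set $U=\{\mathbf{u}:\|\mathbf{u}\|=1,\ \langle \mathbf{u},{\boldsymbol{\rho}^{\boldsymbol{\nu}}}\rangle=0\}$, which is compact. Define $g(\mathbf{u})=\max_{\mathbf{r}\in\mathcal{R}^{\boldsymbol{\nu}}}\langle \mathbf{u},\mathbf{r}\rangle$. It is a maximum of finitely many linear functions, hence continuous. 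It therefore suffices to show $g(\mathbf{u})>0$ for every $\mathbf{u}\in U$, and then to take $\varepsilon_0=\min_U g>0$. This $\varepsilon_0$ depends only on $\mathcal{R}^{\boldsymbol{\nu}}$ and ${\boldsymbol{\rho}^{\boldsymbol{\nu}}}$, not on $\mathbf{w}$.

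For positivity, I would use CRP to write $\boldsymbol{\nu}=\sum_{\mathbf{r}\in\mathcal{R}^{\boldsymbol{\nu}}}\alpha_{\mathbf{r}}\mathbf{r}$ with every $\alpha_{\mathbf{r}}>0$ and $\sum\alpha_{\mathbf{r}}=1$. Strict positivity of all the weights is exactly what the relative-interior condition gives, since the extreme points are affinely independent. Care is needed because the inner product is taken with $\mathbf{r}$ in original units while orthogonality is to ${\boldsymbol{\rho}^{\boldsymbol{\nu}}}$. I would therefore first check that the mapped load is a positive multiple of $\boldsymbol{\nu}$ in the coordinates in which the service vectors and $\mathbf{w}$ live, or else work in those coordinates throughout. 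Given this, ${\boldsymbol{\rho}^{\boldsymbol{\nu}}}=\beta\sum\alpha_{\mathbf{r}}\mathbf{r}$ with $\beta>0$, so $\sum\alpha_{\mathbf{r}}\langle\mathbf{u},\mathbf{r}\rangle=0$. Thus $g(\mathbf{u})\ge 0$, with equality only if $\langle \mathbf{u},\mathbf{r}\rangle=0$ for every $\mathbf{r}\in\mathcal{R}^{\boldsymbol{\nu}}$. The facet's $n$ extreme points are linearly independent (\cref{sec:LB-generalization}), so they span $\mathbb{R}^n$, forcing $\mathbf{u}=\mathbf{0}$. This contradicts $\|\mathbf{u}\|=1$.

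The main obstacle is the coordinate bookkeeping in the positivity step. The work vector is expressed in priority-index units, ${\boldsymbol{\rho}^{\boldsymbol{\nu}}}$ is the $\mathbf{k}$-mapped load, and $\mathbf{r}$ is a service-option vector. I must confirm that ${\boldsymbol{\rho}^{\boldsymbol{\nu}}}$ is a positive combination of the same vectors $\mathbf{r}$ that appear in the inner product. If the mapping introduces a diagonal rescaling by $\mathbf{k}$, I would instead use the rescaled vectors $\mathrm{diag}(\mathbf{k})\mathbf{r}$. These remain linearly independent because every $k_i>0$, by the saturated facet assumption of \cref{sec:non-saturated}, and the same argument goes through unchanged.
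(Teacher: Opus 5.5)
Your proof is correct and follows the paper's argument. The paper first shows $\max_{\mathbf{r}\in\mathcal{R}^{\boldsymbol{\nu}}}\langle \mathbf{x},\mathbf{r}\rangle>0$ for every nonzero $\mathbf{x}\perp{\boldsymbol{\rho}^{\boldsymbol{\nu}}}$, by writing ${\boldsymbol{\rho}^{\boldsymbol{\nu}}}$ as a strictly positive combination of the facet vectors and invoking full rank (\Cref{lemma:inner_product_rate_vectors}), and then gets the uniform $\varepsilon_0$ from compactness of the unit sphere in the orthogonal complement (phrased as a contradiction with a convergent subsequence rather than your direct minimum of a continuous function). Your switch to the rescaled vectors $\mathrm{diag}(\mathbf{k})\mathbf{r}$ is the right reading, since these are the rates at which priority-index work drains; and affine independence is not needed for strictly positive weights, because any relative-interior point of a facet is a strictly positive convex combination of all its vertices even when there are more than $n$ of them.
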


We make use of this proposition in \Cref{sec:bucket_SSC}. To prove \Cref{prop:service_drift}, we first prove the following lemma.

\begin{lemma}
\label{lemma:inner_product_rate_vectors}
For any $\mathbf{x}\neq\mathbf{0}$ such that $\inner{\mathbf{x}}{{\boldsymbol{\rho}^{\boldsymbol{\nu}}}} = 0$, we have
$
    \max_{\mathbf{r}\in\mathcal{R}^{\boldsymbol{\nu}}} \inner{\mathbf{x}}{\mathbf{r}} > 0
$.
\end{lemma}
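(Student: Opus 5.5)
The strategy is to express $\boldsymbol{\rho}^{\boldsymbol{\nu}}$ as a strictly positive convex combination of the facet's service options. Since $\mathbf{x}$ is orthogonal to $\boldsymbol{\rho}^{\boldsymbol{\nu}}$, the weighted average of $\langle \mathbf{x}, \mathbf{r}\rangle$ over these options is zero. Unless every term vanishes, some option must give a strictly positive inner product. The remaining work is to rule out the case where all the inner products vanish.

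\textbf{Step 1: positive convex representation.} Write $\boldsymbol{\rho}^{\boldsymbol{\nu}}$ in terms of the extreme points $\mathcal{R}^{\boldsymbol{\nu}}$ of the limiting facet $\mathcal{F}^{\boldsymbol{\nu}}$. The load limit point $\boldsymbol{\nu}$ lies in the relative interior of $\mathcal{F}^{\boldsymbol{\nu}}$ by the CRP assumption. The facet is an $(n-1)$-simplex on $n$ affinely independent extreme points, as used in \cref{sec:LB-generalization} to define $\mathbf{k}^{\boldsymbol{\nu}}$. Hence there are unique weights $\alpha_{\mathbf{r}}>0$ with $\sum_{\mathbf{r}}\alpha_{\mathbf{r}}=1$ and $\boldsymbol{\nu}=\sum_{\mathbf{r}\in\mathcal{R}^{\boldsymbol{\nu}}}\alpha_{\mathbf{r}}\mathbf{r}$. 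Strict positivity is exactly what relative-interior membership gives.

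I must check which vector the orthogonality refers to. Throughout the paper, $\boldsymbol{\rho}^{\boldsymbol{\nu}}$ denotes the load direction in the same coordinates in which service options act on work. I will use that $\boldsymbol{\rho}^{\boldsymbol{\nu}}$ is a positive multiple of a point in the relative interior of the facet, namely $(1-\varepsilon)\boldsymbol{\nu}$ up to the per-class scaling by $k_i$. If the scaling by $\mathbf{k}$ makes the coordinates differ, I will apply the same argument to $\mathbf{x}$ replaced by its componentwise rescaling. That rescaling preserves sign conditions because every $k_i>0$ under the saturated facet assumption (\cref{sec:non-saturated}).

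\textbf{Step 2: averaging.} We get $0=\langle \mathbf{x},\boldsymbol{\rho}^{\boldsymbol{\nu}}\rangle = C\sum_{\mathbf{r}}\alpha_{\mathbf{r}}\langle \mathbf{x},\mathbf{r}\rangle$ with $C>0$. So either some $\langle \mathbf{x},\mathbf{r}\rangle>0$, which proves the claim, or all $\langle \mathbf{x},\mathbf{r}\rangle=0$.

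\textbf{Step 3: excluding the degenerate case.} This is the main obstacle. If $\langle\mathbf{x},\mathbf{r}\rangle=0$ for every $\mathbf{r}\in\mathcal{R}^{\boldsymbol{\nu}}$, then $\mathbf{x}$ is orthogonal to $n$ vectors that I claim are linearly independent, forcing $\mathbf{x}=\mathbf{0}$, a contradiction. Linear (not merely affine) independence of the extreme points follows from the facet not passing through the origin. Indeed, $\langle \mathbf{k}^{\boldsymbol{\nu}},\mathbf{r}\rangle=1$ for all $\mathbf{r}$ on the facet, so any linear relation $\sum\beta_{\mathbf{r}}\mathbf{r}=0$ gives $\sum\beta_{\mathbf{r}}=0$ after pairing with $\mathbf{k}^{\boldsymbol{\nu}}$, and affine independence then forces all $\beta_{\mathbf{r}}=0$. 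This is also the uniqueness fact invoked in \cref{sec:LB-generalization}.

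If the facet has more than $n$ vertices, the same argument still works. The vertices of $\mathcal{F}^{\boldsymbol{\nu}}$ span $\mathbb{R}^n$, since their affine hull is a hyperplane not through $\mathbf{0}$. A relative-interior point admits a representation with all vertex weights strictly positive, for example by averaging a positive combination of all vertices with the given one. Steps 2 and 3 then go through unchanged.
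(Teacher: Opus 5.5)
Your core argument is the paper's: write $\boldsymbol{\rho}^{\boldsymbol{\nu}}$ as a strictly positive combination of the facet's rate vectors. Orthogonality then forces either a strictly positive term or all terms zero, and full rank rules out the second case. Your Step 3, which derives linear independence from affine independence plus $\langle\mathbf{k}^{\boldsymbol{\nu}},\mathbf{r}\rangle=1$, fills in a point the paper only asserts. So does your extension to facets with more than $n$ vertices, which do occur (e.g.\ the six-vertex facet in the three-class compatibility example). One small imprecision there: to get strictly positive weights, average the barycenter $\mathbf{q}$ with a representation of $\boldsymbol{\nu}+t(\boldsymbol{\nu}-\mathbf{q})$ for small $t>0$, not with a representation of $\boldsymbol{\nu}$ itself.

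The step that is wrong is the coordinate patch in Step 1. Suppose $\boldsymbol{\rho}^{\boldsymbol{\nu}}=(1-\varepsilon)\,\mathbf{k}\odot\boldsymbol{\nu}$ while the $\mathbf{r}$ are unmapped service options. Setting $\mathbf{y}=\mathbf{k}\odot\mathbf{x}$ and running your argument proves $\max_{\mathbf{r}}\langle\mathbf{y},\mathbf{r}\rangle=\max_{\mathbf{r}}\langle\mathbf{x},\mathbf{k}\odot\mathbf{r}\rangle>0$, which is not the claim. A positive diagonal rescaling preserves the signs of coordinates, not of inner products. In fact the mixed-coordinate statement is false. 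Take $\mathcal{R}=\{\mathbf{0},(1,0),(1,1),(0,3)\}$, whose facet through $(1,1)$ and $(0,3)$ has $\mathbf{k}^{\boldsymbol{\nu}}=(2/3,1/3)$, and let $\boldsymbol{\nu}=0.9(1,1)+0.1(0,3)=(0.9,1.2)$. Then $\mathbf{k}\odot\boldsymbol{\nu}=(0.6,0.4)$, and $\mathbf{x}=(0.4,-0.6)$ is orthogonal to it. Yet $\langle\mathbf{x},(1,1)\rangle=-0.2$ and $\langle\mathbf{x},(0,3)\rangle=-1.8$.

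The correct fix is to push the representation through the invertible linear map $\mathrm{diag}(\mathbf{k}^{\boldsymbol{\nu}})$, giving $\mathbf{k}\odot\boldsymbol{\nu}=\sum_{\mathbf{r}}\alpha_{\mathbf{r}}\,(\mathbf{k}\odot\mathbf{r})$. Thus $\boldsymbol{\rho}^{\boldsymbol{\nu}}$ is a strictly positive combination of the \emph{mapped} rate vectors $\mathbf{k}\odot\mathbf{r}$, the ``service rates on the priority index-based facet'' that act on $\mathbf{k}$-mapped work in SEB. Steps 2--3 go through for these vectors, because $\mathrm{diag}(\mathbf{k}^{\boldsymbol{\nu}})$ preserves linear independence when all $k_i>0$; in the example, $\langle\mathbf{x},(2/3,1/3)\rangle=1/15>0$. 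So the lemma has to be read with $\boldsymbol{\rho}^{\boldsymbol{\nu}}$ and $\mathcal{R}^{\boldsymbol{\nu}}$ in the same coordinates. The paper's proof also assumes this tacitly when it places $\boldsymbol{\rho}^{\boldsymbol{\nu}}$ in the convex hull of $\mathcal{R}^{\boldsymbol{\nu}}$.
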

\begin{proof}{\textit{Proof.~}}
We first note that the CRP assumption (\Cref{sec:assumptions}) and the saturated facet assumption (\Cref{sec:non-saturated}) imply that the service vectors on the limiting facet $\mathcal{F}^{\boldsymbol{\nu}}$, $\mathcal{R}^{\boldsymbol{\nu}}=\{\mathbf{r}_1,\ldots,\mathbf{r}_n\}$, are of full rank. Consider the convex hull of $\mathcal{R}^{\boldsymbol{\nu}}$. Since ${\boldsymbol{\rho}^{\boldsymbol{\nu}}}$ is in the interior of this convex hull, there exist strictly positive scalars $\alpha_1,\ldots,\alpha_n$ such that ${\boldsymbol{\rho}^{\boldsymbol{\nu}}}=\sum_{i=1}^n\alpha_ir_i$ (see Exercise 3.1 in \citep{brondsted2012introduction}). Since $\inner{\mathbf{x}}{{\boldsymbol{\rho}^{\boldsymbol{\nu}}}} = 0$, for any nonzero $\mathbf{x}$, either there must exist $\mathbf{r}\in\mathcal{R}^{\boldsymbol{\nu}}$ such that $\inner{\mathbf{x}}{\mathbf{r}}>0$, or $\inner{\mathbf{x}}{r_i}=0$ for all $i$. The latter implies that $\mathbf{x}$ is linearly independent of all $r_i$, which is impossible since $\mathcal{R}^{\boldsymbol{\nu}}$ is of full rank.
\hfill\Halmos
\end{proof}

With \Cref{lemma:inner_product_rate_vectors} in hand, we are read to prove \Cref{prop:service_drift}.

\begin{proof}{\textit{Proof of \Cref{prop:service_drift}.~}}
If the claim is false, there exists a sequence $\{\mathbf{w}^{[n]}\}_{n=1}^\infty$ such that 
\[\lim_{n\to\infty}\,\max_{\mathbf{r}\in\mathcal{R}^{\mathcal{F}}}\frac{\inner{\mathbf{w}^{[n]}_{\perp{\boldsymbol{\rho}^{\boldsymbol{\nu}}}}}{\mathbf{r}}}{\|\mathbf{w}^{[n]}_{\perp{\boldsymbol{\rho}^{\boldsymbol{\nu}}}}\|}=0.\]
Since the set $\{\mathbf{x}:\|\mathbf{x}\|=1\}\cap\{\mathbf{x}:\inner{\mathbf{x}}{{\boldsymbol{\rho}^{\boldsymbol{\nu}}}}=0\}$ is compact, there exists a limit point $\mathbf{y}$ of the sequence $\{\frac{\mathbf{w}^{[n]}}{\|\mathbf{w}^{[n]}\|}\}_1^\infty$ in the compact set. It follows that $\max_{\mathbf{r}\in\mathcal{R}^{\mathcal{F}}}\inner{\mathbf{y}}{\mathbf{r}}=0$, contradicting \Cref{lemma:inner_product_rate_vectors}.
\hfill\Halmos
\end{proof}

\subsection{Further Notation}
\label{sec:SEB_notation}

Having redefined the concept of job priority index for our system (\Cref{sec:LB-generalization}), we now redefine idleness for our system, a notion essential to analyze the efficiency of any scheduling algorithm.

For any chosen priority index mapping vector $\mathbf{k}$, we define the $\mathbf{k}$-mapped inefficiency at time $t$, $\I(t)$, as $\I(t) = 1 - \inner{\mathbf{k}}{\mathbf{r}(t)}$ for any valid service option $\mathbf{r}$ chosen at time $t$ in the stability region. We also define similarly, at time $t$, the $x$-relevant inefficiency $\I_{\leq x}(t)$ for a $\mathbf{k}$-mapped priority index $x$ as 1 minus the $\mathbf{k}$-mapped service rate devoted to jobs with remaining priority indices less than $x$ at time $t$. For the rest of the paper, we will mostly be working with the stationary versions $\I$ and $\I_{\leq x}$, which depend on the scheduling policy $\pi$.

Under SEB, for $i=1,\ldots, n_b$, we define the $i$-relevant inefficiency of the first $i$ buckets, $\I_{\leq i}$, as $\I_{\leq i} = \I_{\leq b_i}$, where $b_i$ is the priority index upper bound of the $i$-th bucket, as defined in \Cref{sec:def-SEB}, and $\I_{\leq b_i}$ is the $b_i$-relevant inefficiency, as defined in \Cref{sec:LB-generalization}.
We define the total work in the first $i$ buckets as $W_{\leq i}$. We define the total load into the first $i$ buckets as $\rho^{\boldsymbol{\nu}}_{\leq i}$. We define $W_{\leq i}^\mgone$ as the work in the $\mathbf{k}^{\boldsymbol{\nu}}$-mapped resource-pooled M/G/1 queue with job priority index distribution equal to $[Z \mid Z \le b_i]$ and load $\rho^{\boldsymbol{\nu}}_{\leq i}$. These definitions will be used when we use the Work Decomposition Law (\Cref{prop:WDL}) to study the response time of a job in the generalized switch.

When referring to bucket-specific quantities, we will use superscripts for the bucket number and the subscripts for the class number. For instance, $w_j^{(i)}$ denotes the work of class $j$ jobs in bucket $i$ and $S^{(i)}$ denotes the priority indices of jobs falling into bucket $i$.

\section{Main Results and Roadmap}
\label{sec:results}

Our main result in this paper is to prove the heavy-traffic optimality of our Smallest Equalizing Bucket (SEB) policy under the CRIB assumption (\Cref{sec:assumptions}), the first heavy-traffic optimality result in the generalized switch setting. We do so by comparing against the $\mathbf{k}^{\boldsymbol{\nu}}$-mapped resource-pooled M/G/1 system, with durations equal to the priority indices of the jobs, as defined in \cref{sec:LB-generalization}.

We prove that in the heavy traffic limit, the mean response time of SEB converges to that of the resource-pooled Shortest Remaining Processing Time (SRPT-1) policy, and that the SRPT-1 policy is a lower bound on the optimal policy in this setting:

\restatably\begin{theorem}
\label{thm:heavy-traffic_optimality}
The SEB policy is heavy-traffic optimal, under the assumptions in \cref{sec:assumptions,sec:non-saturated}:
For any $\boldsymbol{\nu}$ on the capacity region
that is in the interior of a facet, our SEB policy converges to optimal as
the load vector ${\boldsymbol{\rho}}$ converges to $\boldsymbol{\nu}$.
\[
    \lim_{{\boldsymbol{\rho}}\to\boldsymbol{\nu}} \frac{\E[T^\SEB]}{\E[T^\SRPT]}=\lim_{{\boldsymbol{\rho}}\to\boldsymbol{\nu}} \frac{\E[T^\SEB]}{\E[T^\OPT]} = 1,
\]
where $T^{\SRPT}$ is the response time under resource-pooled SRPT.
\end{theorem}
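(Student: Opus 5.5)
The lower bound is already in hand: by \Cref{thm:rp-lower-bound} applied to the valid vector $\mathbf{k}^{\boldsymbol{\nu}}$, we have $\E[T^\OPT]\ge \E[T^\SRPT]$. Hence $\E[T^\SEB]/\E[T^\OPT]\le \E[T^\SEB]/\E[T^\SRPT]$, and it suffices to show
\[
\limsup_{\boldsymbol{\rho}\to\boldsymbol{\nu}} \frac{\E[T^\SEB]}{\E[T^\SRPT]}\le 1 .
\]
Recall that along the trajectory $\boldsymbol{\rho}^{(\varepsilon)}=(1-\varepsilon)\boldsymbol{\nu}$ we have $\rho^{\boldsymbol{\nu}}\to 1$. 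Under bounded durations, resource-pooled SRPT satisfies $\E[T^\SRPT]=\Theta\bigl(\frac{1}{1-\rho^{\boldsymbol\nu}}\bigr)$. So the target is an upper bound of the form $\E[T^\SEB]\le \E[T^\SRPT]+o\bigl(\frac{1}{1-\rho^{\boldsymbol\nu}}\bigr)$.

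I would organize the upper bound in three steps.

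\textbf{Step 1: stability.} Each nonempty bucket whose work is large enough can fulfill its preferred option on $\mathcal{F}^{\boldsymbol\nu}$. Every such option has mapped rate $\langle \mathbf{k},\mathbf{r}\rangle=1$. A Lyapunov argument on total mapped work then shows positive recurrence.

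\textbf{Step 2: per-bucket state-space collapse.} For each bucket $i$, I would use the Lyapunov function $\|\mathbf{w}^{(i)}_{\perp\boldsymbol\rho^{\boldsymbol\nu}}\|$. Arrivals into bucket $i$ have mean direction parallel to $\boldsymbol\rho^{\boldsymbol\nu}$; this is exactly where the Independence assumption is used, since the bucket load vector is then proportional to $\boldsymbol\rho^{\boldsymbol\nu}$. When bucket $i$ is served, \Cref{prop:service_drift} gives a drift of at least $\varepsilon_0$ toward the ray. When it is not served, arrivals do not push it off the ray on average. Standard drift bounds (Hajek/Bertsimas-type) then give $\E\|\mathbf{w}^{(i)}_{\perp}\|=O(1)$ uniformly in $\varepsilon$.

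\textbf{Step 3: bounded relevant inefficiency.} Step 2 implies that whenever bucket $i$ is balanced and holds more than a constant $C$ amount of work, its preferred option is feasible. The option is feasible once each class in the bucket has at least $n_s$ jobs, and boundedness converts work into job counts. Consequently $\I_{\le i}>0$ only when the first $i$ buckets hold $O(1)$ work. Combining this with Step 2 yields $\E[\I_{\le i}W_{\le i}]=O(1)$, with constants depending on $n_b$ only polynomially.

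Next I would apply the Work Decomposition Law (\Cref{prop:WDL}) to the system restricted to buckets $\le i$:
\[
\E[W_{\le i}]=\E[W^{\mgone}_{\le i}]+\frac{\E[\I_{\le i}W_{\le i}]}{1-\rho_{\le i}^{\boldsymbol\nu}} .
\]
Consider a tagged job of size $z$ in bucket $i$. Its response time is bounded by three contributions:
\begin{itemize}
\item the relevant work ahead of it, namely $W_{\le i}$ at arrival;
\item new arrivals into buckets $\le i$ during its stay;
\item time spent waiting while its own bucket is skipped.
\end{itemize}
Following the busy-period argument in \citet{grosof2019load}, this gives
\[
\E[T(z)]\lesssim \frac{\E[W_{\le i}]}{1-\rho^{\boldsymbol\nu}_{\le i}}+\frac{z}{1-\rho^{\boldsymbol\nu}_{\le i}} .
\]
The M/G/1 term behaves like SRPT, except that jobs within the same bucket are not ordered by size. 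The geometric bucket ratio $c\to1$ controls this loss: replacing $b_i$ by $z$ costs a factor of at most $c$. Meanwhile the additive $O(1)$ inefficiency terms, summed over $n_b=O\bigl(\log(z_{\max}/z_{\min})/\log c\bigr)=O\bigl(\log\frac{1}{1-\rho}\bigr)$ buckets, contribute $O\bigl(\log^2\frac{1}{1-\rho}\bigr)\cdot O(1)$ terms weighted by $1/(1-\rho_{\le i})$. These are $o\bigl(\frac{1}{1-\rho}\bigr)$ relative to SRPT, exactly as in the dispatching analysis.

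The main obstacle is Step 2/3, specifically keeping the collapse constants uniform in $\varepsilon$ and in the number of buckets. The difficulty is that a bucket is served only intermittently, and jobs migrate in remaining size but not between buckets. I would handle this by writing the drift of $\|\mathbf{w}^{(i)}_\perp\|$ separately in "served" and "not served" periods. I would also use that the orthogonal component grows only through arrival noise, whose second moment is bounded by boundedness of $Z$, to obtain an $O(1)$ bound on $\E\|\mathbf{w}^{(i)}_\perp\|$ independent of $i$ and $\varepsilon$.
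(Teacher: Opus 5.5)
Your architecture matches the paper's: the resource-pooled lower bound, work decomposition on bucket prefixes, balance implying eligibility, and geometric buckets with $c\to1$ and $n_b=\Theta(\log\frac{1}{1-\rho^{\boldsymbol{\nu}}})$. But Step 2 fails as proposed, and Steps 1 and 3 inherit the problem. Write $\mathbf{w}^{(i)}_\perp$ and $\mathbf{w}^{(i)}_\parallel$ for the components orthogonal and parallel to $\boldsymbol{\rho}^{\boldsymbol{\nu}}$.

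\textbf{Step 2: the ray is the wrong target.} When bucket $i$ is not served, arrivals have mean parallel to $\boldsymbol{\rho}^{\boldsymbol{\nu}}$, so $\mathbf{w}^{(i)}_\perp$ is a zero-mean compound-Poisson walk. By convexity of the norm, $\|\mathbf{w}^{(i)}_\perp\|_2$ then has nonnegative drift, of order $\lambda_i\E\|\mathbf{Z}_\perp\|_2^2/\|\mathbf{w}^{(i)}_\perp\|_2$. Bucket $i$ is unserved whenever a lower bucket is eligible or it is itself ineligible, and this happens in states with arbitrarily large $\|\mathbf{w}^{(i)}_\perp\|_2$. So the hypothesis of any Hajek/Bertsimas-type bound (negative drift outside a compact set, in every state) fails. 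Nothing makes the conclusion uniform in $\varepsilon$ either. Over an unserved stretch of length $L$, $\|\mathbf{w}^{(i)}_\perp\|_2$ grows like $\sqrt{L}$. For high buckets such stretches are busy periods of the lower buckets, which lengthen as $\rho^{\boldsymbol{\nu}}\to1$. Splitting into served and unserved periods does not cure this.

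The missing idea is to collapse to a cone rather than the ray. The paper uses $H_i=(\|\mathbf{w}^{(i)}_\perp\|_2-\tan\varphi\,\|\mathbf{w}^{(i)}_\parallel\|_2)^+$.
\begin{itemize}
\item Arrivals grow $\|\mathbf{w}^{(i)}_\parallel\|_2$ linearly, so their drift on $H_i$ is at most $-\|{\boldsymbol{\rho}^{\boldsymbol{\nu}}}^{(i)}\|_2\tan\varphi+O(1/H_i)$, whether or not the bucket is served.
\item Service contributes nonpositive drift once $\varphi$ is small relative to $\varepsilon_0$; this is the second condition on $\varphi$.
\item \Cref{thm:SSC} then yields the exponential-moment bound of \Cref{thm:bucket_SSC}, uniform in $\varepsilon$.
\item The cone still suffices for eligibility, because inside $\mathcal{C}$ every coordinate is at least a fixed fraction of $\|\mathbf{w}^{(i)}\|_1$ (\Cref{lemma:bucket_work_lower_bound}).
\end{itemize}

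\textbf{Step 1.} For the same reason, total work alone is not a valid Lyapunov function. A large but unbalanced workload can leave every bucket ineligible and the system idle. You need a balance term such as $\sum_i H(\mathbf{w}^{(i)})^2$ added to the total work.

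\textbf{Step 3: the waste bound is too weak.} The bound $\E[\I_{\le i}W_{\le i}]=O(1)$ does not suffice. After dividing by $1-\rho^{\boldsymbol{\nu}}_{\le i}$ in \Cref{prop:WDL}, it gives excess work of the same order as $\E[W^{\mgone}_{\le i}]$. For the full prefix, which enters through the $\E[W]/b_{n_b}$ term of \Cref{lemma:WINE_for_wait_time}, this is $\Theta(\frac{1}{1-\rho^{\boldsymbol{\nu}}})$, so the ratio would not tend to $1$. Your closing count of $O(\log^2)$ terms of size $O(1)$, weighted by $1/(1-\rho^{\boldsymbol{\nu}}_{\le i})$, is not $o(\frac{1}{1-\rho^{\boldsymbol{\nu}}})$. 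You need waste of order $O\bigl((1-\rho^{\boldsymbol{\nu}}_{\le i})\cdot\mathrm{polylog}\bigr)$.

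Also, ``$\I_{\le i}>0$ only when there is $O(1)$ work'' holds only with high probability, not deterministically. The paper splits at a threshold $B_{ij}\approx\frac{1}{\tau\eta_j}\log\frac{1}{1-\rho^{\boldsymbol{\nu}}_{\le i}}$:
\begin{itemize}
\item Below the threshold, waste is at most $B_{ij}\E[\I_{\le i}]=B_{ij}(1-\rho^{\boldsymbol{\nu}}_{\le i})$.
\item Above it, ineligibility forces $H_j\ge\tau\|\mathbf{w}^{(j)}\|_1-n_sb_j$ (\Cref{lem:H_lower_bound}). The exponential tail then makes the contribution $O(1-\rho^{\boldsymbol{\nu}}_{\le i})$ (\Cref{thm:waste_bound}).
\end{itemize}
This genuinely needs exponential tails; a first-moment bound on $\|\mathbf{w}^{(i)}_\perp\|_2$ would not do.

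\textbf{The tagged-job bound.} Your bound $\frac{\E[W_{\le i}]}{1-\rho^{\boldsymbol{\nu}}_{\le i}}$ is of order $(1-\rho^{\boldsymbol{\nu}}_{\le i})^{-2}$. For the top bucket, for example with an atom at $z_{\max}$, this exceeds $\E[T^\SRPT]$ by a factor $\frac{1}{1-\rho^{\boldsymbol{\nu}}}$. Even PSJF-1 has only strictly higher-priority load in that denominator. Under SEB, same-bucket arrivals of other classes also make a per-job analysis delicate. The paper avoids tagged-job analysis entirely:
\begin{itemize}
\item WINE (\Cref{lemma:WINE_for_wait_time}) converts mean waiting time directly into $\frac{1}{\lambda}\sum_i\frac{c-1}{b_0c^i}\E[W_{\le i}]+\frac{1}{\lambda}\frac{\E[W]}{b_{n_b}}$.
\item The same identity applied to PSJF-1 bounds the $\mgone$ part by $c\,\E[T^\PSJF]$.
\item Residence time is at most $n_bn_sn_c/\lambda$ by Little's law.
\end{itemize}
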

We discuss our proof structure in \cref{sec:roadmap},
and prove the theorem in \cref{sec:proof-optimality}.

Our optimality proof relies critically on our heavy traffic characterization of the mean response of the SEB policy:

\restatably\begin{theorem}
\label{thm:response_time_bound_PSJF}
Under SEB, in the heavy traffic limit and under the assumptions in \cref{sec:assumptions}, the
mean response time has the following asymptotic behavior:
\[
    \label{eq:MSJ_RT_order_bound}
    \E[T^\SEB] \leq c \cdot \E\left[T^{\PSJF}\right] +\Theta\left( \log^2\left(\frac{1}{1-\rho^{\boldsymbol{\nu}}}\right) \right)
\]
as ${\boldsymbol{\rho}}\to\boldsymbol{\nu}$, where $T^{\PSJF}$ is the response time under resource-pooled Preemptive-Shortest-Job-First (PSJF-1),
and where $c$ is the bucket width multiplier defined in \Cref{sec:policy}.
\end{theorem}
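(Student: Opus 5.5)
We will bound the response time of a tagged job in bucket $i$ with original priority index $z\in[b_{i-1},b_i)$, following the tagged-job analyses used for SRPT-$k$ and ServerFilling-SRPT. Write $T(z)=T_{\mathrm{wait}}(z)+T_{\mathrm{res}}(z)$. Here $T_{\mathrm{wait}}$ is the time until the tagged job first enters service, and $T_{\mathrm{res}}$ is the time from then until completion.

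\textbf{Residence time.} Once bucket $i$ is being served with its preferred option, the tagged job is served at rate $k_j$ whenever it is among the jobs selected (FCFS within class in the bucket). It can be preempted only by new arrivals to buckets $1,\dots,i$. So $T_{\mathrm{res}}(z)$ is bounded by a busy period started by the tagged job's remaining work $z\le b_i$, with arrivals restricted to priority index below $b_i\le c z$. There is also an additive term from time the bucket spends at the balance boundary. This gives $\E[T_{\mathrm{res}}(z)]\lesssim \frac{cz}{1-\rho^{\boldsymbol{\nu}}_{\le i}}$ plus a constant. Integrating over $z$ yields at most $c$ times the residence part of PSJF-1, since PSJF-1's residence term is $z/(1-\rho_{<z})$ and $\rho_{\le i}\le\rho_{<cz}$. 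The mismatch between the two loads is absorbed by the factor $c$ together with the fact that there are $O(\log\frac{1}{1-\rho})$ buckets.

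\textbf{Waiting time.} Here we use the Work Decomposition Law (\Cref{prop:WDL}) applied to the first $i$ buckets, using the notation of \Cref{sec:SEB_notation}. Steady-state relevant work satisfies
\[
\E[W_{\le i}] = \E[W^{\mgone}_{\le i}] + \frac{\E[\I_{\le i} W_{\le i}]}{1-\rho^{\boldsymbol{\nu}}_{\le i}}\cdot(\text{const}),
\]
roughly. The tagged job waits at most until the relevant work $W_{\le i}$ seen at arrival, plus new relevant arrivals, is cleared. By PASTA this gives $\E[T_{\mathrm{wait}}(z)]\le \frac{\E[W_{\le i}]}{1-\rho^{\boldsymbol{\nu}}_{\le i}}$. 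The M/G/1 piece, $\E[W^{\mgone}_{\le i}]/(1-\rho_{\le i})$, is exactly the PSJF-1 waiting time for a job of size $b_i\le cz$. Monotonicity of PSJF waiting in size then gives at most the PSJF-1 waiting term for size $cz$. This is then compared to $c$ times the actual term by the standard geometric-bucket argument.

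\textbf{Main obstacle: controlling $\E[\I_{\le i}W_{\le i}]$.} When bucket $j\le i$ is nonempty but its preferred option is unfulfillable, the system is inefficient for jobs in the first $i$ buckets. We first prove state-space collapse per bucket. By \Cref{prop:service_drift}, whenever bucket $j$ is served its perpendicular work $\mathbf{w}^{(j)}_{\perp\boldsymbol{\rho}^{\boldsymbol{\nu}}}$ has drift at most $-\varepsilon_0$. When it is not served, arrivals add drift along $\boldsymbol{\rho}^{\boldsymbol{\nu}}$ (using independence of class and index), so the perpendicular component does not grow in expectation. A Lyapunov argument on $\|\mathbf{w}^{(j)}_{\perp}\|$ then gives bounded moments, uniformly in $\varepsilon$, since sizes are bounded. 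Next comes the key geometric step: if $\|\mathbf{w}^{(j)}_{\perp}\|$ is small relative to $\|\mathbf{w}^{(j)}\|$ and the bucket holds more than $n_s$ jobs per class direction, then the preferred option is fulfillable. Thus inefficiency attributable to bucket $j$ occurs only when $W^{(j)}=O(1+\|\mathbf{w}^{(j)}_\perp\|)$. Hence $\I_{\le i}>0$ only when some bucket $j\le i$ has $O(1)$ work. Combining this with the collapse moments gives
\[
\E[\I_{\le i}W_{\le i}]\le O\big(i\cdot(1-\rho_{\le i})\big)\cdot O(\E[W_{\le i}]/\ldots),
\]
which I expect to translate into an additive $O(i)$ per job in the waiting time. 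Summing over $i\le n_b=O(\log_c \frac{z_{\max}}{z_{\min}})=O(\log\frac{1}{1-\rho})$ buckets, each contributing up to $O(\log\frac1{1-\rho})$ from the busy-period scaling, produces the $\Theta(\log^2\frac{1}{1-\rho^{\boldsymbol{\nu}}})$ term. The delicate part is making the fulfillability claim quantitative: because $\mathcal{R}^{\boldsymbol{\nu}}$ has full rank and $\boldsymbol{\rho}^{\boldsymbol{\nu}}$ is interior, every class is present in proportion. I would attack this first, before any of the accounting.
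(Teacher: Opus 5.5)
\textbf{Gap 1: the waiting-time step.} Your bound $\E[T_{\mathrm{wait}}(z)]\le \E[W_{\le i}]/(1-\rho^{\boldsymbol{\nu}}_{\le i})$ is a clearing-time argument, and this step is the heart of the theorem. It assumes that while the tagged job waits, the system works on buckets $1,\dots,i$ at full rate. SEB does not do this. It stops serving a bucket as soon as the bucket cannot fill its preferred option, which leaves residual jobs behind. Buckets $1,\dots,i$ can also all be ineligible at once while holding a lot of work, e.g.\ when every facet option needs both classes and a bucket is short of one. The tagged job is then stalled while SEB serves larger buckets or idles, and the relevant work need never clear. The Work Decomposition Law is a stationary identity, so it cannot be spliced into a sample-path clearing time to fix this.

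Even the M/G/1 part of your bound is too large. It charges a bucket-$i$ job a busy period of all arrivals to buckets $\le i$, i.e.\ the PSJF-1 wait of a job of size $b_i$. For the top bucket this is of order $(1-\rho^{\boldsymbol{\nu}})^{-2}$, whereas $1-\rho_{\le z}$ rises to order $c-1$ across that bucket. With, say, a density bounded below near $z_{\max}$, the top-bucket jobs alone contribute order $(c-1)(1-\rho^{\boldsymbol{\nu}})^{-2}$. This swamps $\E[T^{\PSJF}]=\Theta(1/(1-\rho^{\boldsymbol{\nu}}))$, so no geometric-bucket comparison with $c\,\E[T^{\PSJF}]$ exists in this per-size form.

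The missing tool is WINE (\Cref{prop:WINE}) applied to the subsystem of not-yet-served jobs (\Cref{lemma:WINE_for_wait_time}). It bounds $\E[T_{\mathrm{wait}}]$ by $\frac{1}{\lambda}\sum_i\frac{c-1}{b_0c^i}\E[W_{\le i}]+\frac{1}{\lambda}\frac{\E[W]}{b_{n_b}}$, with no busy-period factor. In that form, rounding $x$ up to $b_i$ inflates each job's count by at most a factor $c$. WINE for PSJF-1 with the substitution $x\mapsto cx$ then gives the matching $c\,\E[T^{\PSJF}]$, and the waste enters only as $\E[\I_{\le i}W_{\le i}]/(1-\rho^{\boldsymbol{\nu}}_{\le i})$.

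Your residence-time busy period has the same stalling defect, and it is unnecessary. FCFS within each class of each bucket means at most $n_bn_sn_c$ jobs have ever been served. Little's law then gives $\E[T_{\mathrm{res}}]\le n_bn_sn_c/\lambda$.

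\textbf{Gap 2: collapse and waste.} Your fulfillability step is correct; it is \Cref{thm:eligibility_work_amount}. The collapse and waste accounting, however, need a different mechanism.

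When bucket $j$ is unserved, $\mathbf{w}^{(j)}_{\perp}$ has mean-zero increments. By Jensen, $\|\mathbf{w}^{(j)}_{\perp}\|_2$ therefore has nonnegative drift in those states. There is then no state-wise negative drift to feed a Foster--Lyapunov or Hajek-type bound. This is why one collapses to a cone rather than to the ray. For $H_j=(\|\mathbf{w}^{(j)}_{\perp}\|_2-\|\mathbf{w}^{(j)}_{\parallel}\|_2\tan\varphi)^+$, arrivals alone give drift at most $-\|{\boldsymbol{\rho}^{\boldsymbol{\nu}}}^{(j)}\|_2\tan\varphi+O(1/H_j)$, because $\|\mathbf{w}^{(j)}_{\parallel}\|_2$ grows linearly. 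Choosing $\varphi$ small relative to $\varepsilon_0$ from \Cref{prop:service_drift} ensures that preferred-option service does not increase $H_j$.

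You also need exponential tails for $H_j$ (\Cref{thm:SSC,thm:bucket_SSC}), not just moments. The waste bound splits $\E[\I_{\le i}\|\mathbf{W}^{(j)}\|_1]$ at a threshold $B_{ij}$ of order $\frac{1}{\tau\eta_j}\log\frac{1}{1-\rho^{\boldsymbol{\nu}}_{\le i}}$. Below the threshold it uses $\E[\I_{\le i}]=1-\rho^{\boldsymbol{\nu}}_{\le i}$. Above it, it uses $\I_{\le i}\le\I_{\le j}\le\mathbf{1}(\text{bucket } j \text{ ineligible})$, \Cref{lem:H_lower_bound}, and the exponential tail, so that this part also carries the factor $1-\rho^{\boldsymbol{\nu}}_{\le i}$. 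The result is $\E[\I_{\le i}W_{\le i}]/(1-\rho^{\boldsymbol{\nu}}_{\le i})=O\bigl(\frac{c^i}{c-1}\log\frac{1}{1-\rho^{\boldsymbol{\nu}}_{\le i}}+i\bigr)$. Weighted by the WINE coefficients $\frac{c-1}{b_0c^i}$ and summed over $n_b=\Theta(\log\frac{1}{1-\rho^{\boldsymbol{\nu}}})$ buckets, this gives the $\Theta(\log^2)$ remainder. Polynomial moment bounds would force a polynomial threshold and lose it.
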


We note that under the boundedness assumption on the priority indices, one can show that under heavy traffic, PSJF-1 is also heavy-traffic mean response time optimal.

\Cref{thm:response_time_bound_PSJF} is the focus of the majority of the technical section of this paper, and we discuss the proof structure in detail in \cref{sec:roadmap}.
We prove the theorem in \cref{sec:response_time_analysis}.

\subsection{Roadmap to Mean Response Time Optimality}
\label{sec:roadmap}
We first prove that the system is throughput optimal (i.e. stable), for all $\rho^{\boldsymbol{\nu}}<1$, under SEB (\Cref{thm:stability}). We then establish an upper bound on mean response time under SEB for any load $\rho^{\boldsymbol{\nu}}<1$. A job's response time under policy $\pi$ can be written as
$
T^\pi = T^\pi_{\text{wait}}+T^\pi_{\text{res}}
$,
where $T^\pi_{\text{wait}}$ (wait time) is the time between when a job arrives and when it first receives any service and $T^\pi_{\text{res}}$ (residence time) is the time between when a job first receives service and when it leaves the system. In our system, bounding $\E[T^\pi_{\text{res}}]$ is a straightforward application of Little's law (\Cref{lem:residence_time_bound}), whereas bounding $\E[T^\pi_{\text{wait}}]$ is much more complicated.
Note also that $\E[T^\pi_{\text{wait}}]$ dominates under heavy traffic.

One of the key tools we use to bound $\E[T^\pi_{\text{wait}}]$ is the Work Integral Number Equality (WINE) technique \citep{righter1990extremal, scully2020gittins, banerjee2022heavy, scully2022new}, which converts the analysis of mean response time to the analysis of mean relevant work in system. Although the original theorem is stated in the context of an M/G/k queue, it is sufficiently general for the generalized switch model, with the proof almost unaltered. We include a proof in \Cref{app:WINE} for completeness.

\restatably\begin{proposition}[\citet{scully2022new} Theorem 15.3]\label{prop:WINE}
For an arbitrary scheduling policy $\pi$ that stabilizes the generalized switch,
\[
    \E[T^\pi] = \frac{1}{\lambda} \int_{0}^{\infty} \frac{\E[W_{\remduration \leq x}^\pi]}{x^2} \, dx,
\]
where $\E[W_{\remduration \leq x}^\pi]$ is the total remaining priority index of all jobs with remaining priority indices no more than $x$.
\end{proposition}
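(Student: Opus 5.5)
The plan is to prove \Cref{prop:WINE} with the standard WINE argument: write each job's response time as an integral over the threshold $x$, then turn that integral into a time-average of relevant work using renewal-reward reasoning (or Little's law). The approach uses no structure of the service options, so it carries over from the M/G/k setting essentially unchanged.

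\textbf{Step 1: a pathwise identity for one job.} Fix a job with original priority index $z$ and let $t\mapsto s(t)$ be its remaining priority index while it is in the system. For $t$ in the job's sojourn interval we have $s(t)>0$, so
\[
\int_0^\infty \frac{s(t)\,\mathds{1}\{s(t)\le x\}}{x^2}\,dx = s(t)\int_{s(t)}^\infty \frac{dx}{x^2} = 1 .
\]
Integrating over the sojourn interval therefore gives $T_{\text{job}} = \int_{\text{sojourn}}\int_0^\infty s(t)\mathds{1}\{s(t)\le x\}x^{-2}\,dx\,dt$. This step only needs $s(t)>0$ before completion. That holds here because priority indices are scaled durations with $k_i>0$, by the saturated facet assumption of \Cref{sec:non-saturated}. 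How the switch serves the job, at rate $k_i r_i$ or any other rate, does not matter.

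\textbf{Step 2: sum over jobs and average in time.} Summing the identity over all jobs that are present during $[0,t]$ and applying Tonelli, the integrand becomes $\sum_{\text{jobs}} s_j(u)\mathds{1}\{s_j(u)\le x\} = W^\pi_{\remduration\le x}(u)$. This yields
\[
\int_0^t \int_0^\infty \frac{W^\pi_{\remduration\le x}(u)}{x^2}\,dx\,du \approx \sum_{\text{jobs completing in }[0,t]} T_j ,
\]
up to boundary terms from jobs that straddle time $0$ or time $t$. Divide by $t$ and let $t\to\infty$. Since $\pi$ is stable, the system regenerates at the starts of busy periods, because arrivals are Poisson and the system empties. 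The renewal-reward theorem (equivalently, a Little's-law type $H=\lambda G$ relation) then sends the right side to $\lambda\E[T^\pi]$. For each fixed $x$, the left side converges to the stationary mean $\int_0^\infty \E[W^\pi_{\remduration\le x}]x^{-2}\,dx$. Dividing by $\lambda$ gives the claim.

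\textbf{Main obstacle.} The delicate part is justifying the limit interchange and dealing with the boundary terms. The cleanest route is to work within one regenerative cycle (a busy period together with the idle period that follows). Every job that arrives in a cycle also departs in it, so there are no boundary terms at all. All integrands are nonnegative, so Tonelli allows swapping $\int dx$, $\int du$, and the expectation freely. Renewal-reward then identifies $\E[\text{cycle integral}]/\E[\text{cycle length}]$ with the stationary time-average on one side and $\lambda\E[T^\pi]$ on the other. The conclusion holds even if both sides are infinite.

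Two points need care. First, the regenerative structure requires the cycle length to have finite mean, which is what stability of $\pi$ means here. Second, under SEB the service rates can switch rapidly in a processor-sharing fashion; this is harmless, because Step 1 is purely pathwise and only needs $s(t)$ to be nonincreasing and positive until completion.
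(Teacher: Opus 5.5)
Your Step~1 identity is the one the paper uses. Summed over the jobs present at a single instant $u$, it gives $\int_0^\infty W_{\le x}(u)\,x^{-2}\,dx = N(u)$, and you are right that this needs every $k_i>0$.

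The gap is in how you close Step~2. You assert that a stable $\pi$ regenerates at the starts of busy periods ``because the system empties,'' and you read stability as ``cycle length has finite mean.'' Neither holds in the generalized switch. It fails precisely for the policy this proposition is applied to: \Cref{lemma:WINE_for_wait_time} and \Cref{thm:response_time_bound} use it for SEB.

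SEB serves a bucket only when the bucket can fulfill a whole facet service option, and otherwise it leaves non-empty buckets idle on purpose (\Cref{sec:dynamic-SEB} says this explicitly). Suppose every facet vertex serves at least two jobs, as in all of the paper's examples (e.g.\ vertices $[3,0],[2,1]$ or $[0,2],[1,1]$). Then a bucket in service holds at least two jobs. It can reach zero jobs only if at least two jobs complete at the same instant, which generically does not happen (e.g.\ for continuous duration distributions). So after the first arrival the system is typically never empty again, and there are no busy cycles to regenerate on. The stability that \Cref{thm:stability} actually proves is positive Harris recurrence, via a Foster--Lyapunov drift off a compact set $\mathcal{K}$. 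It is not a recurrent empty state.

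The fix is to drop the sojourn-interval integration entirely:
\begin{enumerate}
\item Apply your identity at a fixed time. This gives $\int_0^\infty W_{\le x}\,x^{-2}\,dx = N^\pi$ exactly, with no boundary terms.
\item Take the stationary expectation and exchange the integral and expectation by Tonelli, obtaining $\E[N^\pi]=\int_0^\infty \E[W^\pi_{\le x}]\,x^{-2}\,dx$.
\item Finish with Little's law, $\E[N^\pi]=\lambda\E[T^\pi]$. It holds for the stationary system and does not require the empty state to be recurrent.
\end{enumerate}
This is exactly the paper's proof. Your time-averaged version amounts to re-deriving Little's law. If you want to keep it, you must replace busy-period regeneration with a sample-path $L=\lambda W$ (or $H=\lambda G$) argument, whose time averages exist by ergodicity of the positive Harris recurrent process.
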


When we apply \Cref{prop:WINE} to bound the waiting time, we apply it to the subsystem consisting of jobs that have not yet received any service, in \cref{lemma:WINE_for_wait_time}.
Note that the set of jobs which have not yet received service and which have remaining priority indices $\le x$ is a subset of the jobs in the system with original priority indices $\le x$. Thus, for any given threshold $x$, it suffices to bound $\E[W_{\origduration \leq x}^\pi]$.

Recall that SEB assigns a job to a bucket based on the original priority index of the job. As a result, to bound $\E[W_{\origduration \leq x}^\pi]$, we focus on total work in buckets $\leq i$, where bucket $i$ is the bucket that a job of priority index $x$ is placed in. Our strategy is to apply the Work Decomposition Law \citep{scully2020gittins, scully2022new} to the first $i$ buckets. We state the Work Decomposition Law here in a way that is specialized to our setting.

\begin{proposition}[\citet{scully2022new} Theorem 8.2]\label{prop:WDL}
For any service policy $\pi$ that stabilizes the system,
\begin{align*}
    \E[W_{\leq i}^{\pi}] - \E[W_{\leq i}^{\mgone}] = \frac{\E[\mathcal{I}_{\leq i}W^{\pi}_{\leq i}]}{1-\rho^{\boldsymbol{\nu}}_{\leq i}}
\end{align*}
for any $i=1,\ldots, n_b$.
\end{proposition}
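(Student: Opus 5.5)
We will prove \Cref{prop:WDL} by the standard rate-conservation (Miyazawa / Palm) argument applied to a quadratic test function of the relevant work. Fix $i$ and let $W_{\leq i}(t)$ be the total remaining priority index (work in the $\mathbf{k}^{\boldsymbol{\nu}}$-mapped sense) of jobs in buckets $1,\ldots,i$. Because bucket membership is fixed by the original priority index, the jobs in the first $i$ buckets form a closed subsystem. Its arrivals form a Poisson process of rate $\lambda_{\leq i} = \lambda \P[Z\le b_i]$ with i.i.d.\ sizes $S_{\leq i}\sim[Z\mid Z\le b_i]$ (this uses the Independence assumption and the fact that a job's priority index is its work), and its load is $\rho^{\boldsymbol{\nu}}_{\leq i}=\lambda_{\leq i}\E[S_{\leq i}]$. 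Between jumps, $W_{\leq i}$ decreases at rate exactly $1-\I_{\leq i}(t)$, by the definition of relevant inefficiency: a class-$j$ job in service loses priority index at rate $k_j$, so the total rate is the $\mathbf{k}$-mapped service rate devoted to these jobs.

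We then apply the rate conservation law to $f(W_{\leq i})=W_{\leq i}^2$ under the stationary distribution, which exists because $\pi$ stabilizes the system. The continuous part contributes $-2\E[(1-\I_{\leq i})W_{\leq i}]$, and the jumps at Poisson arrivals contribute, by PASTA, $\lambda_{\leq i}\E[(W_{\leq i}+S)^2-W_{\leq i}^2]=2\rho^{\boldsymbol{\nu}}_{\leq i}\E[W_{\leq i}]+\lambda_{\leq i}\E[S_{\leq i}^2]$. Setting the sum to zero gives
\[
(1-\rho^{\boldsymbol{\nu}}_{\leq i})\E[W_{\leq i}] = \tfrac{\lambda_{\leq i}}{2}\E[S_{\leq i}^2] + \E[\I_{\leq i}W_{\leq i}].
\]
Running the same computation for the resource-pooled M/G/1 with the same arrival stream and a work-conserving policy ($\I\equiv 0$ whenever work is positive, so $\I W=0$) yields $\E[W^{\mgone}_{\leq i}]=\lambda_{\leq i}\E[S_{\leq i}^2]/(2(1-\rho^{\boldsymbol{\nu}}_{\leq i}))$, the Pollaczek--Khinchine formula. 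Subtracting the two identities and dividing by $1-\rho^{\boldsymbol{\nu}}_{\leq i}>0$ gives the claim.

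The main obstacle is technical rather than conceptual: justifying the rate-conservation step, which requires $\E[W_{\leq i}]<\infty$ (so that the expectations are finite and the boundary terms vanish) and requires the service rate to be well defined even under the rapid alternation between service options described in \Cref{sec:def-SEB}. For the first point, we will truncate with $\min(W_{\leq i},M)^2$, or invoke the general statement in \citet{scully2022new}, and let $M\to\infty$ by monotone convergence. This needs $\E[S^2]<\infty$, which holds by the Boundedness assumption, and positive recurrence, which is given. For the second point, we interpret the processor-sharing limit as a time-averaged service rate, which is all the argument uses, since only the drift $1-\I_{\leq i}(t)$ of $W_{\leq i}$ enters.
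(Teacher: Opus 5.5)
Your argument is correct. It differs from the paper only in that the paper gives no derivation: it cites \citet{scully2022new} and remarks that all quantities live in the $\mathbf{k}^{\boldsymbol{\nu}}$-mapped system. You instead reproduce the rate-conservation argument for $W^2$ that underlies that law, applied to the closed subsystem of buckets $1,\dots,i$. Writing it out makes visible the hypotheses actually used. These are closedness (bucket membership is fixed by the \emph{original} priority index, so work enters only through a Poisson thinning of the arrival stream), $\E[S_{\leq i}^2]<\infty$, and stationarity. In particular, the Independence assumption you invoke is not needed. Thinning a Poisson stream by i.i.d.\ marks already gives Poisson arrivals with i.i.d.\ sizes $[Z\mid Z\le b_i]$, and only the scalar work enters the computation.

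Your truncation also settles finiteness. On $\{W_{\leq i}<M\}$ the jump of $\min(W_{\leq i},M)^2$ at an arrival is at most $2W_{\leq i}S+S^2$, and off that event it is $0$. So the truncated identity gives $(1-\rho^{\boldsymbol{\nu}}_{\leq i})\,\E[W_{\leq i}\mathbf{1}(W_{\leq i}<M)]\le\E[\I_{\leq i}W_{\leq i}]+\frac{1}{2}\lambda_{\leq i}\E[S_{\leq i}^2]$ for every $M$. Hence $\E[W_{\leq i}]$ is finite exactly when the waste is. That is what the later use of the law, to bound $\E[W^{\SEB}_{\leq i}]$, requires.

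One step should be stated as a definition rather than attributed to ``the definition of relevant inefficiency.'' Your drift claim, that $W_{\leq i}$ decreases at rate $1-\I_{\leq i}(t)$, holds only if $\I_{\leq i}(t)$ is $1$ minus the mapped service rate devoted to jobs \emph{in buckets $1,\dots,i$}. The paper's notation section literally sets $\I_{\leq i}=\I_{\leq b_i}$, with relevance measured by \emph{remaining} priority index. Under that reading, consider serving a job from a bucket $j>i$ whose remaining index has fallen below $b_i$, which SEB does when buckets $1,\dots,i$ are nonempty but ineligible. That service counts as relevant yet does not reduce $W_{\leq i}$. Your computation then only yields $\E[W^{\pi}_{\leq i}]-\E[W^{\mgone}_{\leq i}]\ge\E[\I_{\leq b_i}W^{\pi}_{\leq i}]/(1-\rho^{\boldsymbol{\nu}}_{\leq i})$, which points the wrong way for the upper bounds built on it. The bucket-membership reading is the one under which the proposition is true. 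It is also the reading under which $\E[\I_{\leq i}]=1-\rho^{\boldsymbol{\nu}}_{\leq i}$ holds exactly, as the waste bound uses. So adopt it explicitly.
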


Although the Work Decomposition Law does not originally address duration–to-priority index conversion, all quantities in \Cref{prop:WDL} are defined on the $\mathbf{k}$-mapped system, making \Cref{prop:WDL} directly applicable in this paper.

The key term in \Cref{prop:WDL} is $\E[\mathcal{I}_{\leq i}W^{\pi}_{\leq i}]$, which we subsequently refer to as the \emph{waste}.
Waste intuitively measures how much work is not being processed by the servers. We will analyze the waste bucket by bucket. If there's little work $W^{\pi}_{\le i}$, the waste is small. The challenge is to show that the waste is small in the presence of substantial work.

Note that waste happens if we can't serve jobs in a bucket, so we need to show that when there is a lot of work in a bucket, it is very rare that the bucket is ineligible to receive service. To formalize this, we first define a cone around the load vector, then we consider two possible cases:
\begin{enumerate}
    \item When the bucket work vector is outside the cone, the vector on average drifts towards the cone. Thus, we show a state-space collapse result (\Cref{thm:bucket_SSC}). Using \Cref{thm:bucket_SSC} we show that the probability that the work vector is so far from the cone that the bucket is ineligible for service is small (\Cref{thm:waste_bound}).
    \item When the bucket work vector is inside the cone, we show that the bucket must be eligible for service if there is a lot of work (\Cref{thm:eligibility_work_amount}).
\end{enumerate}

Combining all bounds together, we obtain a bound on the mean response time relative to mean response time in the resource-pooled system (\Cref{thm:response_time_bound_PSJF}, proven in \cref{sec:response_time_analysis}). Heavy-traffic optimality then follows from this bound, and the fact that the resource-pooled system forms a lower bound on optimal mean response time, as we show in \cref{sec:proof-optimality}.

\section{SEB Analysis: Stability and Balancing Each Bucket}
\label{sec:analysis_1}

\subsection{Defining the Cone-based State-space Collapse}
\label{sec:cone_SSC}

    \begin{figure}[t]
        \begin{center}
            \begin{tikzpicture}[scale=1]

    \definecolor{lightmagenta}{rgb}{1.0, 0.8, 0.9}
    
    \draw[->] (0,0) -- (5,0) node[anchor=north] {Class 1};  
    \draw[->] (0,0) -- (0,5) node[anchor=east] {Class 2}; 

    \node[anchor=north east] at (0,0) {$\mathbf{0}$};

    \draw[thick] (0,0) -- (3.4, 5); 

    \draw[thick] (0,0) -- (5, 3.4);  
    
    \fill[lightmagenta] 
        (0,0) -- (3.4, 5) -- (5, 3.4) -- cycle;
    
    \draw[->, thick, black] (0,0) -- (1,1);
    \node at (1.05, 1.25) {$\boldsymbol{\rho}$};  
    
    \draw[dashed] (0,0) -- (4.2, 4.2);

    \draw[->, thick, blue] (1.85, 3.5) -- (2.35, 4);
    \node at (2.4, 4.25) {\textcolor{blue}{$\boldsymbol{\rho}^{(i)}$}};

    \filldraw (1.85, 3.5) circle (2pt);
    \node at (1.55,3.7) {$\mathbf{w}^{(i)}$};
    
    \node at (4, 3.3) {$\mathcal{C}$};

    \draw[thick, orange] (1.1, 2.7) -- (2.7, 1.1);
    \draw[thick] (0, 2.7) -- (1.1, 2.7);
    \draw[thick] (2.7, 0) -- (2.7, 1.1);
    \node at (2, 0.5) {$\mathcal{S}$};

    \filldraw[orange] (1.1, 2.7) circle (2pt);
    
    \filldraw[orange] (2.7, 1.1) circle (2pt);

    \node at (2.7, 1.5) {\textcolor{orange}{$\mathcal{F}$}};
\end{tikzpicture}
        \end{center}
        \caption{Drift of bucket-$i$ work vector in a two-class system, when service is not occurring.}
        \label{fig:drift}
    \end{figure}

In this section, we give definitions relevant for cone-based state space collapse,  which is a key idea in our optimality proof.

Recall that in \Cref{sec:policy_intuition}, we explained how SEB keeps all priority index buckets balanced by keeping the work vector of each bucket roughly parallel to the load vector ${\boldsymbol{\rho}^{\boldsymbol{\nu}}}$. To formalize this intuition, we aim to establish a state-space collapse result for the bucket work vectors $\mathbf{w}^{(i)}$. However, we note that the work vectors do not collapse to the load vector itself. This is because when a bucket receives no service, drift due to arrivals moves the work vector parallel to ${\boldsymbol{\rho}^{\boldsymbol{\nu}}}$, not towards it.
Note that the conditional load ${\boldsymbol{\rho}^{\boldsymbol{\nu}}}^{(i)}$ arriving to any bucket $i$ is parallel to ${\boldsymbol{\rho}^{\boldsymbol{\nu}}}$, due to our assumption of independence between job class and job priority index (\cref{sec:assumptions}).

The fact that the arriving load vector ${\boldsymbol{\rho}^{\boldsymbol{\nu}}}^{(i)}$ is parallel to ${\boldsymbol{\rho}^{\boldsymbol{\nu}}}$ is useful: it implies that the drift will move the work vector towards a \emph{cone} around ${\boldsymbol{\rho}^{\boldsymbol{\nu}}}$, as illustrated in \cref{fig:drift}. We will show that whether or not the bucket receives service, the system drifts towards the cone.
The cone we consider is
\begin{align*}
    \mathcal{C} = \left\{\mathbf{w}\in\mathbb{R}^{n_c}: \frac{\|\mathbf{w}_{\parallel{\boldsymbol{\rho}^{\boldsymbol{\nu}}}}\|_2}{\|\mathbf{w}\|_2} \geq \cos{\varphi}\right\},
\end{align*}
where $\varphi>0$ is chosen so that the following conditions are satisfied:
\* $\varphi$ is small enough that $\mathcal{C}$ is contained in the interior of the convex cone generated by service rate vectors in $\mathcal{R}^{\boldsymbol{\nu}}$. 
\* $\varphi$ is small compared to the constant $\varepsilon_0$ in \Cref{prop:service_drift}:
    $\left(\max_{\mathbf{r}\in\mathcal{R}^{\boldsymbol{\nu}}}\frac{\inner{{\boldsymbol{\rho}^{\boldsymbol{\nu}}}}{\mathbf{r}}}{\|{\boldsymbol{\rho}^{\boldsymbol{\nu}}}\|_2}\right)\tan\varphi<\varepsilon_0$.
\* $\varphi$ is small compared to the lowest-load class in the system: $
    \tan\varphi < \frac{{\boldsymbol{\rho}^{\boldsymbol{\nu}}}_{\min}}{\|{\boldsymbol{\rho}^{\boldsymbol{\nu}}}\|_2} 
    $, where we define ${\boldsymbol{\rho}^{\boldsymbol{\nu}}}_{\min}=\min_i\rho_i$.
\*/
Note that we use the same cone~$\mathcal{C}$ for all buckets $i$.
That is, we show in \cref{sec:bucket_SSC} that each bucket work vector $\mathbf{w}^{(i)}$ converges to the same cone~$\mathcal{C}$.

\subsection{Stability}
\label{sec:proof-stability}

\restatably\begin{theorem}
\label{thm:stability}
The system is stable under SEB for any $\rho^{\boldsymbol{\nu}}<1$.
\end{theorem}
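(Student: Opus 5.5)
We prove stability with a Lyapunov drift argument on the bucket work vectors, using \Cref{prop:service_drift} to handle buckets that are far from the load direction. The state is the collection of remaining priority indices of all jobs, partitioned by bucket and class. It suffices to exhibit a function whose drift is strictly negative outside a bounded set of total work; since priority indices are bounded below by $z_{\min}$, bounded total work implies a bounded number of jobs, and positive recurrence follows by a standard Foster--Lyapunov argument for piecewise-deterministic Markov processes. The natural candidate is a bucket-weighted sum of total mapped work plus a correction measuring the deviation from $\boldsymbol{\rho}^{\boldsymbol{\nu}}$. We start from
\[
V = \sum_{i=1}^{n_b}\Bigl(\inner{\mathbf{k}^{\boldsymbol{\nu}}}{\mathbf{w}^{(i)}}\cdot \beta + \|\mathbf{w}^{(i)}_{\perp\boldsymbol{\rho}^{\boldsymbol{\nu}}}\|_2\Bigr),
\]
where $\inner{\mathbf{k}^{\boldsymbol{\nu}}}{\mathbf{w}^{(i)}}$ is simply the total work $\|\mathbf{w}^{(i)}\|_1$, since work is already measured in priority-index units. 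We may instead need the total work alone together with a separate collapse argument.

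\textbf{Step 1: total work.} Total work drifts at rate $\rho^{\boldsymbol{\nu}} - (1-\I(t))$. Whenever SEB serves some bucket with its preferred option $\mathbf{r}^{(i)*}\in\mathcal{R}^{\boldsymbol{\nu}}$, we have $\inner{\mathbf{k}^{\boldsymbol{\nu}}}{\mathbf{r}}=1$, so $\I=0$ and the drift is $-(1-\rho^{\boldsymbol{\nu}})<0$. One caveat: if a class has fewer jobs in the bucket than $\mathbf{r}$ requires, the option is not fulfillable and SEB skips the bucket, so actual service always uses a full facet option. The total-work drift is therefore negative unless no bucket is eligible, and it suffices to show that this event cannot occur when total work is large.

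\textbf{Step 2: large work implies some bucket is eligible.} Suppose the total work exceeds a large threshold $M$. Then some bucket $i$ has work at least $M/n_b$. If $\mathbf{w}^{(i)}$ lies in the cone $\mathcal{C}$, the third cone condition ($\tan\varphi<\rho^{\boldsymbol{\nu}}_{\min}/\|\boldsymbol{\rho}^{\boldsymbol{\nu}}\|_2$) gives every class in the bucket work at least a constant times $M$. Since indices are at most $z_{\max}$, every class then has at least $n_s$ jobs, so any option, and in particular $\mathbf{r}^{(i)*}$, is fulfillable.

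\textbf{Step 3: the main obstacle, buckets outside the cone.} A bucket can have large work concentrated in a few classes, in which case its preferred option may request a class that is nearly empty. We handle this with the $\|\mathbf{w}^{(i)}_{\perp}\|$ term. When bucket $i$ is served by $\mathbf{r}^{(i)*}$, \Cref{prop:service_drift} shows that the perpendicular norm decreases at rate at least $\varepsilon_0$. When bucket $i$ is not served, arrivals are parallel to $\boldsymbol{\rho}^{\boldsymbol{\nu}}$, so the perpendicular component has zero drift. The second cone condition controls the cross effects.

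Two facts about the preferred option help. First, the preferred option maximizes $\inner{\mathbf{w}_\perp}{\mathbf{r}}$, so it favors classes with excess work. Second, a class that is nearly empty has negative perpendicular coordinate of order $\|\mathbf{w}_\perp\|$. The step we would try is to show that for $\mathbf{w}^{(i)}$ outside $\mathcal{C}$ with large norm, $\mathbf{r}^{(i)*}$ assigns zero service to classes with very little work, making it fulfillable. If this holds, any bucket with large work is eligible, and Step 1 closes the argument. If it fails, we fall back on the weighted Lyapunov function above. There we must show that the periods with no eligible bucket have bounded expected length: during such periods no service occurs, and arrivals push the largest bucket into $\mathcal{C}$ in time proportional to its perpendicular norm. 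The resulting loss of drift is then offset by choosing $\beta$ appropriately.
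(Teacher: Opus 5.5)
Your Steps 1 and 2 are sound and agree with the paper. Total work drifts at $-(1-\rho^{\boldsymbol{\nu}})$ whenever some bucket is served, and a bucket in $\mathcal{C}$ with enough work is eligible by \Cref{thm:eligibility_work_amount}. The gap is Step 3, and neither of your two routes closes it.

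\textbf{The first route is false.} In the paper's 7-server example with limiting facet spanned by $[3,0]$ and $[2,1]$, every facet option uses at least two class-1 jobs. So a bucket with arbitrarily much class-2 work but at most one class-1 job is ineligible, whatever its preferred option is. The same holds for the 11-server facet spanned by $[4,1]$ and $[1,3]$, where every facet option needs both classes. Large work outside $\mathcal{C}$ therefore does not imply eligibility.

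\textbf{The fallback fails too.} Take a state where every bucket holds only class-2 jobs. SEB idles, and total work rises at rate $\rho^{\boldsymbol{\nu}}$. Each $\|\mathbf{w}^{(i)}_{\perp}\|_2$ has \emph{nonnegative} expected drift, because the mean arriving work vector is parallel to $\boldsymbol{\rho}^{\boldsymbol{\nu}}$ and the norm is convex (Jensen). So your $V$ has drift at least $\beta\rho^{\boldsymbol{\nu}}\geq 0$ at arbitrarily large states, and no choice of $\beta$ gives negative drift there. Your proposed repair is also internally inconsistent. You ask for idle periods of bounded expected length, yet you estimate the time to reach $\mathcal{C}$ as proportional to $\|\mathbf{w}^{(i)}_{\perp}\|_2$, which is unbounded. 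The $\|\mathbf{w}_{\perp}\|_2$ term does not decrease during that time, so nothing compensates the resulting increase of order $\beta\rho^{\boldsymbol{\nu}}\|\mathbf{w}^{(i)}_{\perp}\|_2$. In reality such a bucket alternates rapidly between short eligible and ineligible stretches. You would therefore need to control a long-run served fraction, which the proposal does not attempt.

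\textbf{The missing idea} is a Lyapunov term that registers the progress arrivals alone make toward the cone. The paper uses $H(\mathbf{w})=(\|\mathbf{w}_{\perp}\|_2-\|\mathbf{w}_{\parallel}\|_2\tan\varphi)^+$. Arrivals increase $\|\mathbf{w}_{\parallel}\|_2$ at rate $\|{\boldsymbol{\rho}^{\boldsymbol{\nu}}}^{(i)}\|_2$ while changing $\|\mathbf{w}_{\perp}\|_2$ only at second order. Hence $H$ drifts down at a constant rate while the bucket is unserved. By \Cref{prop:service_drift} together with the second condition on $\varphi$, service with the preferred option does not increase $H$ either.

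The paper then takes $V=\sum_i H(\mathbf{w}^{(i)})^2+M\sum_{i,j}w^{(i)}_j$. Squaring makes the arrival drift of a bucket with large $H$ at most $-\alpha H+O(1)$. This dominates both the $+M\rho^{\boldsymbol{\nu}}$ from total work and the bounded positive contributions of other buckets sitting near $\partial\mathcal{C}$. A linear $H$ term would give only a bounded negative drift, which need not beat these.

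Your Step 2 is extended from ``in $\mathcal{C}$'' to ``$H(\mathbf{w}^{(i)})\le R$ for all $i$'' (\Cref{lemma:bucket_service_eligibility}): if every bucket is within $R$ of the cone and total work exceeds $L(R)$, some bucket is eligible. The proof then splits into two cases:
\begin{itemize}
\item some $H(\mathbf{w}^{(i)})>R$, giving drift at most $-\alpha R+E+M$ for a constant $E$;
\item all $H(\mathbf{w}^{(i)})\le R$ with total work above $L$, so some bucket is served and $-M(1-\rho^{\boldsymbol{\nu}})$ dominates.
\end{itemize}
The constants are chosen in the order $\eta$, $M$, $R$, $L$.
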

Full proof deferred to \cref{app:stability}.
\begin{proof}{\textit{Proof outline}} Our main tool is the continuous-time Foster-Lypunov theorem in a general state space (Theorem 6 in \citep{meyn1993survey}). The key idea is to find a nonnegative Lyapunov function $V$ that has bounded drift on a compact set and negative drift outside that set. Our choices of $V$ rely on the following observations:
\begin{itemize}
    \item When bucket $i$ does not receive any service, the average drift of $\mathbf{w}^{(i)}$ is parallel to ${\boldsymbol{\rho}^{\boldsymbol{\nu}}}$, due to arrivals. As a result, job arrivals push $\mathbf{w}^{(i)}$ towards the cone $\mathcal{C}$ whenever $\mathbf{w}^{(i)}\not\in\mathcal{C}$.
    \item Since job priority index in bucket $i$ is upper bounded by $b_i$, if there is more than $n_sb_{i}$ amount of work of each job class in bucket $i$, then there are enough jobs to fulfill any service option. We show that if, for some bucket $i$, $\mathbf{w}^{(i)}\in\mathcal{C}$ and if $\|\mathbf{w}^{(i)}\|_1$ is sufficiently large, then bucket $i$ is eligible for service. Our policy then ensures that some bucket receives service at full capacity and the total work in the system decreases at rate $1-\rho^{\boldsymbol{\nu}}$ in this scenario.
\end{itemize}
As a result of these observations, our Lyapunov function is a weighted sum of two terms: a term characterizing the distance to cone $\mathcal{C}$, based on the $H_i$ function defined in \cref{sec:bucket_SSC}, and the total work in the system.
\hfill\Halmos
\end{proof}

\subsection{Bucket State-space Collapse}
\label{sec:bucket_SSC}

Our goal in this section is to prove that the workload $\mathbf{w}^{(i)}$ of each bucket $i$ collapses to the cone $\mathcal{C}$ described in \cref{sec:cone_SSC}.
We first prove a generic continuous-time state-space collapse theorem, \cref{thm:SSC}, which may be of independent interest.
We them apply this theorem to proving our bucket-based state space collapse, in \cref{thm:bucket_SSC}.

We now prove a state-space collapse theorem that can be viewed as a continuous-time equivalent of the classic drift-based discrete-time state-space collapse theorem first introduced in \citet{eryilmaz2012asymptotically}. The challenge here is to do drift analysis in continuous-time. To this end, we employ the Rate Conservation Law \citep{miyazawa1994rate} to a properly chosen test function. 

\restatably\begin{theorem}\label{thm:SSC}
Let $\mathbf{W}(t)\in\mathbb{R}^n_+$ be the workload process of a queueing system with Poisson arrivals with rate $\lambda>0$ with a stationary distribution $\mathbf{W}$. Suppose $V:\mathbb{R}_+^n\to\mathbb{R}_+$ is a differentiable nonnegative-valued function and the following conditions are satisfied:
\begin{enumerate}[(i)]
    \item There exist $\alpha>0$, $\beta>0$, and $K<\infty$ such that for any $\mathbf{W}(t)=\mathbf{w}$,
    \[\mathcal{G}V(\mathbf{w})&:=D_tV(\mathbf{w})+\lambda\E[\Delta V(\mathbf{w})]\leq -\alpha+\beta\cdot\mathbf{1}(V(\mathbf{w})\leq K)\]
    where $\mathcal{G}$ is the infinitesimal generator of the workload process, $D_tV(\mathbf{w})$ is the change in $V(\mathbf{w})$ due to service, and $\Delta V(\mathbf{w})=V(\mathbf{w}_+)-V(\mathbf{w})$ is the change in $V$ immediately after an arrival at state $\mathbf{W}(t)=\mathbf{w}$.
    \item There exist $\theta>0$ and $D<\infty$ such that for all $\mathbf{W}(t)=\mathbf{w}$, $\E\left[e^{\theta|\Delta V(\mathbf{w})|}\right]<D$.
\end{enumerate}
Then for any $0<\eta<\min\left\{\frac{\alpha\theta^2}{\lambda D},\theta\right\}$, we have
\[\E\left[e^{\eta V(\mathbf{W})}\right]\leq\frac{\theta^2\beta e^{\eta K}}{\theta^2\alpha-\lambda\eta D}<\infty\]
\end{theorem}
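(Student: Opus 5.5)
Apply the Rate Conservation Law to the exponential test function $f(\mathbf{w}) = e^{\eta V(\mathbf{w})}$. The aim is to show $\E[\mathcal{G}f(\mathbf{W})]=0$ in stationarity, and then convert the drift hypothesis (i) into an upper bound on $\E[f(\mathbf{W})]$.

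\textbf{Step 1: split the generator of $f$.} The generator has a continuous part and a jump part. The continuous part is $D_t f(\mathbf{w}) = \eta e^{\eta V(\mathbf{w})} D_tV(\mathbf{w})$. The jump part is
\[
\lambda\E\bigl[e^{\eta V(\mathbf{w}_+)}-e^{\eta V(\mathbf{w})}\bigr]=\lambda e^{\eta V(\mathbf{w})}\E\bigl[e^{\eta\Delta V(\mathbf{w})}-1\bigr].
\]

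\textbf{Step 2: bound the jump term.} Use $e^x-1\le x+\tfrac{x^2}{2}e^{|x|}$. Since $\eta<\theta$, we have $\tfrac{\eta^2x^2}{2}e^{\eta|x|}\le \tfrac{\eta^2}{\theta^2}e^{\theta|x|}$, using $x^2\le \tfrac{2}{(\theta-\eta)^2}e^{(\theta-\eta)|x|}$ or a similar elementary bound; the constants need some care to land exactly on $\theta^2$. This gives
\[
\E\bigl[e^{\eta\Delta V}-1\bigr]\le \eta\E[\Delta V]+\frac{\eta^2 D}{\theta^2}.
\]
To get exactly the stated constant, I would instead use the series $\sum_{k\ge2}\frac{\eta^k|x|^k}{k!}\le \frac{\eta^2}{\theta^2}\sum_{k\ge2}\frac{\theta^k|x|^k}{k!}\le\frac{\eta^2}{\theta^2}e^{\theta|x|}$, valid for $\eta\le\theta$.

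\textbf{Step 3: combine with (i).} Adding the two parts,
\[
\mathcal{G}f(\mathbf{w})\le \eta e^{\eta V}\Bigl(\mathcal{G}V(\mathbf{w})+\frac{\lambda\eta D}{\theta^2}\Bigr)\le \eta e^{\eta V}\Bigl(-\alpha+\frac{\lambda\eta D}{\theta^2}\Bigr)+\eta\beta e^{\eta K}\mathbf{1}(V\le K).
\]

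\textbf{Step 4: take stationary expectations.} Use $\E[\mathcal{G}f(\mathbf{W})]=0$ from the Rate Conservation Law (Miyazawa): the time-average rate of continuous change plus $\lambda$ times the Palm expectation of jumps equals zero, and PASTA identifies the Palm distribution at arrivals with the stationary one. Then
\[
\eta\Bigl(\alpha-\frac{\lambda\eta D}{\theta^2}\Bigr)\E[e^{\eta V}]\le \eta\beta e^{\eta K}.
\]
The coefficient is positive because $\eta<\alpha\theta^2/(\lambda D)$, and dividing through gives exactly the stated bound.

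\textbf{Main obstacle.} The hard part is justifying Step 4. The Rate Conservation Law requires $\E[|D_tf(\mathbf{W})|]<\infty$ and finite jump intensity, but finiteness of $\E[e^{\eta V}]$ is what we are trying to prove. I would handle this by truncation. Apply the argument to $f_M=e^{\eta\min(V,M)}$, or to $\min(e^{\eta V},M)$, for which all expectations are finite. On the region $V<M$ the same drift inequality holds. On the region $V\ge M$ the continuous part vanishes, and the jump part is bounded by $\lambda\E[(e^{\eta\Delta V}-1)^+]$ times the truncated value, with an inequality in the favorable direction after checking signs. Obtain the bound uniformly in $M$ and let $M\to\infty$ by monotone convergence. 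Differentiability of $V$ along the sample path between arrivals is assumed via (i), so $D_tV$ is well defined a.e.
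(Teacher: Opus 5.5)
Your proposal is correct and follows essentially the same route as the paper. The paper applies the Rate Conservation Law to the truncated test function $e^{\eta(V\wedge n)}$. It bounds the second-order jump term by $\eta^2 D/\theta^2$ via Hajek's Lemma 2.2, which is exactly your series comparison. It then uses condition (i) on the continuous part and lets $n\to\infty$ by monotone convergence.

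One point needs tightening: on the region $V\ge M$, the jump increment of the truncated function is simply $\le 0$. The paper encodes this as $V(\mathbf{W}_+)\wedge n - V(\mathbf{W})\wedge n\le \Delta V(\mathbf{W})\,\mathbf{1}(V(\mathbf{W})\le n)$. So that region contributes nothing, and you need no bound of the form $e^{\eta M}\E[(e^{\eta\Delta V}-1)^+]$. Such a bound would leave a tail term $e^{\eta M}\P(V\ge M)$, and showing that it vanishes would be circular.
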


Full proof deferred to \cref{app:SSC}.

\begin{proof}{\textit{Proof outline}} We apply the Rate Conservation Law \citep{miyazawa1994rate} to $e^{\eta(V(\mathbf{W})\wedge n)}$ for some $0<\eta<\min\left\{\frac{\alpha\theta^2}{\lambda D},\theta\right\}$
and for a fixed positive integer $n$ such that $\E\left[e^{\eta(V(\mathbf{W})\wedge n)}\right]<\infty$.

Then we bound the resulting terms using conditions (i) and (ii) to obtain a bound on $\E\left[e^{\eta(V(\mathbf{W})\wedge n)}\right]$, using straightforward algebraic manipulations. Finally, we send $n\to\infty$ to obtain the desired result.
\hfill\Halmos
\end{proof}

To demonstrate state-space collapse in the sense of \cref{thm:SSC}, we consider the Lyapunov function $
H_i(\mathbf{w})=(\|\mathbf{w}^{(i)}_{\perp{\boldsymbol{\rho}^{\boldsymbol{\nu}}}}\|_2-\|\mathbf{w}^{(i)}_{\parallel{\boldsymbol{\rho}^{\boldsymbol{\nu}}}}\|_2\tan\varphi)^+
$.
Note that $H_i(\mathbf{w}) = 0$ whenever the workload vector $\mathbf{w}^{(i)}$ is in the cone $\mathcal{C}$. Intuitively, $H_i$ measures the distance from the workload vector to $\mathcal{C}$.

\restatably\begin{theorem}
\label{thm:bucket_SSC}
For any bucket $i$, under our SEB policy, if the assumptions in \cref{sec:assumptions} are met, then we have the following state-space collapse:
If $H_i(\mathbf{w})=(\|\mathbf{w}^{(i)}_{\perp{\boldsymbol{\rho}^{\boldsymbol{\nu}}}}\|_2-\|\mathbf{w}^{(i)}_{\parallel{\boldsymbol{\rho}^{\boldsymbol{\nu}}}}\|_2\tan\varphi)^+$ and $\eta_i=\frac{\tan\varphi}{(1+\tan\varphi)^2}\frac{e^{-2}}{\sqrt{n_c}}\frac{b_{i-1}}{b_{i}^2}$, then
\[
\E\left[e^{\eta_iH_i(\mathbf{W})}\right]\leq \frac{5(1+\tan\varphi)}{\tan\varphi}\frac{b_i}{b_{i-1}},
\]
bounding the distance from $w_i$ to the cone $\mathcal{C}$.
\end{theorem}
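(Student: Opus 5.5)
The plan is to apply \Cref{thm:SSC} to the Lyapunov function $V=H_i$ for the stationary workload process under SEB, whose existence is guaranteed by \Cref{thm:stability}. We need to verify the two drift conditions with explicit constants. We then plug those constants into the conclusion of \Cref{thm:SSC}, choosing $\eta_i$ admissibly, which should reproduce the stated bound. Throughout, write $\mathbf{w}=\mathbf{w}^{(i)}$, $p=\|\mathbf{w}_{\parallel{\boldsymbol{\rho}^{\boldsymbol{\nu}}}}\|_2$ and $q=\|\mathbf{w}_{\perp{\boldsymbol{\rho}^{\boldsymbol{\nu}}}}\|_2$, so that $H_i=(q-p\tan\varphi)^+$.

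\textbf{Condition (ii).} An arrival to bucket $i$ adds a vector $z\mathbf{e}_j$ with $z\le b_i$, and arrivals to other buckets do not change $H_i$. Since $H_i$ is a Lipschitz function of $\mathbf{w}$ with constant at most $1+\tan\varphi$, we get $|\Delta H_i|\le(1+\tan\varphi)b_i$ almost surely. Taking $\theta$ of order $1/((1+\tan\varphi)b_i)$ then gives a constant $D\le e$, or some similar absolute constant. Only the bucket-$i$ arrival rate matters here, and it contributes $\lambda\E[\Delta H_i]$ with $\lambda\,\E[\text{work}]=\rho^{\boldsymbol{\nu}}_{(i)}$ directed along ${\boldsymbol{\rho}^{\boldsymbol{\nu}}}$. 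To first order this changes $p$ by $+\|{\boldsymbol{\rho}^{(i)}}\|$ and leaves $q$ unchanged. So outside the cone the arrival drift is at most $-\|{\boldsymbol{\rho}^{(i)}}\|\tan\varphi$, plus a second-order curvature term. That term is controlled by $b_i^2/q$ and is exactly where the factor $b_{i-1}/b_i^2$ in $\eta_i$ will come from.

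\textbf{Condition (i), by cases.} Let $H_i>0$, so $\mathbf{w}\notin\mathcal{C}$.
\begin{enumerate}
\item \emph{Bucket $i$ receives no service.} The drift is the arrival drift just computed, which is negative and of order $\lambda_i b_{i-1}\tan\varphi$. The arrival rate into the bucket is at least $\rho^{(i)}/b_i$, which is why a factor $b_{i-1}/b_i$ appears.
\item \emph{Bucket $i$ is served with $\widetilde{\mathbf r}^{(i)*}$.} The service derivative of $q$ equals $-\langle \mathbf{w}_\perp/q,\mathbf{r}^*\rangle\le-\varepsilon_0$ by \Cref{prop:service_drift}. The service derivative of $p\tan\varphi$ is at most $\tan\varphi\max_{\mathbf r}\langle{\boldsymbol{\rho}^{\boldsymbol{\nu}}},\mathbf r\rangle/\|{\boldsymbol{\rho}^{\boldsymbol{\nu}}}\|$. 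By the second condition on $\varphi$ this is strictly less than $\varepsilon_0$, so $H_i$ decreases at a rate bounded away from $0$. Arrivals only help, apart from the curvature term.
\item \emph{Bucket $i$ is served partially.} The policy serves only the full option or nothing, so there is no intermediate case. Processor-sharing alternation between tied options falls under case 2 by convexity.
\end{enumerate}
The curvature error is at most $O(\lambda\E[z^2])/q\lesssim b_i/q$. Taking $K$ to be a multiple of $b_i^2/b_{i-1}$ divided by $\tan\varphi$ makes this error at most half the negative drift whenever $H_i>K$. On $\{H_i\le K\}$ the drift is bounded by some $\beta$ of order $\lambda b_i$.

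\textbf{Assembling the bound.} Plugging $\alpha,\beta,\theta,D,K$ into \Cref{thm:SSC} and choosing $\eta_i$ at, say, half the admissible maximum yields $\eta_i\asymp \tan\varphi\, b_{i-1}/((1+\tan\varphi)^2 b_i^2)$. The $e^{-2}$ and $\sqrt{n_c}$ factors absorb $D$ and the conversion between norms. With this choice $\eta_iK=O(1)$, giving $e^{\eta K}\le e^{2}$, and the ratio $\beta/\alpha$ is of order $b_i/(b_{i-1}\tan\varphi)$. This gives the stated $\tfrac{5(1+\tan\varphi)}{\tan\varphi}\tfrac{b_i}{b_{i-1}}$.

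\textbf{Main obstacle.} I expect the hardest part to be the nondifferentiability and curvature of $H_i$: at the cone boundary because of $(\cdot)^+$, and near $q=0$. \Cref{thm:SSC} needs a generator bound at every state. I would handle this by a second-order Taylor bound on the arrival jump, using convexity of the Euclidean norm $q$ and linearity of $p$ along ${\boldsymbol{\rho}^{\boldsymbol{\nu}}}$. Then I would check that $K\ge (1+\tan\varphi)b_i$ keeps the region near $q=0$ inside $\{H_i\le K\}$.
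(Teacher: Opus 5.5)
Your proposal is essentially the paper's proof. Both apply \Cref{thm:SSC} with $V=H_i$ and get condition (ii) from the Lipschitz bound on $H_i$. Both get condition (i) by splitting into accumulation mode and service mode. In accumulation mode the arrival drift is $-\|{\boldsymbol{\rho}^{\boldsymbol{\nu}}}^{(i)}\|_2\tan\varphi$ plus a curvature term $\lambda_i\E[\|\mathbf{Z}_{\perp\boldsymbol{\rho}^{\boldsymbol{\nu}}}\|_2^2]/(2H_i)$, which is absorbed by taking $K_i$ of order $b_i^2\cot\varphi/b_{i-1}$. In service mode, \Cref{prop:service_drift} and the second condition on $\varphi$ give $\langle\nabla H_i,\mathbf{r}^*\rangle>0$. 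The constants $\alpha_i,\beta_i,\theta_i,D,K_i$ are then substituted into the conclusion of \Cref{thm:SSC}.

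Three minor points:
\begin{itemize}
\item The upper bound you need on the curvature term comes from concavity of $\sqrt{\cdot}$ applied to $\|\mathbf{w}_{\perp\boldsymbol{\rho}^{\boldsymbol{\nu}}}\|_2^2$, as in \Cref{lemma:norm_change_bound}. Convexity of the norm only gives a lower bound, so it points the wrong way.
\item Your condition $K\ge(1+\tan\varphi)b_i$, which stops a single arrival from jumping into the cone, is a rigor detail the paper's proof skips.
\item With the stated $\eta_i$ you actually have $\eta_iK_i\le e^{-2}$, not merely $O(1)$ with $e^{\eta K}\le e^2$. That, together with the paper's choices $\theta_i=2/((1+\tan\varphi)\sqrt{n_c}b_i)$ and $D=e^2$, is what produces the constant $5\ge 4e^{e^{-2}}$.
\end{itemize}
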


Proof deferred to \cref{app:bucket_SSC}. The proof is a straightforward application of \cref{thm:SSC}, verifying conditions (i) and (ii) in this setting.

\section{SEB Analysis: Mean Response Time}
\label{sec:analysis_2}

\subsection{Bounding Waste}
\label{sec:waste_analysis}

In this section, we bound the expected waste $\E[\I_{\leq i} W_{\leq i}^\SEB]$ for an arbitrary bucket $i$.
Our bound is based on our cone-state-space-collapse (SSC) result, \cref{thm:bucket_SSC}.
We show in this section that whenever the workload vector is in the cone or near the cone,
waste must be small. From our cone-SSC result, this allows us to bound expected waste.
This bound on expected waste directly translates into a bound on expected work in each bucket, giving a bound on waiting time and response time.

\restatably\begin{theorem}\label{thm:eligibility_work_amount}
If $\mathbf{w}^{(i)}\in\mathcal{C}$ and if $\|\mathbf{w}^{(i)}\|_1$ satisfies
\[
\|\mathbf{w}^{(i)}\|_1\geq\frac{n_s\sqrt{n_c} b_{i}}{\frac{{\boldsymbol{\rho}^{\boldsymbol{\nu}}}_{\min}}{\|{\boldsymbol{\rho}^{\boldsymbol{\nu}}}\|_2}\cos\varphi-\sin\varphi}
\]
then any service option in $\mathcal{R}^{\mathcal{F}}$ can be fulfilled using jobs in bucket $i$.
\end{theorem}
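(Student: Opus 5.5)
The plan is to reduce the claim to a per-class lower bound on work. Then we convert work into job counts using the bucket's priority-index upper bound $b_i$.

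\textbf{Step 1: reduction to counts.} A service option $\mathbf{r}\in\mathcal{R}^{\mathcal{F}}$ serves at most $n_s$ jobs in total, so $r_j\le n_s$ for every class $j$. Every job in bucket $i$ has remaining priority index at most $b_i$, since its original index is below $b_i$ and the index only decreases. Hence the number of class-$j$ jobs in bucket $i$ is at least $w^{(i)}_j/b_i$. It therefore suffices to show that $w^{(i)}_j\ge n_s b_i$ for every class $j$; then every class has at least $n_s\ge r_j$ jobs and any option can be fulfilled.

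\textbf{Step 2: per-class lower bound from the cone.} Write $\mathbf{w}=\mathbf{w}^{(i)}=\mathbf{w}_{\parallel}+\mathbf{w}_{\perp}$, with $\mathbf{w}_{\parallel}=\|\mathbf{w}\|_2\cos\psi\,\hat{\boldsymbol{\rho}}$, where $\hat{\boldsymbol{\rho}}={\boldsymbol{\rho}^{\boldsymbol{\nu}}}/\|{\boldsymbol{\rho}^{\boldsymbol{\nu}}}\|_2$ and $\psi\le\varphi$ because $\mathbf{w}\in\mathcal{C}$. First rule out the case where $\mathbf{w}_\parallel$ points opposite to ${\boldsymbol{\rho}^{\boldsymbol{\nu}}}$: work is nonnegative, so $\inner{\mathbf{w}}{{\boldsymbol{\rho}^{\boldsymbol{\nu}}}}\ge0$. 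Then for each coordinate,
\[
w_j\ \ge\ \|\mathbf{w}\|_2\cos\psi\,\frac{\rho^{\boldsymbol{\nu}}_j}{\|{\boldsymbol{\rho}^{\boldsymbol{\nu}}}\|_2}-\|\mathbf{w}_\perp\|_2
\ \ge\ \|\mathbf{w}\|_2\Bigl(\frac{{\boldsymbol{\rho}^{\boldsymbol{\nu}}}_{\min}}{\|{\boldsymbol{\rho}^{\boldsymbol{\nu}}}\|_2}\cos\varphi-\sin\varphi\Bigr).
\]
This uses $|(\mathbf{w}_\perp)_j|\le\|\mathbf{w}_\perp\|_2=\|\mathbf{w}\|_2\sin\psi\le\|\mathbf{w}\|_2\sin\varphi$ and $\cos\psi\ge\cos\varphi$. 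The parenthesized constant is positive by the third cone condition, $\tan\varphi<{\boldsymbol{\rho}^{\boldsymbol{\nu}}}_{\min}/\|{\boldsymbol{\rho}^{\boldsymbol{\nu}}}\|_2$.

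\textbf{Step 3: norm conversion.} Use $\|\mathbf{w}\|_2\ge\|\mathbf{w}\|_1/\sqrt{n_c}$. Combined with the hypothesis on $\|\mathbf{w}^{(i)}\|_1$, this gives $w_j\ge n_s b_i$ for all $j$, and Step 1 then finishes the proof.

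\textbf{Main obstacle.} The only delicate point is the coordinatewise bound in Step 2, in particular justifying that the perpendicular component can reduce each coordinate by at most its $2$-norm $\|\mathbf{w}\|_2\sin\psi$. Everything else is routine. A secondary subtlety is the assumption that a job's remaining priority index never exceeds $b_i$ while it sits in bucket $i$; this holds because buckets are assigned by original index and remaining index only decreases.
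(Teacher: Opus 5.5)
Your proof is correct and follows the paper's argument essentially step for step. The paper obtains the theorem as a corollary of the same coordinatewise bound, $\min_j w^{(i)}_j \geq \|\mathbf{w}^{(i)}\|_1 \frac{1}{\sqrt{n_c}}\bigl(\frac{\rho^{\boldsymbol{\nu}}_{\min}}{\|\boldsymbol{\rho}^{\boldsymbol{\nu}}\|_2}\cos\varphi-\sin\varphi\bigr)$, derived from the parallel/perpendicular decomposition, the cone condition, and $\|\mathbf{w}\|_2\geq\|\mathbf{w}\|_1/\sqrt{n_c}$, followed by the same work-to-job-count conversion via the cap $b_i$ that you give in Step 1. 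Your explicit check that $\langle \mathbf{w}^{(i)}, \boldsymbol{\rho}^{\boldsymbol{\nu}}\rangle \geq 0$ is needed because $\mathcal{C}$ as defined is symmetric under negation, and it is a small point the paper uses without stating.
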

Proof deferred to \Cref{app:eligibility_work_amount}.

We now proceed to bound expected waste $\E[\I_{\leq i}W_{\leq i}]$, which is the key technical result of the paper. To ease the notation, we define two constants that do not scale with load.
\[
\gamma&=\frac{n_s\sqrt{n_c} b_{n_b}}{\frac{{\boldsymbol{\rho}^{\boldsymbol{\nu}}}_{\min}}{\|{\boldsymbol{\rho}^{\boldsymbol{\nu}}}\|_2}\cos\varphi-\sin\varphi}\\
\tau&=\frac{\tan\varphi}{\sqrt{n_c}}\left(\frac{{\boldsymbol{\rho}^{\boldsymbol{\nu}}}_{\min}}{\|{\boldsymbol{\rho}^{\boldsymbol{\nu}}}\|_2}\cos\varphi-\sin\varphi\right)
\]

\restatably\begin{theorem}
\label{thm:waste_bound}
Under SEB,
\[
\E[\I_{\leq i}W_{\leq i}^\SEB]\leq \left(A_1+A_2+A_2\log\left(\frac{1}{1-\rho^{\boldsymbol{\nu}}_{\leq i}}\right)\right)(1-\rho^{\boldsymbol{\nu}}_{\leq i})\frac{c^i-1}{c-1}+A_2(1-\rho^{\boldsymbol{\nu}}_{\leq i})i 
\]
where
\[
A_1 &:= \frac{e^2\sqrt{n_c}(1+\tan\varphi)^2}{\tau\tan^2\varphi}c^3b_0\\
A_2 &:= c\left(\frac{5(1+\tan\varphi)}{\tau\tan\varphi}c+1\right)\max\left\{\frac{e^2\sqrt{n_c}(1+\tan\varphi)^2}{\tau\tan\varphi}b_0c, n_s\frac{b_0}{\tau}, \gamma\right\}
\]
\end{theorem}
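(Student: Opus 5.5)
The plan is to decompose the waste bucket by bucket. Write $W_{\le i}=\sum_{j\le i}\|\mathbf{w}^{(j)}\|_1$. The key observation is that $\I_{\le i}>0$ can only happen when none of buckets $1,\dots,i$ is served at full capacity. If some bucket $j\le i$ fulfills its preferred option on $\mathcal{R}^{\boldsymbol{\nu}}$, its $\mathbf{k}$-mapped rate is exactly $1$ and all of it goes to jobs with original, hence remaining, priority index $\le b_i$; in that case $\I_{\le i}=0$. So on the event $\{\I_{\le i}>0\}$, every bucket $j\le i$ is ineligible, and since $\I_{\le i}\le 1$,
\[
\E[\I_{\le i}W_{\le i}]\le \sum_{j\le i}\E\bigl[\I_{\le i}\,\|\mathbf{w}^{(j)}\|_1\,\mathbf{1}(\text{bucket } j \text{ ineligible})\bigr].
\]
I would then bound each summand separately and sum the resulting geometric series in $b_j\propto c^{j}b_0$. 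This is where the $\frac{c^i-1}{c-1}$ factor (from $\sum_j b_j/b_0$) and the additive $A_2 i$ term (from per-bucket constants) come from.

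For a fixed $j$, ineligibility forces one of two things. Either $\mathbf{w}^{(j)}\notin\mathcal{C}$ in a quantitative sense, or $\|\mathbf{w}^{(j)}\|_1<\gamma_j$, the threshold of \Cref{thm:eligibility_work_amount}. I would prove a slightly stronger version of that theorem: if $\|\mathbf{w}^{(j)}\|_1\ge \gamma_j + (\text{const})\,H_j/\tau$, the bucket is still eligible. Distance $H_j$ outside the cone costs at most about $H_j\sqrt{n_c}/\tau$ of per-class work, which is what $\tau$ encodes. Thus on ineligibility,
\[
\|\mathbf{w}^{(j)}\|_1\lesssim \gamma_j + H_j/\tau .
\]
The $\gamma_j$ part contributes $\gamma_j\,\P(\I_{\le i}>0)$. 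By the work-rate balance of the $\le i$ subsystem, $\E[\I_{\le i}]=1-\rho^{\boldsymbol{\nu}}_{\le i}$, which yields the factor $(1-\rho^{\boldsymbol{\nu}}_{\le i})$.

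The $H_j/\tau$ part is the main obstacle, because $H_j$ must be paired with the inefficiency factor rather than just its mean. My first attempt is a truncation argument. For a level $L$, split $H_j\le L+H_j\mathbf{1}(H_j>L)$. The first piece gives $L\,\E[\I_{\le i}]=L(1-\rho^{\boldsymbol{\nu}}_{\le i})$. For the second piece, use \Cref{thm:bucket_SSC}: the exponential moment yields the tail $\E[H_j\mathbf{1}(H_j>L)]\lesssim \frac{1}{\eta_j}(\eta_jL+1)e^{-\eta_jL}\frac{5(1+\tan\varphi)}{\tan\varphi}c$. Choosing $L=\eta_j^{-1}\log\frac{1}{1-\rho^{\boldsymbol{\nu}}_{\le i}}$ makes this tail $O\bigl(\eta_j^{-1}(1-\rho^{\boldsymbol{\nu}}_{\le i})\log\frac{1}{1-\rho^{\boldsymbol{\nu}}_{\le i}}\bigr)$. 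Since $1/\eta_j = e^2\sqrt{n_c}(1+\tan\varphi)^2 b_j^2/(\tan\varphi\, b_{j-1})\propto c\,b_j$, this matches the $A_1$ and $A_2\log$ terms after summing over $j$.

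To handle the small-work regime, I would also cap $\|\mathbf{w}^{(j)}\|_1$ from below at $n_sb_0/\tau$. That accounts for the max in $A_2$ and supplies the $+A_2$ and $A_2 i$ slack.

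Finally, I would collect constants, absorbing ratios $b_j/b_{j-1}\le c$ into the $c$ and $c^3$ prefactors, and sum over $j\le i$ to obtain the stated bound.
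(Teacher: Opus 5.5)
Your proposal is correct and is essentially the paper's argument. The paper also sums over buckets $j\le i$, uses that $\I_{\le i}$ vanishes once some bucket $j\le i$ is eligible, combines \Cref{thm:eligibility_work_amount} inside the cone with \Cref{lem:H_lower_bound} ($H_j\ge\tau\|\mathbf{w}^{(j)}\|_1-n_sb_j$ outside it), and controls the tail with \Cref{thm:bucket_SSC}; its cutoff $B_{ij}=\gamma+n_sb_j/\tau+(\tau\eta_j)^{-1}\log\frac{1}{1-\rho^{\boldsymbol{\nu}}_{\le i}}$ on $\|\mathbf{W}^{(j)}\|_1$ is exactly your cutoff $\eta_j^{-1}\log\frac{1}{1-\rho^{\boldsymbol{\nu}}_{\le i}}$ on $H_j$ (and bounding the work pointwise by $H_j$ before truncating even avoids the paper's extra $\frac{5(1+\tan\varphi)}{\tan\varphi}c\,(\gamma+n_sb_j/\tau)$ terms), so you only need to state the pointwise bound precisely as $\|\mathbf{w}^{(j)}\|_1\le\gamma_j+(n_sb_j+H_j)/\tau$ on ineligibility (the $n_sb_j/\tau$ is not absorbed by $\gamma_j$, since $\tau\gamma_j=n_sb_j\tan\varphi<n_sb_j$) and keep the factor $\I_{\le i}$ itself rather than $\P(\I_{\le i}>0)$ when invoking $\E[\I_{\le i}]=1-\rho^{\boldsymbol{\nu}}_{\le i}$.
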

\begin{proof}{\textit{Proof.}}
Let $B_{ij}>\gamma$ be numbers that we will later specify, then
\[
\E[\I_{\leq i}W_{\leq i}^\SEB]&=\sum_{j=1}^i\E[\I_{\leq i}\|\mathbf{W}^{(j)}\|_1]\\
&=\sum_{j=1}^i\Big(\E[\I_{\leq i}\|\mathbf{W}^{(j)}\|_1\mathbf{1}(\|\mathbf{W}^{(j)}\|_1>B_{ij})]+\E[\I_{\leq i}\|\mathbf{W}^{(j)}\|_1\mathbf{1}(\|\mathbf{W}^{(j)}\|_1\leq B_{ij})]\Big)\\
&\stackrel{(a)}{\leq} \sum_{j=1}^i\Big(\E[\I_{\leq j}\|\mathbf{W}^{(j)}\|_1\mathbf{1}(\|\mathbf{W}^{(j)}\|_1>B_{ij})]+\E[\I_{\leq i}\|\mathbf{W}^{(j)}\|_1\mathbf{1}(\|\mathbf{W}^{(j)}\|_1\leq B_{ij})]\Big)\\
&\stackrel{(b)}{\leq} \sum_{j=1}^i\underbrace{\E[\I_{\leq j}\|\mathbf{W}^{(j)}\|_1\mathbf{1}(\|\mathbf{W}^{(j)}\|_1>B_{ij})]}_{\mathcal{T}}+\sum_{j=1}^iB_{ij}(1-\rho^{\boldsymbol{\nu}}_{\leq i})
\]
where (a) follows from the fact that for any $j\leq i$, $\mathcal{I}_{\leq i}\leq \mathcal{I}_{\leq j}$ and (b) follows from $\E[\I_{\leq i}]=1-\rho^{\boldsymbol{\nu}}_{\leq i}$. It remains to bound $\mathcal{T}$. First note that since we will set $B_{ij}>\gamma$, according to \Cref{thm:eligibility_work_amount}, if the workload vector is in the cone $\mathcal{C}$ and the total work is large, there is no waste,
\[
\E[\I_{\leq j}\|\mathbf{W}^{(j)}\|_1\mathbf{1}(\|\mathbf{W}^{(j)}\|_1\geq B_{ij})\mathbf{1}(\mathbf{W}^{(j)}\in\mathcal{C})]=0
\]
because bucket $j$ is eligible for service. This implies that
\[
\mathcal{T}=\E[\I_{\leq j}\|\mathbf{W}^{(j)}\|_1\mathbf{1}(\|\mathbf{W}^{(j)}\|_1\geq B_{ij})\mathbf{1}(\mathbf{W}^{(j)}\not\in\mathcal{C})]
\]
Define $\mathcal{E}_j=\{\mathbf{w}^{(j)}\in\mathbb{R}^{n_c}: \mathbf{w}^{(j)}_k\geq n_sb_{i}\text{ for all $k$}\}$ to be the region in the workload space such that if $\mathbf{w}^{(j)}$ is in the region, the system can fully serve all service options using only jobs from bucket $j$.

We now bound $\mathcal{T}$.
\begin{align*}
\mathcal{T}\stackrel{(a)}{\leq}&\,\E[\mathbf{1}(\text{bucket $j$  ineligible})\|\mathbf{W}^{(j)}\|_1\mathbf{1}(\|\mathbf{W}^{(j)}\|_1\geq B_{ij})\mathbf{1}(\mathbf{W}^{(j)}\not\in\mathcal{C})]\\
\stackrel{(b)}{\leq}&\,\E[\mathbf{1}(\mathbf{W}^{(j)}\not\in\mathcal{E}_j)\|\mathbf{W}^{(j)}\|_1\mathbf{1}(\|\mathbf{W}^{(j)}\|_1\geq B_{ij})\mathbf{1}(\mathbf{W}^{(j)}\not\in\mathcal{C})]\\
\stackrel{(c)}{\leq}&\,B_{ij}\cdot\P(\{\mathbf{W}^{(j)}\not\in\mathcal{E}_j\cup\mathcal{C}\}\cap\{\|\mathbf{W}^{(j)}\|_1\geq B_{ij}\})+\\
&\,\int_{u=B_{ij}}^\infty \P(\{\mathbf{W}^{(j)}\not\in\mathcal{E}_j\cup\mathcal{C}\}\cap\{\|\mathbf{W}^{(j)}\|_1\geq u)\})\,du\\
\stackrel{(d)}{\leq}&\, B_{ij}\cdot\P(H_j(\mathbf{W})\geq \tau B_{ij}-n_sb_j)+\int_{u=B_{ij}}^\infty\P(H_j(\mathbf{W})\geq \tau u-n_sb_j)\,du\\
\stackrel{(e)}{\leq}&\, B_{ij}\frac{5(1+\tan\varphi)}{\tan\varphi}\frac{b_j}{b_{j-1}}e^{\eta_j(n_sb_j-\tau B_{ij})}+\int_{u=B_{ij}}^\infty \frac{5(1+\tan\varphi)}{\tan\varphi}\frac{b_j}{b_{j-1}}e^{\eta_j(n_sb_j-\tau B_{ij})}\,du\\
=&\,\left(B_{ij}+\frac{1}{\tau\eta_j}\right)\frac{5(1+\tan\varphi)}{\tan\varphi}\frac{b_j}{b_{j-1}}e^{\eta_j(n_sb_j-\tau B_{ij})}
\end{align*}
(a) follows from $\mathcal{I}_{\leq j}\leq \mathbf{1}(\text{bucket $j$  ineligible})$ as a result of the following observations:
\begin{itemize}
    \item If bucket $j$ is ineligible for service, then the inequality holds because $\I_{\leq j}\leq 1$.
    \item If bucket $j$ is eligible for service, then the inequality holds because $\I_{\leq j}=0$.
\end{itemize}
(b) follows from the observation that the set on which bucket $j$ is ineligible is a subset of $\mathcal{E}_j^c$. (c) follows from tail-sum formula. (d) follows from the following lemma, the proof of which is deferred to \Cref{app:H_lower_bound}, and (e) follows from the state-space collapse result in \Cref{thm:bucket_SSC}.

\restatably\begin{lemma}
\label{lem:H_lower_bound}
    For any $\mathbf{w}^{(j)}\not\in\mathcal{C}\cup\mathcal{E}_i$, $H_j(\mathbf{w}) = \|\mathbf{w}^{(j)}_{\perp{\boldsymbol{\rho}^{\boldsymbol{\nu}}}}\|_2-\|\mathbf{w}^{(j)}_{\parallel{\boldsymbol{\rho}^{\boldsymbol{\nu}}}}\|_2\tan\varphi \geq \tau \|\mathbf{w}^{(j)}\|_1 - n_s b_j$.
\end{lemma}

We note that $B_{ij}>\gamma$ can be arbitrary. We set $B_{ij}$ as follows for any $j$ such that $j\leq i$:
\[
    B_{ij}&=\frac{1}{\tau\eta_j}\log\left(\frac{1}{1-\rho^{\boldsymbol{\nu}}_{\leq i}}\right)+\frac{n_sb_{j}}{\tau}+\gamma
\]
which gives the following bound on $\mathcal{T}$:
\[
\mathcal{T}\leq\left(B_{ij}+\frac{1}{\tau\eta_j}\right)\frac{5(1+\tan\varphi)}{\tan\varphi}\frac{b_{j}}{b_{j-1}}(1-\rho^{\boldsymbol{\nu}}_{\leq i})
\]

Since $b_{j}/b_{j-1} = c$, $b_{j}=b_0c^{j}$ and $\eta_j=\frac{\tan\varphi}{(1+\tan\varphi)^2}\frac{e^{-2}}{\sqrt{n_c}}\frac{b_{j-1}}{b_{j}^2}$. We combine all bounds above to obtain
\[
\E[\I_{\leq i} W_{\leq i}^{\SEB}] \leq&\, \sum_{j=1}^i \left(B_{ij}+\frac{1}{\tau\eta_j}\right)\frac{5(1+\tan\varphi)}{\tan\varphi}\frac{b_{j}}{b_{j-1}}(1-\rho^{\boldsymbol{\nu}}_{\leq i}) + (1-\rho^{\boldsymbol{\nu}}_{\leq i})\sum_{j=1}^iB_{ij}\\
\leq&\, \frac{e^2\sqrt{n_c}(1+\tan\varphi)^2}{\tau\tan^2\varphi}c^2b_0(1-\rho^{\boldsymbol{\nu}}_{\leq i})\sum_{j=1}^ic^j+\\
&\,\left(\frac{5(1+\tan\varphi)}{\tau\tan\varphi}c+1\right)(1-\rho^{\boldsymbol{\nu}}_{\leq i})\left[\frac{e^2\sqrt{n_c}(1+\tan\varphi)^2}{\tau\tan\varphi}b_0c\log\left(\frac{1}{1-\rho^{\boldsymbol{\nu}}_{\leq i}}\right)\sum_{j=1}^ic^j +\right.\\
&\,\left.n_s\frac{b_0}{\tau}\sum_{j=1}^ic^j+i\gamma\right]
\]
Using $A_1$ and $A_2$ defined in the theorem statement, we obtain
\begin{align*}
\E[\I_{\leq i}W_{\leq i}^\SEB ]\leq \left(A_1+A_2\log\left(\frac{1}{1-\rho^{\boldsymbol{\nu}}_{\leq i}}\right)+A_2\right)(1-\rho^{\boldsymbol{\nu}}_{\leq i})\frac{c^i-1}{c-1}+A_2(1-\rho^{\boldsymbol{\nu}}_{\leq i})i.
\end{align*}
\hfill\Halmos
\end{proof}

\subsection{Bounding Mean Response Time}
\label{sec:response_time_analysis}
In this section, we bound the mean response time under SEB. We begin by bounding the mean residence time, which is a straightforward application of Little's law. 
\begin{lemma}\label{lem:residence_time_bound}
Under SEB,
\[
    \E[T^\SEB_{\text{res}}] \leq \frac{1}{\lambda} n_b n_s n_c
\]
\end{lemma}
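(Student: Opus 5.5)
The plan is to apply Little's law to the subsystem consisting of jobs that are currently in their residence phase, meaning jobs that have received some service but have not yet completed. Every job passes through this subsystem exactly once, so its arrival rate equals the overall arrival rate $\lambda$. Each job stays in it for exactly its residence time $T^\SEB_{\text{res}}$. Little's law, applied to the stationary SEB system (stable by \Cref{thm:stability}), then gives
\[
\E[T^\SEB_{\text{res}}] = \frac{1}{\lambda}\E[N_{\text{res}}],
\]
where $N_{\text{res}}$ is the stationary number of jobs that have started service but not finished. It therefore suffices to show that $N_{\text{res}} \le n_b n_s n_c$ deterministically, at every moment.

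To bound $N_{\text{res}}$, I would fix a bucket $i$ and a class $k$ and show that at most $n_s$ class-$k$ jobs in bucket $i$ are ever partially served at once. Summing over the $n_b$ buckets and $n_c$ classes then gives the claim. The key fact is SEB's within-bucket rule: whenever bucket $i$ is served, the required class-$k$ jobs are taken in FCFS order among class-$k$ jobs of bucket $i$. Each service option serves at most $n_s$ jobs in total, so whenever bucket $i$ is served, the class-$k$ jobs in service are the first (oldest) at most $n_s$ class-$k$ jobs of that bucket.

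I would then argue by induction over event times (arrivals, departures, and changes of preferred service option) that the set of started class-$k$ jobs in bucket $i$ is always a prefix of the FCFS order of length at most $n_s$. A job only gets started when it is among the first $r_k \le n_s$ jobs in FCFS order. The FCFS order of existing jobs never changes: new arrivals go to the back, and completions remove jobs from within the started prefix. Hence started jobs always form an initial segment, and any job outside the first $n_s$ positions has never been served. The number of started-but-unfinished class-$k$ jobs in bucket $i$ is therefore at most $n_s$.

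The main obstacle is making this invariant rigorous when SEB rapidly alternates between equally preferred service options, which is the weighted processor-sharing behavior noted in \cref{sec:def-SEB}. There, different options may serve different numbers $r_k$ of class-$k$ jobs, so the set in service changes continually. The FCFS-prefix argument still handles this, because every option draws from the same head of the queue and the maximum prefix length is bounded by $n_s$ for every option. I would treat the alternation as a limit of fine time-sharing and note that the prefix property is preserved under it. A slightly cruder per-class-per-bucket count would also suffice, since only the constant $n_b n_s n_c$ matters.
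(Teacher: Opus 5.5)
Your proposal is correct and follows the paper's proof: Little's law applied to the residence subsystem, together with the observation that FCFS selection within each (class, bucket) pair allows at most $n_s$ jobs per class per bucket to have received service, which gives $N^\SEB_{\text{res}} \le n_b n_s n_c$. Your FCFS-prefix invariant, which works because a job's queue position never increases, spells out a step the paper states in one line, and it also covers the processor-sharing alternation.
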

\begin{proof}{\textit{Proof.}}
By Little's law, it suffices for us to bound the mean number of jobs in residence. Since we assumed FCFS for jobs in each class in each bucket, $n_s$ jobs per class per bucket have received service, and thus at most $n_b n_s n_c$ jobs in total have received service.
As a result, $\E[T^\SEB_{\text{res}}] = \frac{1}{\lambda} \E[N^\SEB_{\text{res}}] \leq \frac{1}{\lambda} n_b n_s n_c$.
\hfill\Halmos
\end{proof}

We now bound mean waiting time $\E[T^\SEB_{\text{wait}}]$, which is the dominant component of mean response time in heavy traffic. We first use WINE (\cref{prop:WINE}) to relate mean waiting time to the mean amount of work in each bucket.

\restatably\begin{lemma}
\label{lemma:WINE_for_wait_time}
For any policy $\pi$ that stabilizes the system,
\[
    \E[T^\pi_{\text{wait}}] \leq \frac{1}{\lambda} \sum_{i=1}^{n_b} \frac{c-1}{b_0 c^i} \E[W^\pi_{\leq i}]
        + \frac{1}{\lambda} \frac{\E[W^\pi]}{b_{n_b}}
\]
where $W^\pi_{\leq i}$ is work in the first $i$ buckets and $W^\pi$ is the total work in the system.
\end{lemma}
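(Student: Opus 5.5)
We apply \Cref{prop:WINE} not to the whole system but to the subsystem of jobs that have not yet received any service. Jobs enter this subsystem at rate $\lambda$ and leave it exactly when they first enter service. The time a job spends there is its waiting time, and while waiting its remaining priority index equals its original priority index. The WINE identity is a sample-path statement that each job contributes $\int_0^\infty \mathbf{1}(\text{its remaining index}\le x)\cdot(\text{remaining index})/x^2\,dx$ integrated over its time in the subsystem. This integrand equals $1$ per unit time for a job whose index $z$ is held fixed, since $\int_z^\infty z/x^2\,dx=1$, and this is exactly why it measures time in the subsystem. Hence
\[
\E[T^\pi_{\text{wait}}] = \frac{1}{\lambda}\int_0^\infty \frac{\E[W^\pi_{\text{wait},\le x}]}{x^2}\,dx,
\]
where $W^\pi_{\text{wait},\le x}$ is the total (original equals remaining) priority index of waiting jobs with index at most $x$. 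I would verify this form by rerunning the WINE argument of \Cref{app:WINE} with time-in-subsystem in place of response time. As noted in the roadmap, $W^\pi_{\text{wait},\le x}\le W^\pi_{\origduration\le x}$.

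Next we discretize the integral over the geometric buckets. For $x\in[b_{i-1},b_i)$, every job with original index at most $x<b_i$ lies in buckets $1,\dots,i$. Therefore $W^\pi_{\origduration\le x}\le W^\pi_{\le i}$, and
\[
\int_{b_{i-1}}^{b_i}\frac{dx}{x^2}=\frac{1}{b_{i-1}}-\frac{1}{b_i}=\frac{c-1}{b_0c^{i}},
\]
using $b_i=b_0c^i$. For $x<b_0=z_{\min}$ the integrand vanishes. For $x\ge b_{n_b}=z_{\max}$ we bound $W_{\le x}$ by the total work $W^\pi$ and use $\int_{b_{n_b}}^\infty x^{-2}dx=1/b_{n_b}$. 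Summing over $i=1,\dots,n_b$ gives the claimed bound.

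A technical caveat concerns the last bucket. It is truncated at $z_{\max}$, so its ratio $b_{n_b}/b_{n_b-1}$ may be smaller than $c$. In that case $\int_{b_{n_b-1}}^{b_{n_b}}x^{-2}dx\le (c-1)/(b_0c^{n_b})$, so the stated coefficient is still an upper bound, because $1/b_{n_b-1}-1/y$ increases in $y$.

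The main obstacle I expect is justifying WINE for the waiting subsystem. Service interruptions do not matter here, since a job leaves the subsystem at first service, but we must ensure that stability of $\pi$ implies the subsystem's mean work is finite, or at least that the identity holds in the extended sense. Finiteness should follow from the subsystem's work being dominated by the total work.
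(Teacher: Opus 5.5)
Your proposal is correct and follows essentially the same route as the paper. You apply WINE to the subsystem of not-yet-served jobs, where remaining index equals original index, and dominate by $W^\pi_{\origduration\le x}$. You split the integral at $b_0,\dots,b_{n_b}$, bound by $W^\pi_{\le i}$ on each $[b_{i-1},b_i]$ using $\int_{b_{i-1}}^{b_i}x^{-2}\,dx=\frac{c-1}{b_0c^i}$, and bound the tail $\int_{b_{n_b}}^\infty x^{-2}\,dx=1/b_{n_b}$ by the total work. Your remark on the truncated last bucket is a detail the paper glosses over (it implicitly takes $b_{n_b}=b_0c^{n_b}$), and your monotonicity argument handles it correctly.
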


Proof deferred to \Cref{app:WINE_for_wait_time}.
Now, we use \cref{lem:residence_time_bound,lemma:WINE_for_wait_time}, as well as the work decomposition law \cref{prop:WDL} and our bound on waste \cref{thm:waste_bound},
to bound response time relative to the expected work in the resource pooled M/G/1 at or below each bucket cutoff.
To simplify our bound, we apply \cref{lemma:WINE_for_wait_time} to the resource pooled $\PSJF$ policy, and incorporate a bound on its response time into our bound on $SEB$.

\restatably\begin{proposition}
\label{thm:response_time_bound}
Under SEB,
\[
    \E[T^\SEB]\leq c \E[T^\PSJF] +\frac{A}{\lambda} \log\left(\frac{1}{1-\rho^{\boldsymbol{\nu}}}\right) \left(2n_b + \frac{1}{c - 1}\right) + \frac{1}{\lambda} n_b n_s n_c
\]
where
\[
    A = \frac{A_1 + 3A_2}{b_0} \frac{b_{n_b} - b_0}{b_0},
\]
and where $A_1$ and $A_2$ are defined in \Cref{thm:waste_bound}.
\end{proposition}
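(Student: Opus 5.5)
We split $\E[T^\SEB]=\E[T^\SEB_{\text{wait}}]+\E[T^\SEB_{\text{res}}]$ and bound the residence term by \Cref{lem:residence_time_bound}; this produces the final $\frac{1}{\lambda}n_bn_sn_c$ term. For the waiting time we apply \Cref{lemma:WINE_for_wait_time} to $\pi=\SEB$, which gives
\[
\E[T^\SEB_{\text{wait}}]\leq \frac{1}{\lambda}\sum_{i=1}^{n_b}\frac{c-1}{b_0c^i}\E[W^\SEB_{\leq i}]+\frac{1}{\lambda}\frac{\E[W^\SEB]}{b_{n_b}} .
\]
We then rewrite each $\E[W^\SEB_{\leq i}]$ (and $\E[W^\SEB]=\E[W^\SEB_{\leq n_b}]$) with the Work Decomposition Law \Cref{prop:WDL}:
\[
\E[W^\SEB_{\leq i}]=\E[W^\mgone_{\leq i}]+\frac{\E[\I_{\leq i}W^\SEB_{\leq i}]}{1-\rho^{\boldsymbol{\nu}}_{\leq i}} .
\]
This splits the waiting-time bound into an M/G/1 part and a waste part.

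\textbf{M/G/1 part.} We want to show that $\frac{1}{\lambda}\sum_i \frac{c-1}{b_0c^i}\E[W^\mgone_{\leq i}]+\frac{1}{\lambda}\E[W^\mgone]/b_{n_b}$ is at most $c\,\E[T^\PSJF]$.
\begin{itemize}
\item Under resource-pooled PSJF-1, jobs of original priority index $\le b_i$ have strict priority over the rest. Hence the work of such jobs in the PSJF-1 system equals $W^\mgone_{\leq i}$ in distribution: it is a work-conserving M/G/1 on that job subset. So $\E[W^\mgone_{\leq i}]=\E[W^{\PSJF}_{\text{orig}\le b_i}]$.
\item WINE (\Cref{prop:WINE}) in its original-index form gives $\E[T^\PSJF]=\frac{1}{\lambda}\int_0^\infty \E[W^\PSJF_{\text{orig}\le x}]x^{-2}\,dx$; this is the same identity that underlies \Cref{lemma:WINE_for_wait_time}.
\item Since $\E[W^\PSJF_{\text{orig}\le x}]$ is nondecreasing in $x$, on $[b_{i-1},b_i)$ it is at least its value at $b_{i-1}$. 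We have $\int_{b_{i-1}}^{b_i}x^{-2}dx=\frac{c-1}{b_0c^{i}}$, and the weight $\frac{c-1}{b_0c^i}$ attached in \Cref{lemma:WINE_for_wait_time} to bucket $i$ equals $c\int_{b_i}^{b_{i+1}}x^{-2}dx$. Comparing the discrete sum with the integral therefore loses at most a factor $c$.
\item The tail term $\E[W^\mgone]/b_{n_b}$ corresponds to $\int_{b_{n_b}}^\infty$ and is handled the same way.
\end{itemize}
The exact index bookkeeping (off-by-one in buckets) is routine, and the conclusion of this part is the $c\,\E[T^\PSJF]$ term.

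\textbf{Waste part.} Dividing \Cref{thm:waste_bound} by $1-\rho^{\boldsymbol{\nu}}_{\leq i}$ cancels that factor, leaving
\[
\frac{\E[\I_{\leq i}W^\SEB_{\leq i}]}{1-\rho^{\boldsymbol{\nu}}_{\leq i}}\le \Bigl(A_1+A_2+A_2\log\tfrac{1}{1-\rho^{\boldsymbol{\nu}}_{\leq i}}\Bigr)\frac{c^i-1}{c-1}+A_2 i .
\]
We bound $\log\frac{1}{1-\rho^{\boldsymbol{\nu}}_{\leq i}}\le\log\frac{1}{1-\rho^{\boldsymbol{\nu}}}$, and, since $1\le\log\frac{1}{1-\rho^{\boldsymbol{\nu}}}$ in heavy traffic (or after adjusting constants), we absorb the constants $A_1+A_2$ into the logarithmic term. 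Multiplying by the weights $\frac{c-1}{b_0c^i}$ gives:
\begin{itemize}
\item the geometric terms contribute at most $\frac{1}{b_0}(A_1+2A_2)\log\frac{1}{1-\rho^{\boldsymbol{\nu}}}$ per bucket, i.e.\ $n_b$ such terms;
\item the $A_2 i$ terms contribute $\frac{(c-1)A_2}{b_0}\sum_i i c^{-i}\le\frac{A_2}{b_0}\frac{c}{c-1}$, which yields the $\frac{1}{c-1}$ term;
\item the last term $\E[W^\SEB]/b_{n_b}$ uses $i=n_b$ with weight $1/b_{n_b}$, and $\frac{c^{n_b}-1}{c-1}\frac{1}{b_{n_b}}\lesssim \frac{1}{b_0}\frac{1}{c-1}$, since $b_{n_b}\approx b_0c^{n_b}$.
\end{itemize}
Collecting everything and bounding crudely with the factor $\frac{b_{n_b}-b_0}{b_0}\ge c-1$ to unify the constants into $A$ gives the $\frac{A}{\lambda}\log\frac{1}{1-\rho^{\boldsymbol{\nu}}}(2n_b+\frac{1}{c-1})$ form.

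\textbf{Main obstacle.} The step I expect to be delicate is the comparison of the discrete weighted sum with $c\,\E[T^\PSJF]$. It requires identifying $W^\mgone_{\leq i}$ with PSJF-1's original-index-$\le b_i$ work and aligning the bucket boundaries so the loss is exactly a factor $c$. The last, truncated bucket $[b_{n_b-1},z_{\max}]$ with ratio possibly less than $c$ also has to be checked. The constant bookkeeping to reach exactly $A$ should be routine, if loose.
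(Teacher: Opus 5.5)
Your proposal is correct and follows the paper's proof. Residence time comes from \Cref{lem:residence_time_bound}, and waiting time from \Cref{lemma:WINE_for_wait_time} plus the work decomposition law and \Cref{thm:waste_bound}. The comparison with $c\,\E[T^{\PSJF}]$ works by shifting the PSJF integral up by one bucket, which is exactly the paper's change of variables $x\mapsto cx$; your bound $\sum_i i(c-1)c^{-i}\le c/(c-1)$ in place of the paper's $\le n_b$ is an immaterial variation.

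Two small corrections. First, the ``original-index WINE'' is only an inequality, $\E[T^{\PSJF}]\ge\frac{1}{\lambda}\int_0^\infty \E[W^{\PSJF}_{\text{orig}\le x}]\,x^{-2}\,dx$, obtained from \Cref{prop:WINE} and $W_{\text{rem}\le x}\ge W_{\text{orig}\le x}$; fortunately that is the direction you need. Second, the final absorption into $A$ does not come from $(b_{n_b}-b_0)/b_0\ge c-1$. It rests on $\log\frac{1}{1-\rho^{\boldsymbol{\nu}}}\ge 1$ together with a constant lower bound on $(b_{n_b}-b_0)/b_0$, which is the same implicit looseness as in the paper's last inequality.
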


Proof deferred to \Cref{app:response_time_bound}.
Now, we examine \cref{thm:response_time_bound} in the heavy traffic:

\restate*\ref{thm:response_time_bound_PSJF}
\begin{proof}{\textit{Proof.}}
Recall from \Cref{sec:def-SEB} that we set
\[
    c = 1 + \frac{1}{1 + \log\left(\frac{1}{1-\rho^{\boldsymbol{\nu}}}\right)}.
\]
Since $b_{n_b}= b_0c^{n_b}$, and $b_0=z_{\min}$ and $b_{n_b}=z_{\max}$ are constants not depending on load, we have
\[
    n_b = \frac{\log\frac{z_{\max}}{z_{\min}}}{\log c} = \Theta\left(\log\left(\frac{1}{1-\rho^{\boldsymbol{\nu}}}\right)\right) \quad \text{as } \rho \to 1.
\]

The theorem follows from \Cref{thm:response_time_bound} by noting that as ${\boldsymbol{\rho}^{\boldsymbol{\nu}}}\to\boldsymbol{\nu}$, 
$c$ converges $1$, and $A$ converges to a fixed constant that does not scale with $\rho$.
\hfill\Halmos
\end{proof}

\subsection{Heavy-traffic Optimality}
\label{sec:proof-optimality}

\restatably\begin{theorem}
\label{thm:response-time_lower_bound}
Under any system load $\rho < 1$, and for any scheduling policy $\pi$,
$\E[T^\pi] \geq \E[T^\SRPT]$.
\end{theorem}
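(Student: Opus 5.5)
The plan is to prove the statement in two stages. First, reduce from the generalized switch to the $\mathbf{k}^{\boldsymbol{\nu}}$-mapped resource-pooled single-server queue. Second, prove directly, by a sample-path argument, that single-server SRPT minimizes mean response time over \emph{all} work-allocating policies there, including the generalized-processor-sharing policies the reduction produces.

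\textbf{Stage 1 (reduction).} The vector $\mathbf{k}^{\boldsymbol{\nu}}$ is valid, as noted in \cref{sec:LB-generalization}. For any stabilizing $\pi$, \cref{thm:rp-lower-bound} then gives a GPS policy $\pi'$ in the $\mathbf{k}^{\boldsymbol{\nu}}$-mapped resource-pooled M/G/1 with $\E[T^\pi]=\E[T^{\pi'}]$. Under this coupling, $\pi'$ serves, at every instant, a set of jobs at rates summing to at most $1$. If $\pi$ is unstable, then $\E[T^\pi]=\infty$ and there is nothing to prove. It therefore suffices to show $\E[T^{\pi'}]\ge\E[T^\SRPT]$ for every policy $\pi'$ in a single-server queue whose instantaneous service rates are nonnegative and sum to at most $1$.

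\textbf{Stage 2 (SRPT optimality, pathwise).} I couple the two single-server systems on the same arrival sequence of arrival times and priority indices. At each time $t$, let $R^{\pi'}_{(1)}(t)\le R^{\pi'}_{(2)}(t)\le\cdots$ denote the sorted remaining priority indices of jobs present under $\pi'$, padded with $+\infty$, and define $R^\SRPT_{(m)}(t)$ in the same way. The key claim is the majorization
\[
\sum_{m=1}^{\ell} R^\SRPT_{(m)}(t)\;\le\;\sum_{m=1}^{\ell} R^{\pi'}_{(m)}(t)\qquad\text{for all } \ell\ge 1,\ t\ge 0 .
\]
I would prove it by tracking $F_\ell(t)=\sum_{m\le \ell}R_{(m)}(t)$ between arrival epochs. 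Under SRPT, the full rate $1$ goes to the smallest remaining job, so $F^\SRPT_\ell$ decreases at rate $1$ for every $\ell$ whenever the system is nonempty. Under $\pi'$, $F^{\pi'}_\ell$ decreases at rate at most $1$. The reason is that fractional service to the jobs outside the $\ell$ smallest does not reduce $F_\ell$, and reordering among the sorted list does not increase the decrease rate. This needs a short argument about the derivative of the sum of the $\ell$ smallest values of a set of decreasing functions.

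Two boundary cases remain. When $F^\SRPT_\ell$ involves fewer than $\ell$ finite terms, completions in SRPT only shrink the list. Indeed, once $F^\SRPT_\ell$ reaches its value with a job removed, that job has remaining size $0$, so the inequality is preserved. At an arrival of size $s$, both sorted lists receive the same element, and inserting a common value preserves partial-sum majorization. This is a standard lemma: $\sum_{m\le\ell}$ of the $\ell$ smallest values of $A\cup\{s\}$ equals $\min\bigl(F^A_\ell,\,F^A_{\ell-1}+s\bigr)$, which is monotone in the $F^A$'s.

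From the majorization, the number of jobs satisfies $N^\SRPT(t)\le N^{\pi'}(t)$ for all $t$. To see this, take $\ell=N^{\pi'}(t)+1$: the right-hand side is then infinite unless $N^\SRPT(t)\le N^{\pi'}(t)$, and the padded $+\infty$ entries make this comparison precise. Time-averaging the pathwise inequality $N^\SRPT(t)\le N^{\pi'}(t)$ and applying Little's law with the common arrival rate $\lambda$ yields $\E[T^\SRPT]\le\E[T^{\pi'}]=\E[T^\pi]$.

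\textbf{Main obstacle.} I expect the main obstacle to be the continuous-time step for a general GPS policy $\pi'$. Its rates can change arbitrarily often, so sorted positions can swap, and remaining sizes can tie. I would handle this by proving the rate bound $-\tfrac{d}{dt}F^{\pi'}_\ell(t)\le 1$ in integrated form, $F_\ell(t_2)\ge F_\ell(t_1)-(t_2-t_1)$ on arrival-free intervals. This follows because each $R_j$ is nonincreasing, $\sum_j$ of the decreases is at most $t_2-t_1$, and $F_\ell$ is the minimum over $\ell$-subsets of sums of these functions. Under SRPT the same quantity decreases by exactly $t_2-t_1$ until the relevant jobs empty. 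Together these two facts give the comparison without differentiability issues.
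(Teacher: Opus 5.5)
Your Stage 1 is the paper's argument. The paper then cites Schrage for the optimality of single-server SRPT over all policies, GPS ones included. So the only new content is your Stage 2, and its key claim is false as you formulated it.

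Take $R_{(m)}(t)$ to be the sorted remaining sizes of jobs \emph{present}, padded with $+\infty$. Let a job of size $1$ arrive at time $0$ and a job of size $2$ arrive just after, and let $\pi'$ idle (or processor-share). At $t=1.2$, SRPT has finished the small job and holds $\{1.8\}$. Meanwhile $\pi'$ holds $\{1,2\}$ (resp.\ roughly $\{0.4,1.4\}$), so $F^{\mathrm{SRPT}}_1(t)=1.8>F^{\pi'}_1(t)$.

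The failure is exactly in the completion step you passed over. When SRPT deletes a job of remaining size $0$ and re-pads with $+\infty$, $F^{\mathrm{SRPT}}_\ell$ jumps \emph{up} by $R^{\mathrm{SRPT}}_{(\ell+1)}$, or to $+\infty$ if fewer than $\ell$ jobs remain. Nothing stops this jump from overtaking $F^{\pi'}_\ell$.

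Your extraction of $N^{\mathrm{SRPT}}\le N^{\pi'}$ is also inverted. At $\ell=N^{\pi'}(t)+1$ the right-hand side is $+\infty$ from the padding alone, so that instance says nothing. The instance $\ell=N^{\pi'}(t)$ of your inequality would actually force $N^{\mathrm{SRPT}}(t)\ge N^{\pi'}(t)$.

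The standard repair (Smith's proof of Schrage's theorem) keeps every job that has arrived by time $t$ in the list, with completed jobs recorded at remaining size $0$. The rest of your argument then goes through:
\begin{itemize}
\item Both systems carry multisets of equal size, so completions cause no jumps.
\item Your insertion identity $\min(F_\ell,F_{\ell-1}+s)$ handles arrivals.
\item Between arrivals, $F^{\mathrm{SRPT}}_\ell$ decreases at rate exactly $1$ whenever it is positive. SRPT serves the smallest positive entry, which then lies among the $\ell$ smallest.
\item Your integrated bound gives rate at most $1$ for $F^{\pi'}_\ell$, and once $F^{\mathrm{SRPT}}_\ell=0$ the inequality is trivial.
\end{itemize}
Now suppose $\pi'$ has completed $d$ jobs by time $t$. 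Then $F^{\pi'}_d(t)=0$ forces $F^{\mathrm{SRPT}}_d(t)=0$, so SRPT has completed at least $d$ jobs. This gives $N^{\mathrm{SRPT}}(t)\le N^{\pi'}(t)$, and your Little's-law step finishes the proof. Alternatively, cite Schrage as the paper does.
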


\begin{proof}{\textit{Proof.}}
We consider the resource-pooled M/G/1 queue defined in \cref{sec:lower-bound}. Since service rate in the original system after the priority index conversion is no more than 1, any scheduling policy in our original multi-server system with service constraints can be realized in the resource-pooled M/G/1 queue. The result then follows from the optimality of single-server SRPT \citep{schrage1968proof}.
\hfill\Halmos
\end{proof}

\restate*\ref{thm:heavy-traffic_optimality}
\begin{proof}{\textit{Proof.}}
Because the job priority index distribution is bounded, by Theorem 1 in \citet{lin2011heavy},
\[
\E[T^{\PSJF}] \geqslant \E[T^{\SRPT}] = \Theta\left(\frac{1}{1-\rho^{\boldsymbol{\nu}}}\right)
\]
Thus, it follows from \Cref{thm:response_time_bound_PSJF}
\[
\lim_{{\boldsymbol{\rho}^{\boldsymbol{\nu}}}\to\boldsymbol{\nu}}\frac{\E[T^\SEB]}{\E[T^\PSJF]}=\lim_{{\boldsymbol{\rho}^{\boldsymbol{\nu}}}\to\boldsymbol{\nu}}\left(c+\frac{\Theta\left(\log^2\left(\frac{1}{1-\rho^{\boldsymbol{\nu}}}\right)\right)}{\E[T^\PSJF]}\right)=1
\]

The last step in our optimality proof relies on the heavy-traffic optimality of PSJF-1 for bounded job priority index distributions:
$
\lim_{{\boldsymbol{\rho}^{\boldsymbol{\nu}}}\to\boldsymbol{\nu}}\frac{\E[T^\PSJF]}{\E[T^\SRPT]}=1
$.
This is a standard result that follows from the following waiting time inequality: $\E[W^\PSJF] \le \E[W^\SRPT]$ \citep{harchol2013performance}. For a detailed proof, see Theorem 5 in \citet{grosof2018srpt}. By \cref{thm:response-time_lower_bound}, our policy is thus heavy traffic optimal.
\hfill\Halmos
\end{proof}

\section{Practical variant: Dynamic SEB}
\label{sec:dynamic-SEB}

The SEB policy, as we have defined in \Cref{sec:def-SEB}, prioritizes analytical tractability over practicality. Indeed, SEB's performance is unsatisfactory at most practical loads because it idles buckets when there are not enough jobs to fulfill the preferred service option. When the system is underloaded, many nonempty buckets idle under SEB. In this section, we introduce three SEB-inspired policies that improve over SEB's performance, which we can \emph{Dynamic SEB-Best}, \emph{Dynamic SEB-good-enough}, and \emph{Dynamic SEB-any}. We call the original version \emph{Static SEB} to disambiguate.

One of static SEB's key insights is constructing priority index buckets so that jobs with small priority indices are prioritized. To avoid idling and skipping small jobs when there are too few jobs in the buckets, the dynamic SEB policies consider a sequence of \emph{dynamic buckets} in place of the original buckets in static SEB. 

\emph{Dynamic Buckets:} We construct these buckets as follows:
\begin{enumerate}
    \item Sort all jobs in the system in increasing remaining $\mathbf{k}^{\boldsymbol{\nu}}$-mapped priority index order.
    \item Bucket $i$ contains exactly jobs $1$ to $i$ in the sorted list.
\end{enumerate}

We highlight two modifications to static SEB: (1) dynamic buckets are constructed based on the \emph{remaining priority indices} instead of the original priority indices. (2) dynamic buckets are nested, not disjoint.

Another important insight from static SEB is that job classes must be equalized, by serving the preferred service option in a given bucket, as defined in \cref{sec:def-SEB}. We thus define the Dynamic SEB-best policy similarly:

\begin{definition}[Dynamic SEB-best]
    The scheduler examines dynamic buckets in increasing order. For each dynamic bucket $i$, the scheduler calculates the preferred facet service option $\mathbf{r}^*=\argmax_{r\in\mathcal{F}^{\boldsymbol{\nu}}}\inner{\mathbf{w}_{\perp{\boldsymbol{\rho}^{\boldsymbol{\nu}}}}}{\mathbf{r}}$.
    If that preferred service option can be served with the jobs in dynamic bucket $i$, the scheduler then serves jobs in the current bucket according to $r^*$, prioritizing jobs of least remaining priority index. Otherwise, the scheduler examines the next dynamic bucket.
\end{definition}

However, even restricting to the single most preferred service option can already be too restrictive. Note from \Cref{prop:service_drift} that the key property of the service option chosen by static SEB was that it is a facet option which achieves $\mathbf{r}$ satisfying $\frac{\inner{\mathbf{w}_{\perp{\boldsymbol{\rho}^{\boldsymbol{\nu}}}}}{\mathbf{r}}}{\|\mathbf{w}_{\perp{\boldsymbol{\rho}^{\boldsymbol{\nu}}}}\|_2}\geq\varepsilon_0$. We thus define Dynamic SEB-good enough accordingly:

\begin{definition}[Dynamic SEB-good-enough]
    At initialization, the scheduler first chooses a configuration parameter $\varepsilon_0>0$.
    At each point in time, the scheduler examines the dynamic buckets in increasing order and stops at the first bucket where a facet service option $\mathbf{r}$ satisfying $\frac{\inner{\mathbf{w}_{\perp{\boldsymbol{\rho}^{\boldsymbol{\nu}}}}}{\mathbf{r}}}{\|\mathbf{w}_{\perp{\boldsymbol{\rho}^{\boldsymbol{\nu}}}}\|_2}\geq\varepsilon_0$ becomes feasible. The scheduler serves the jobs in the current bucket according to that service option, prioritizing jobs of least remaining priority index.
\end{definition}

Dynamic SEB-good-enough does not restrict to the best equalizing option. Instead, it looks for a service option with some non-vanishing drift towards the load vector ${\boldsymbol{\rho}^{\boldsymbol{\nu}}}$, although not necessarily the one with the strongest drift. We note that such a service option is good enough to incur a state-space collapse towards the cone described in \Cref{sec:cone_SSC}, in the static SEB context. 

We can further relax the equalizing constraint by considering any service option on the limiting facet that is feasible, regardless of whether that option incurs a drift towards ${\boldsymbol{\rho}^{\boldsymbol{\nu}}}$. We define Dynamic SEB-any as follows

\begin{definition}[Dynamic SEB-any]
The scheduler examines dynamic buckets in increasing order and stops at the first bucket where a facet service option becomes feasible. This variant does not try equalizing among classes at all. It simply adopts the first feasible facet service option.
\end{definition} 

Among all three dynamic SEB policies, Dynamic SEB-any is the least restrictive in terms of finding the best equalizing option. It chooses whichever facet option that becomes available first, disregarding equalization altogether. We will see in \Cref{sec:simulation} that in some cases, choosing the first feasible facet options is better than waiting for an equalizing option.

Note that even with dynamic buckets, the scheduler may iterate through every bucket without finding a sufficiently equalizing service option, especially if the total number of jobs in the system is small. In this case, we define a fallback policy to further reduce idling.

\emph{Fallback Policy:} If the scheduler iterates through all of the dynamic bucket without finding a sufficiently equalizing service option, then it chooses from all feasible service options (not just ones on the limiting facet) an option that maximizes the total $\mathbf{k}^{\boldsymbol{\nu}}$-mapped service rate.

Also note that the above scheduling rules may produce ties where several service options are equally advantageous. In such cases, we use the following tie-breaking rule.

\emph{Tie-breaking Rule:} If the scheduler needs to pick between multiple service options which are equally preferable, it picks the one with the largest $\inner{\mathbf{w}_{\perp {\boldsymbol{\rho}^{\boldsymbol{\nu}}}}}{\mathbf{r}}$.

\section{Numerical Evaluation via Simulation}
\label{sec:simulation}
In this section, we thoroughly compare our dynamic SEB's performances against a range of prior policies and heuristics via simulation.
Our evaluation shows that each of our dynamic SEB exhibits excellent performance that greatly exceeds all prior policies and heuristics.
Moreover, we demonstrate that each of our dynamic SEB maintains this excellent performance for a variety of generalized switch settings that are of great interest to both the queueing and computer systems communities: compatibility scheduling, multi-server job scheduling, and multi-resource job scheduling.

\Cref{sec:benchmark_policies} introduces comparison policies. \Cref{sec:empirical-compat, sec:empirical-msj, sec:empirical-mrj} evaluate dynamic SEB policies in compatibility, MSJ, and MRJ scheduling settings, respectively.

\subsection{Benchmark Policies}
\label{sec:benchmark_policies}
We now list several pre-existing generalized switch policies and heuristics, which we use as benchmarks. Because prior work on duration-aware scheduling in generalized switch is scarce, we use simple adaptations of previous policies that might not be duration-aware or general enough for the generalized switch.

\begin{description}
    \item[MaxWeight (MW):] The scheduler computes the inner products $\inner{\mathbf{r}}{\mathbf{q}}$ between feasible service options $\{\mathbf{r}\}$ and queue lengths $\mathbf{q}$ (number of jobs in each class). The scheduler chooses the service option that maximizes this inner product. If there are more jobs in a class than required by the scheduler, jobs are picked in FCFS order.
    \item[MaxWeight-Queue SRPT (MW-Queue SRPT):] This is the same as MaxWeight-Queue except that jobs in a class are chosen in SRPT order, where SRPT is defined based on job priority index, using the limiting facet.
    \item[MaxWeight-Work SRPT (MW-Work SRPT):] When a job arrives or departs, the inner products $\inner{\widetilde{\mathbf{r}}}{\mathbf{w}}$ between work completion rates $\{\widetilde{\mathbf{r}}\}$ and work vector $\mathbf{w}$ (remaining work in each class) are computed. The scheduler chooses the service option that maximizes this inner product. Jobs in the same class are picked in SRPT order.
    \item[ServerFilling-SRPT:] This is an MSJ-specific policy introduced by \cite{grosof2022optimal}. Jobs are sorted in least remaining area order. The candidate set is defined to consist of the minimal initial sequence of jobs with total server need at least $n_s$, and jobs are served in most-server-need-first order among the candidate set, tie-broken by lower remaining area. The policy was proven to be heavy-traffic optimal if all server needs perfectly divide all larger server needs, and the total number of servers.
\end{description}

In addition to the benchmark policies, simulating a lower-bounding policy is also helpful, as it provides an optimistic performance benchmark and illustrates how close dynamic SEB policies are to this ideal limit. Although we use resource-pooled SRPT as a lower bound when establishing the heavy-traffic optimality of SEB (\Cref{thm:heavy-traffic_optimality}), we do not adopt it as the reference lower bound in our numerical experiments, as it is significantly more optimistic than any policy under consideration across all practical load regimes.

Instead, for each setting, we construct a heterogeneous multiserver system and use SRPT in this system as the reference lower bound. The server rates are determined as follows: the first server is assigned the maximum achievable service rate of any single job, the second server is chosen so that the combined rate of the first two servers equals the maximum achievable total service rate of the two fastest jobs, and so on, until the total service rate sums to one.

We emphasize that the lower bound produced by multiserver-SRPT (MS-SRPT) policy is also not achievable in the original system, but it provides a tighter lower bound than resource-pooled SRPT.

In our simulations, for simplicity, we only update a policy's service option at moments when arrivals and completions occur. This affects the dynamic SEB and MaxWeight-Work SRPT policies. Our initial exploration indicated that this did not have an appreciable impact on response times.

In the subsequent evaluation sections, we use three types of plots to illustrate the performance comparisons: (1) the raw mean response times under dynamic SEB policies and benchmark policies, (2) ratios of mean response times under dynamic SEB policies and benchmark policies to mean response times under multiserver-SRPT, our referencing lower bound, and (3) percentage of gap closed. We notice, via simulations, that MaxWeight-Queue SRPT has the best empirical performance among all benchmark policies. We therefore consider, for each dynamic SEB policy, the following metric:
\[
\text{Percentage of Gap Closed} = \frac{\E\left[T^{\text{MaxWeight-Queue SRPT}}\right]-\E\left[T^{\text{dynamic SEB}}\right]}{\E\left[T^{\text{MaxWeight-Queue SRPT}}\right]-\E\left[T^{\text{Multiserver-SRPT}}\right]}
\]
which measures the percentage of improvement dynamic SEBs has over the best prior policy -- MaxWeight-Queue SRPT.

Throughout the evaluation sections, we use a bounded Pareto distribution with $\alpha=1.5$ and support $[1, 10^4]$ as the job priority index distribution. For all settings, we consider a wide range of system loads from 0.6 to 0.99. For the compatibility and MRJ settings, we run 30 replications, with $3\times 10^6$ job completions per load, for all loads. For the MSJ settings, we we run 30 replications, with $3\times 10^6$ job completions per load, for 0.6 through 0.98. For load 0.99, we run 60 replications, with $3\times 10^6$ job completions. 95\% confidence intervals are shown as the dotted lines.

\subsection{Evaluation of compatibility scheduling}
\label{sec:empirical-compat}

In the compatibility scheduling setting, we consider two specific system configurations:

\begin{figure}[H]
    \centering
    \begin{subfigure}{0.45\textwidth}
        \centering
        \begin{tikzpicture}[scale=0.7]
\tikzstyle{style} = [draw, minimum size=0.8cm, inner sep=0pt]

\node at (-4, 2) {Servers};
\node at (-4, 0) {Job Classes};

\node[circle, style] (C1) at (0,2) {1};
\node[circle, style] (C2) at (2,2) {2};

\node[rectangle, style] (S1) at (0,0) {1};
\node[rectangle, style] (S2) at (2,0) {2};

\draw[-] (S1.north) -- (C1.south);
\draw[-] (S2.north) -- (C1.south);
\draw[-] (S2.north) -- (C2.south);

\end{tikzpicture}
        \caption{The two-class compatibility model}
        \label{fig:compatibility_2classes}
    \end{subfigure}
    \hfill
    \begin{subfigure}{0.45\textwidth}
        \centering
        \begin{tikzpicture}[scale=0.8]

\tikzstyle{style} = [draw, minimum size=0.8cm, inner sep=0pt]

\node[circle, style] (C1) at (0,2) {1};
\node[circle, style] (C2) at (2,2) {2};
\node[circle, style] (C3) at (4,2) {3};

\node[rectangle, style] (S1) at (0,0) {1};
\node[rectangle, style] (S2) at (2,0) {2};
\node[rectangle, style] (S3) at (4,0) {3};

\draw[-] (S1.north) -- (C1.south);
\draw[-] (S2.north) -- (C1.south);
\draw[-] (S2.north) -- (C2.south); 
\draw[-] (S3.north) -- (C2.south); 
\draw[-] (S3.north) -- (C3.south); 
\draw[-] (S1.north) -- (C3.south); 

\end{tikzpicture}
        \caption{The three-class compatibility model}
        \label{fig:compatibility_3classes}
    \end{subfigure}
    \caption{ Two compatibility graphs we consider for evaluations in \Cref{sec:empirical-compat}}
    \label{fig:compatibility_graphs}
\end{figure}

We start with the two-class compatibility model (\cref{fig:compatibility_2classes}). There are two job classes: class-1 jobs are compatible only with server 1 and class-2 jobs are compatible with both servers. Each job is of class 1 with probability $1/4$ and class 2 with probability $3/4$, independent of the job priority index distribution. The limiting facet is bounded by points $[0,2]$ and $[1,1]$. The multiserver SRPT lower bounding system for this setting has 2 servers with service rates $1/2$ and $1/2$. Results are shown in \Cref{fig:compatibility_2classes_mean_response_time,fig:compatibility_2classes_ratio_to_srpt,fig:compatibility_2classes_gap_closed}.

\begin{figure}
    \centering
    \begin{subfigure}[t]{0.49\textwidth}
        \centering
        \includegraphics[height=141pt, width=\textwidth, pagebox=mediabox]{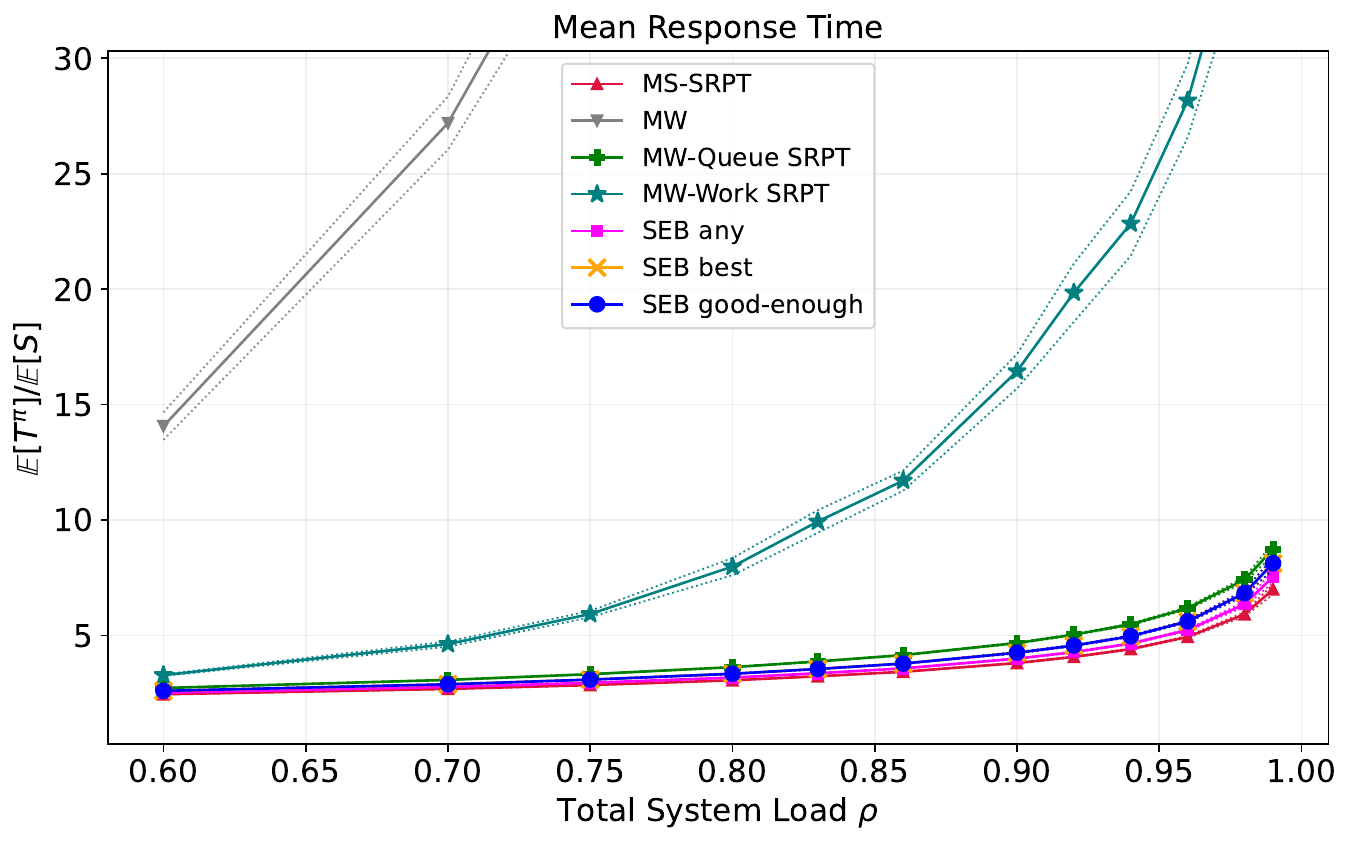}
        \caption{Mean response time. The curves for MW-Queue SRPT, SEB any, SEB best, SEB good-enough, and MS-SRPT appear clustered near the bottom of the figure because their response times are substantially smaller than those of MW and MW-Work SRPT at this scale.}
        \label{fig:compatibility_2classes_mean_response_time}
    \end{subfigure}
    \hfill
    \begin{subfigure}[t]{0.49\textwidth}
            \centering
        \includegraphics[height=140pt,width=\textwidth, pagebox=mediabox]{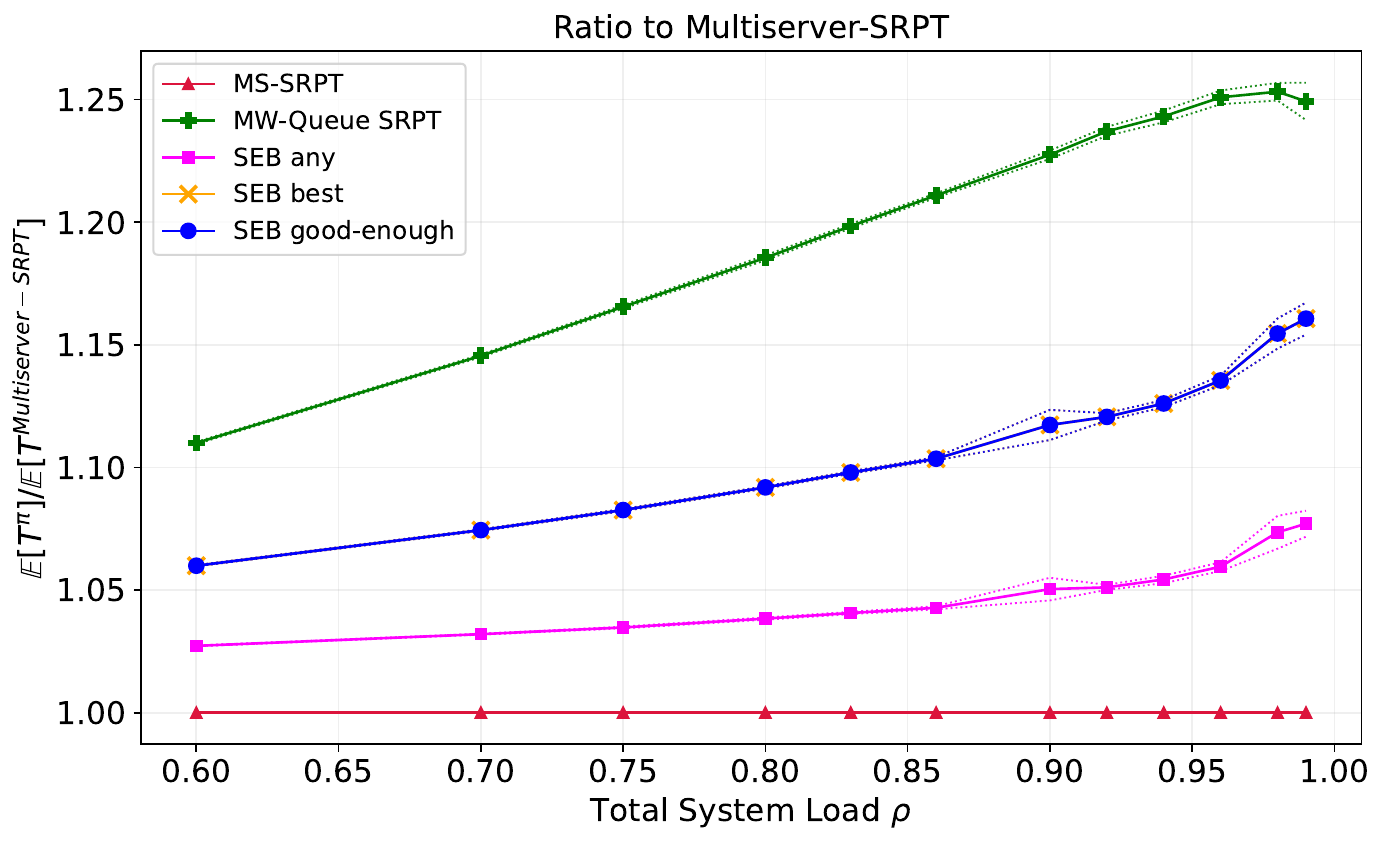}
        \caption{Ratio to Multiserver-SRPT. The curves for SEB best and SEB good-enough overlap across all loads because the two policies have nearly identical response times throughout the evaluated range.}
        \label{fig:compatibility_2classes_ratio_to_srpt}
    \end{subfigure}
    \caption{Two-server compatibility setting.}
    \label{fig:compatibility_pair1}
\end{figure}



\begin{figure}
    \centering
    \begin{subfigure}{0.49\textwidth}
        \centering
        \includegraphics[width=\textwidth]{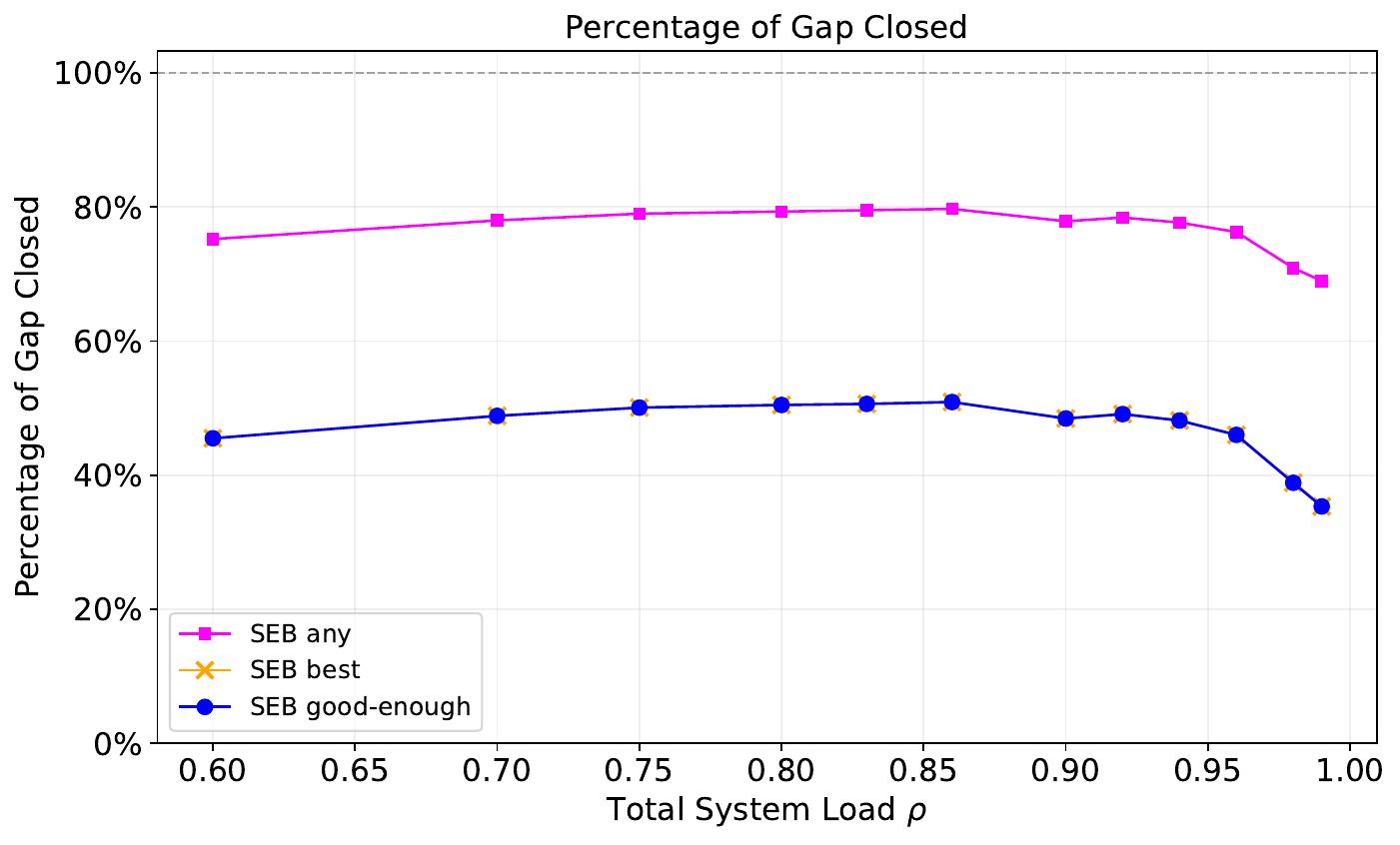}
        \caption{Two-server compatibility setting: percentage of gap closed.}
        \label{fig:compatibility_2classes_gap_closed}
    \end{subfigure}
    \hfill
    \begin{subfigure}{0.49\textwidth}
        \centering
        \includegraphics[width=\textwidth]{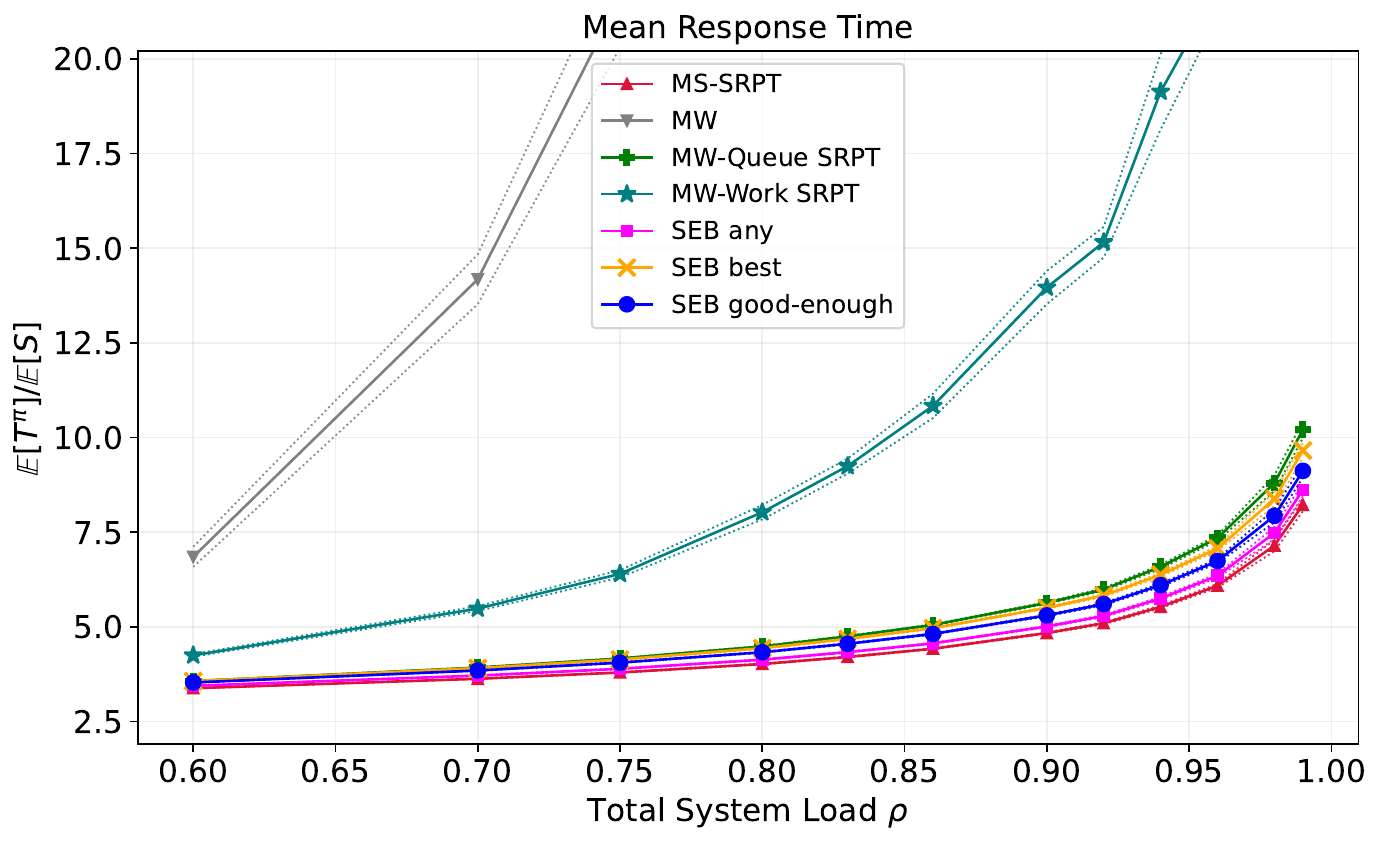}
        \caption{Three-server compatibility setting: mean response time.}
        \label{fig:compatibility_3classes_mean_response_time}
    \end{subfigure}
    \caption{Further compatibility-setting results.}
    \label{fig:compatibility_pair2}
\end{figure}

In the two-server compatibility case, all three dynamic SEB policies consistently outperform all benchmark policies across all loads. In particular, across all loads, dynamic SEB policies outperform MaxWeight-Queue SRPT by substantial margins. Dynamic SEB-any shows the largest improvement, closing more than 70\% of the gap under most loads. Dynamic SEB-best and good-enough have very similar performance, both closing more than 40\% of gaps under most loads.

The other compatibility model we consider is a three-class compatibility model (\cref{fig:compatibility_3classes}). There are three job classes: class-1 jobs are compatible with servers 1 and 3; class-2 jobs are compatible with servers 1 and 2; class-3 jobs are compatible with servers 2 and 3. Each job is of class 1 with probability $1/3$, class 2 with probability $1/3$, and class 3 with probability $1/3$, independent of the job priority index distribution. The limiting facet is bounded by points $[0, 1, 2], [0, 2, 1], [1, 0, 2], [1, 2, 0], [2, 0, 1]$, and $[2, 1, 0]$ The multiserver SRPT lower bounding system for this setting has 3 servers with service rates $1/3$, $1/3$, and $1/3$. Results are shown in \Cref{fig:compatibility_3classes_mean_response_time,fig:compatibility_3classes_ratio_to_srpt,fig:compatibility_3classes_gap_closed}.


\begin{figure}
    \centering
    \begin{subfigure}{0.49\textwidth}
        \centering
        \includegraphics[height=140pt, width=\textwidth]{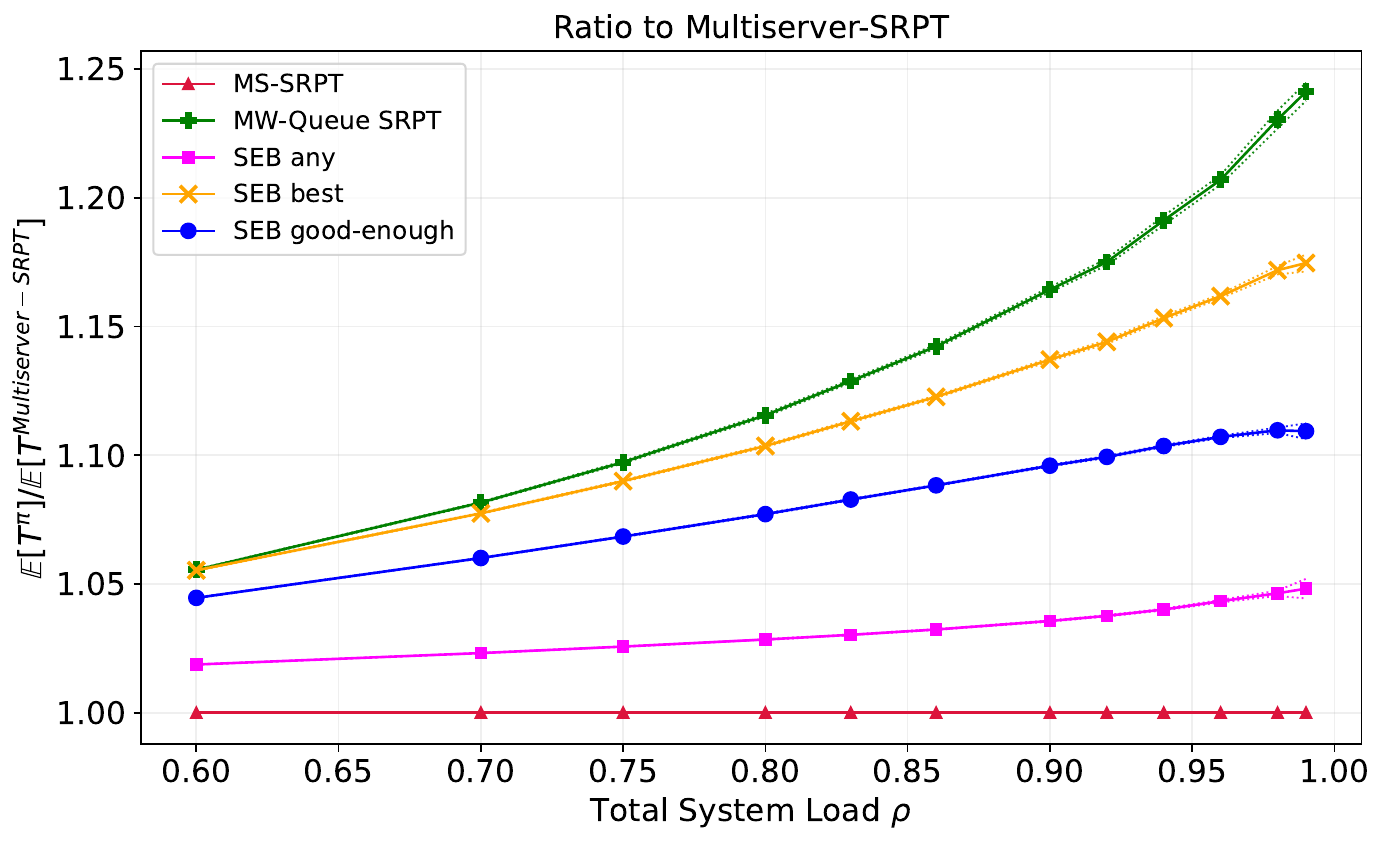}
        \caption{Ratio to Multiserver-SRPT.}
        \label{fig:compatibility_3classes_ratio_to_srpt}
    \end{subfigure}
    \hfill
    \begin{subfigure}{0.49\textwidth}
        \centering
        \includegraphics[height=140pt,width=\textwidth]{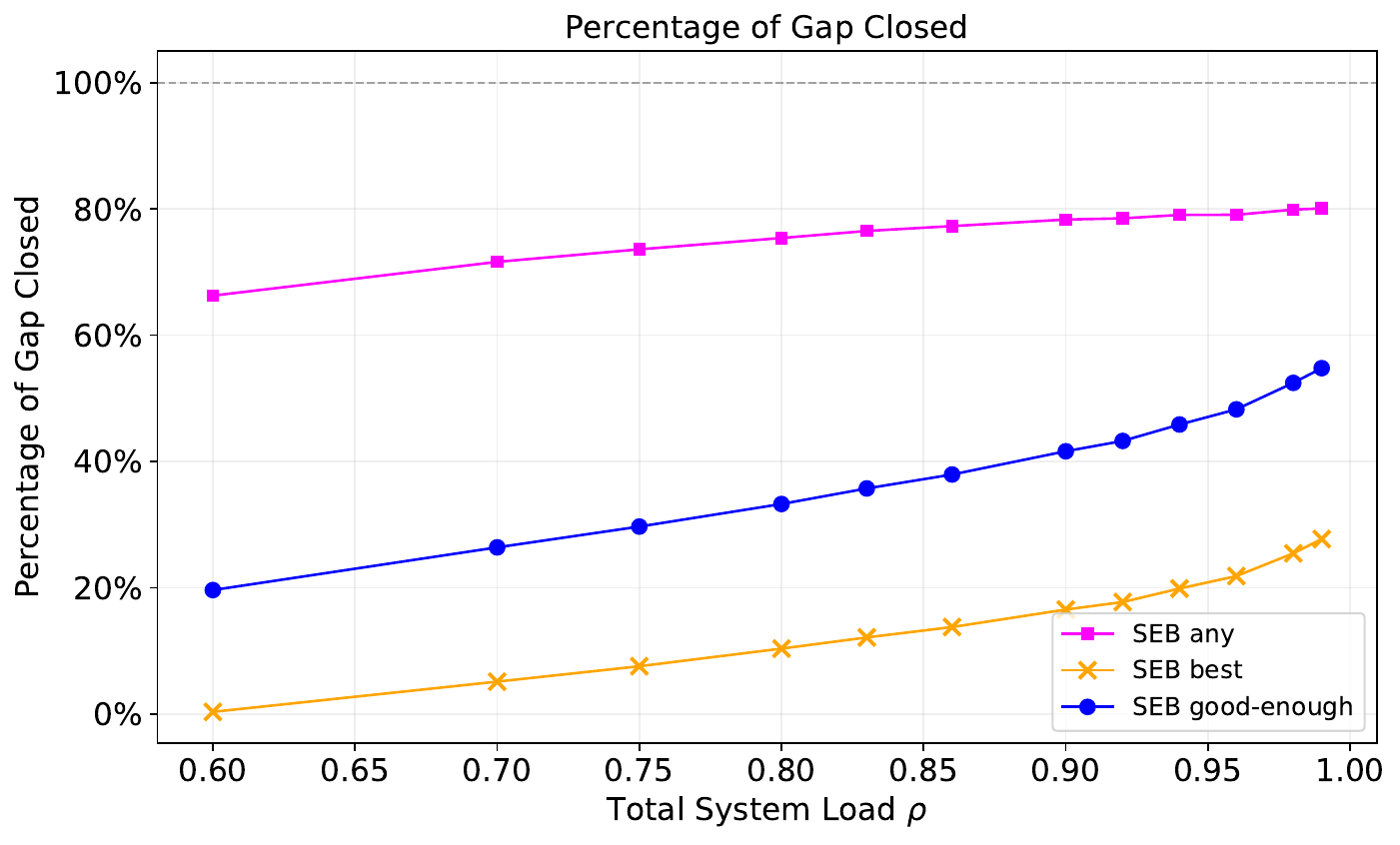}
        \caption{Percentage of gap closed.}
        \label{fig:compatibility_3classes_gap_closed}
    \end{subfigure}
    \caption{Three-server compatibility setting.}
    \label{fig:compatibility_pair3}
\end{figure}



In the three-server compatibility settings, dynamic SEB policies consistently outperform all benchmark policies across all loads. Dynamic SEB-any policy has the greatest improvement over MaxWeight-Queue SRPT, with more than 70\% gap closed under most loads. Dynamic SEB-best and dynamic SEB-good-enough also consistently outperform MaxWeight-Queue SRPT, although by a smaller margin. In particular, percentages of gap closed increase with load for all dynamic SEB policies.

\subsection{Evaluation of multiserver-job scheduling}
\label{sec:empirical-msj}

In the multiserver-job (MSJ) setting, an important benchmark is the serverFilling-SRPT. We thus consider two different 11-server systems: in the first setting, the mapping vector $\mathbf{k}^{\boldsymbol{\nu}}$ coincides with the area defined for serverFilling-SRPT \citep{grosof2022optimal}, which maps job durations to a different notion of priority indices in serverFilling-SRPT. In the second setting, $\mathbf{k}^{\boldsymbol{\nu}}$ and area are different. 

The first 11-server system has two job classes. Class-1 jobs need 2 servers and class-2 jobs need 3 servers. Each job is of class 1 with probability $5/11$ and class 2 with probability $6/11$, independent of the job priority index distribution. The limiting facet is bounded by the points $[4, 1]$ and $[1, 3]$, where both service options use all 11 servers. The multiserver SRPT lower bounding system for this setting has 4 servers with service rates $3/11$, $3/11$, $3/11$, and $2/11$. Results are shown in \Cref{fig:MSJ_11_servers_2_classes_mean_response_time,fig:MSJ_11_servers_2_classes_ratio_to_srpt,fig:MSJ_11_servers_2_classes_gap_closed}

\begin{figure}
    \centering
    \begin{subfigure}{0.49\textwidth}
        \centering
        \includegraphics[height=140pt,width=\textwidth]{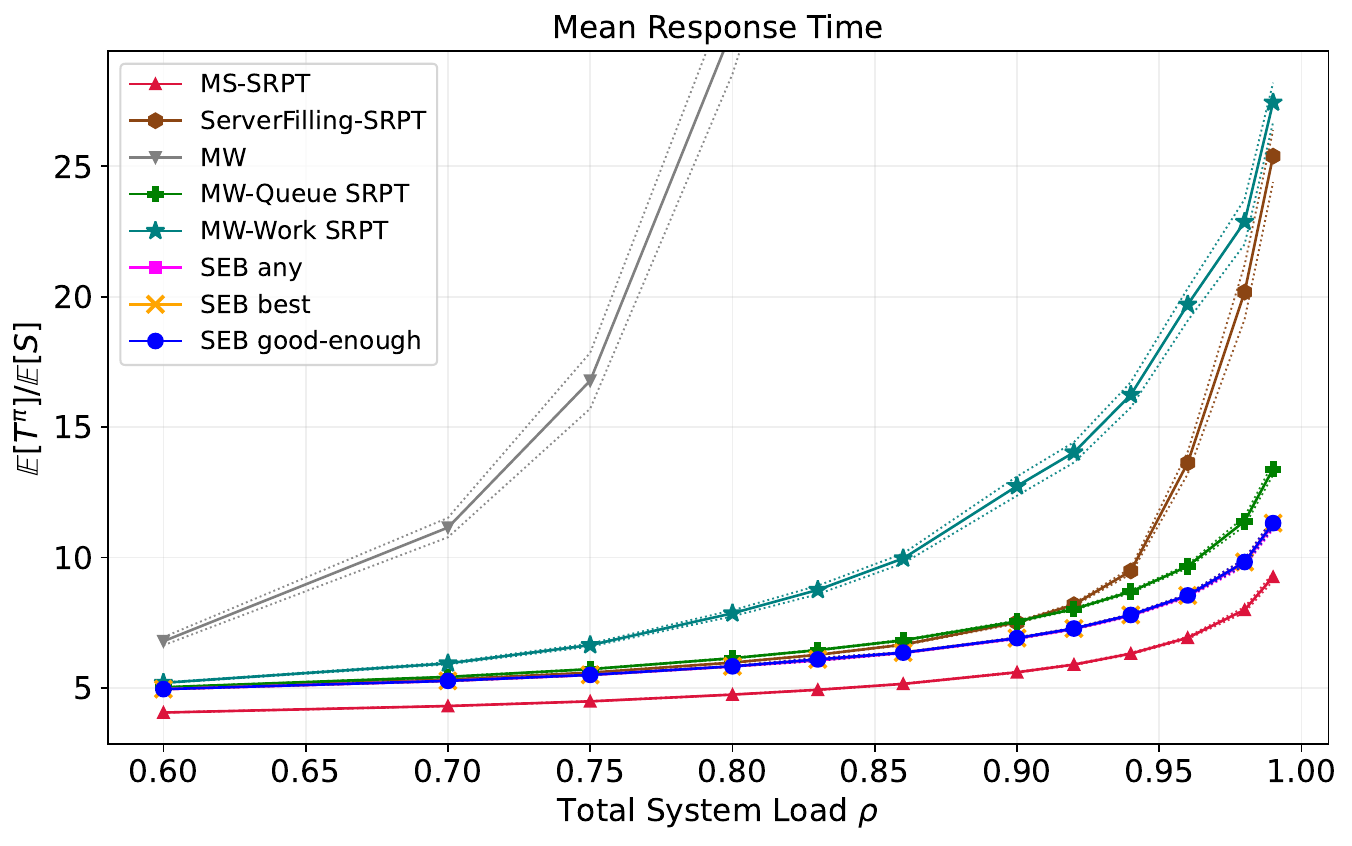}
        \caption{Mean response time.}
        \label{fig:MSJ_11_servers_2_classes_mean_response_time}
    \end{subfigure}
    \hfill
    \begin{subfigure}{0.49\textwidth}
        \centering
        \includegraphics[height=140pt,width=\textwidth]{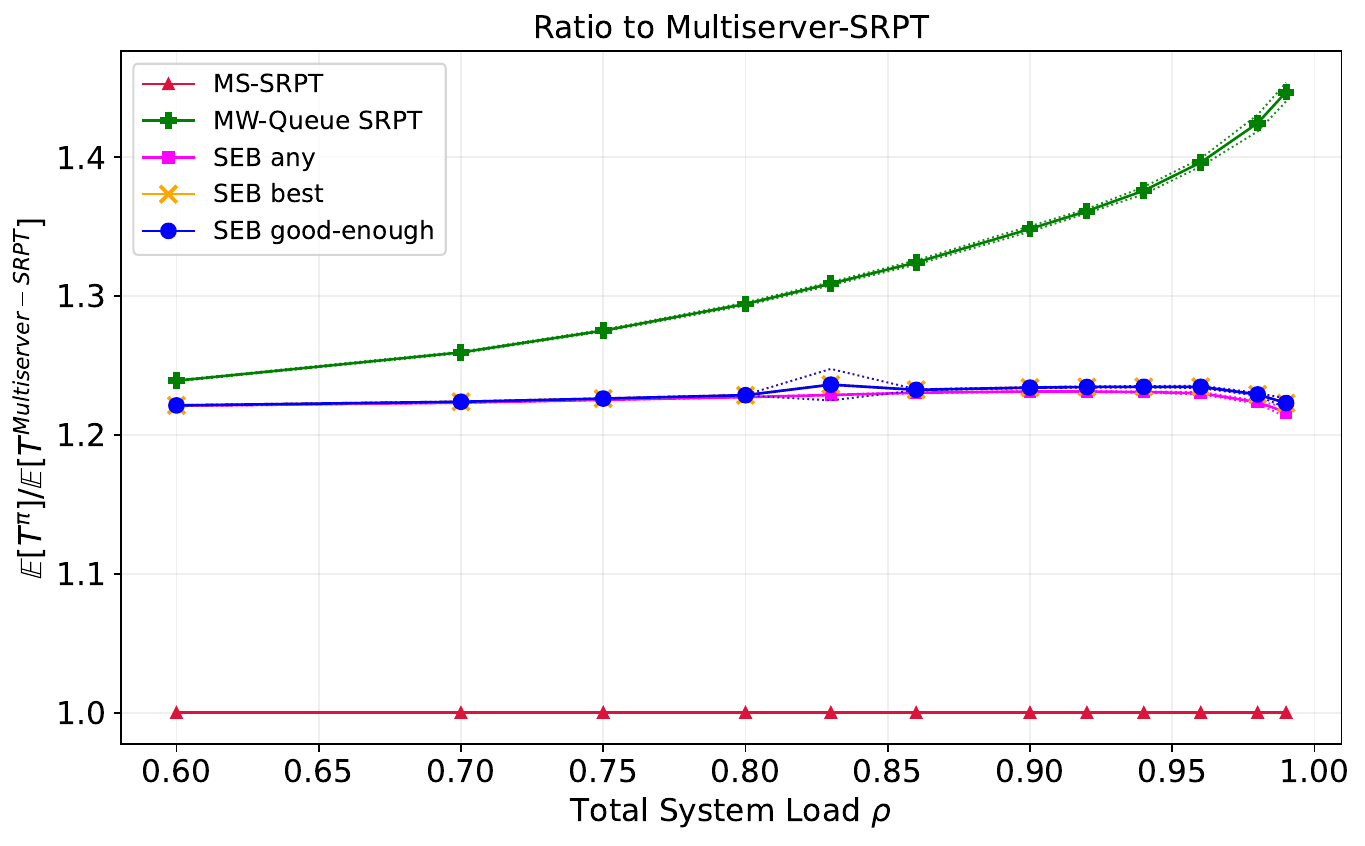}
        \caption{Ratio to Multiserver-SRPT.}
        \label{fig:MSJ_11_servers_2_classes_ratio_to_srpt}
    \end{subfigure}
    \caption{11-server 2-class MSJ setting.}
    \label{fig:msj_pair1}
\end{figure}



\begin{figure}
    \centering
    \begin{subfigure}{0.49\textwidth}
        \centering
        \includegraphics[width=\textwidth]{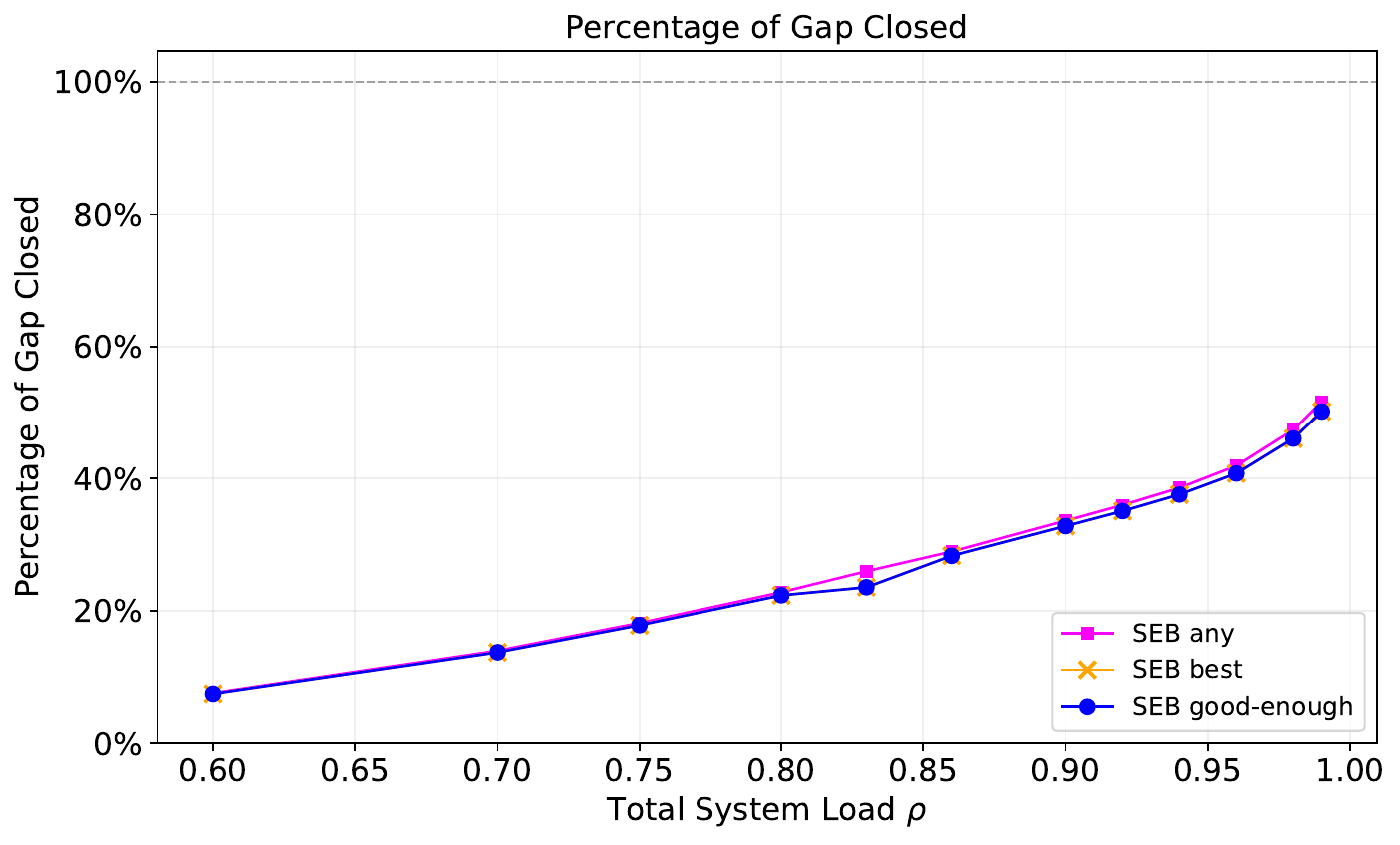}
        \caption{11-server 2-class MSJ setting: percentage of gap closed.}
        \label{fig:MSJ_11_servers_2_classes_gap_closed}
    \end{subfigure}
    \hfill
    \begin{subfigure}{0.49\textwidth}
        \centering
        \includegraphics[width=\textwidth]{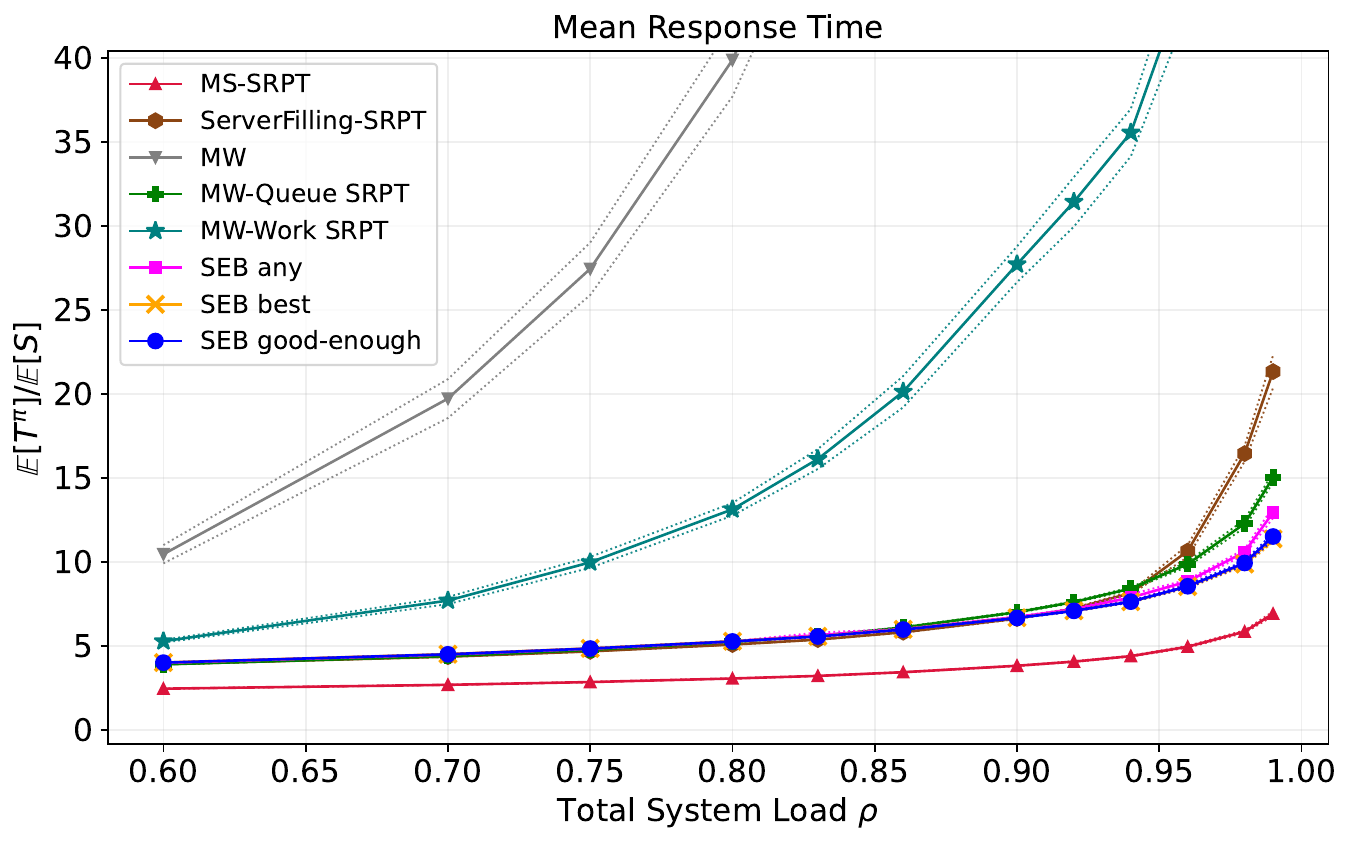}
        \caption{11-server 3-class MSJ setting: mean response time.}
        \label{fig:MSJ_11_servers_3_classes_mean}
    \end{subfigure}
    \caption{Further MSJ evaluation.}
    \label{fig:msj_pair2}
\end{figure}


In the two-class setting, the dynamic SEB policies consistently outperform all benchmark policies. Here, all three dynamic SEB policies yield similar performances across all loads. In comparison to MaxWeight-Queue SRPT, dynamic SEB policies close increasingly higher percentages of gaps as load increases, reaching around 50\% as the system becomes heavily loaded. In comparison to serverFilling-SRPT, dynamic SEB policies yield similar performance when the load is moderate, but show greater advantage as load increases. We note that serverFilling-SRPT does not always use service facet options and does not attempt to balance small and large jobs. As load increases, serverFilling-SRPT may serve large jobs before small jobs or under-utilize the system capacity.

The second 11-server system has three job classes. Class-1 jobs need 2 servers, class-2 jobs need 3 servers, and class-3 jobs need 5 servers. Each job is of class 1 with probability $1/6$, class 2 with probability $1/6$, and class 3 with probability $2/3$, independent of the job priority index distribution. The limiting facet is bounded by the points $[3, 0, 1]$, $[0, 2, 1]$, and $[0, 0, 2]$. Note that not all of these service options use all 11 servers: $[0, 0, 2]$ uses only 10. The multiserver SRPT lower bounding system for this setting has 2 servers with service rates $1/2$ and $1/2$. Results are shown in \Cref{fig:MSJ_11_servers_3_classes_mean, fig:MSJ_11_servers_3_classes_ratio, fig:MSJ_11_servers_3_classes_gap}


\begin{figure}
    \centering
    \begin{subfigure}{0.49\textwidth}
        \centering
        \includegraphics[width=\textwidth]{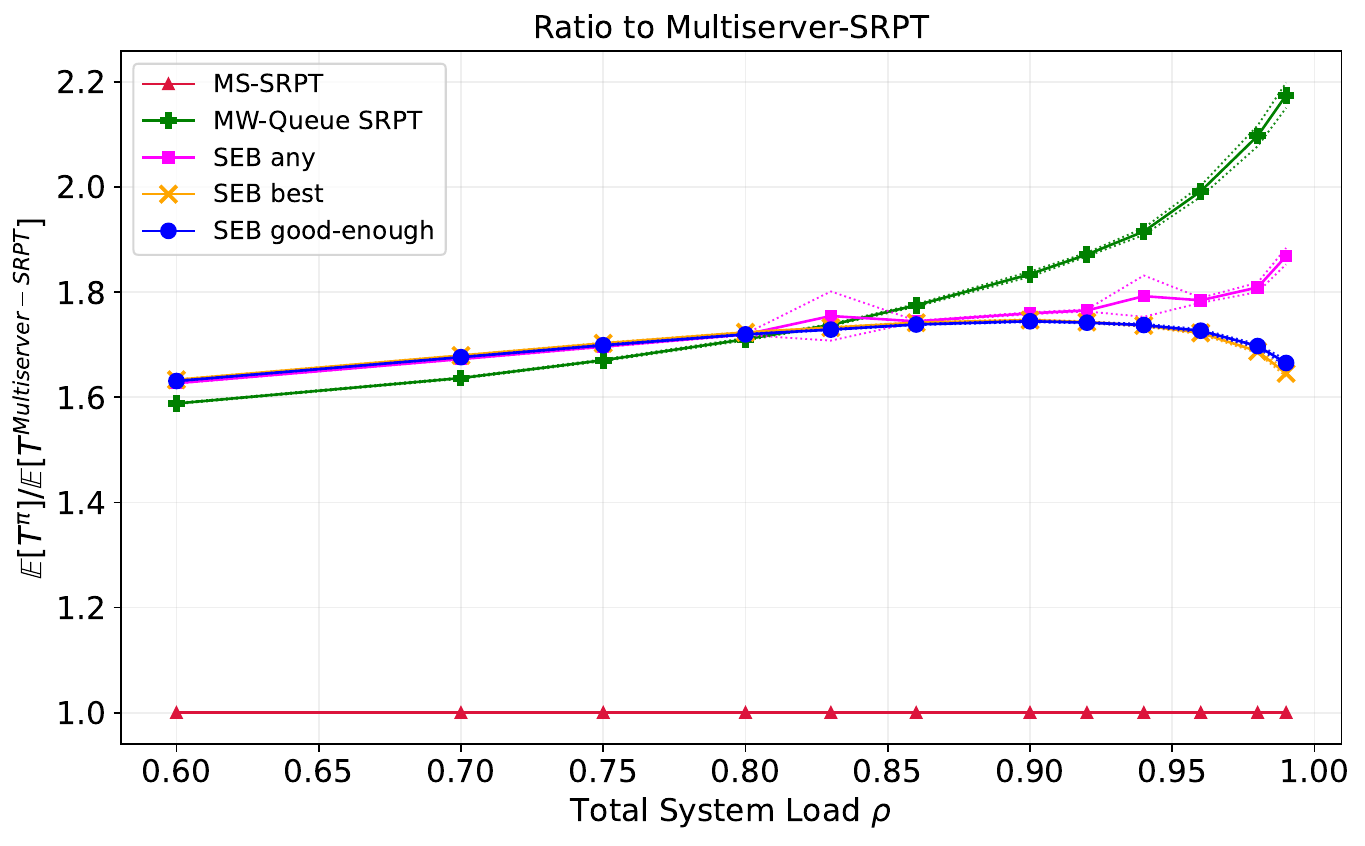}
        \caption{Ratio to Multiserver-SRPT.}
        \label{fig:MSJ_11_servers_3_classes_ratio}
    \end{subfigure}
    \hfill
    \begin{subfigure}{0.49\textwidth}
        \centering
        \includegraphics[width=\textwidth]{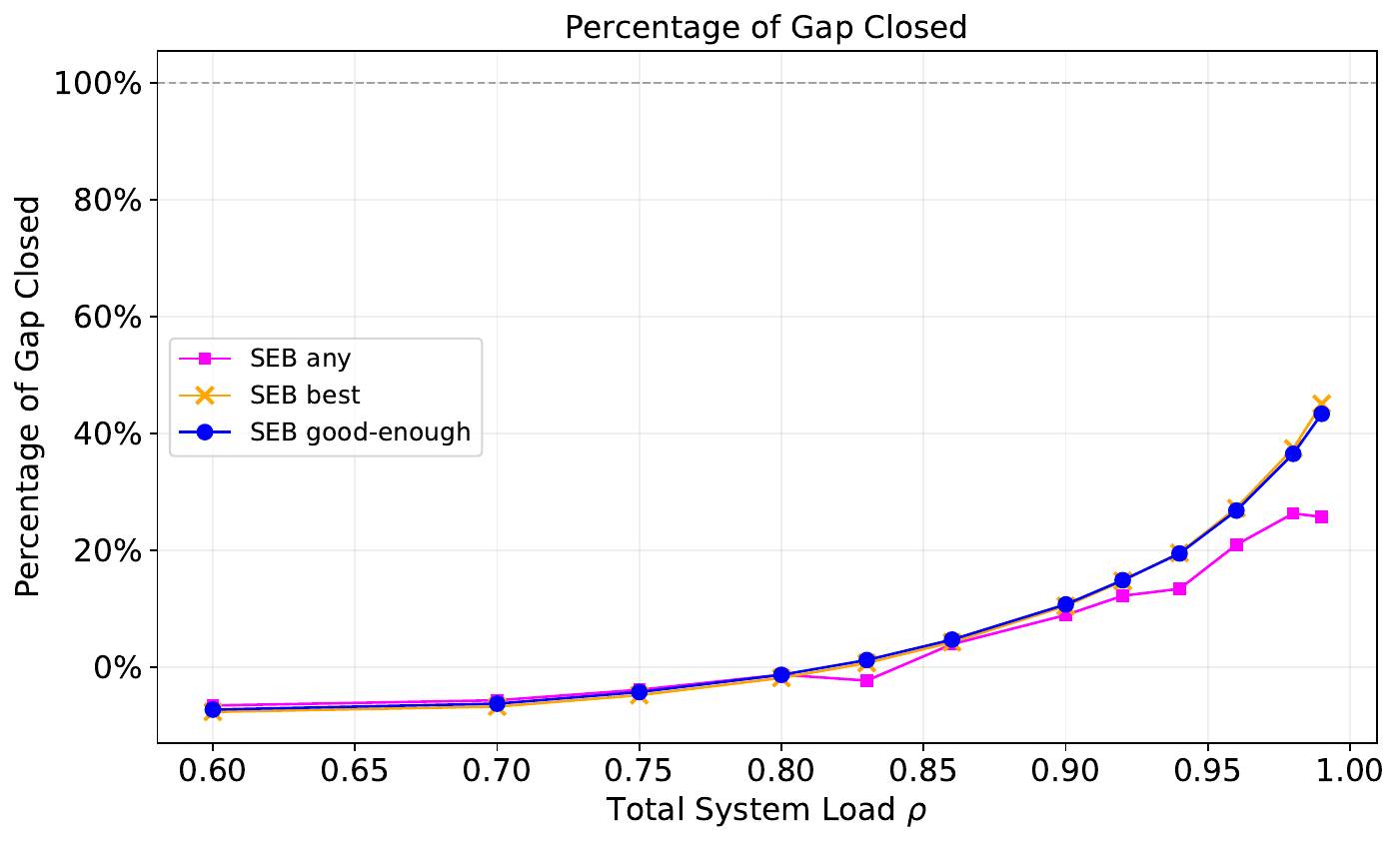}
        \caption{Percentage of gap closed.}
        \label{fig:MSJ_11_servers_3_classes_gap}
    \end{subfigure}
    \caption{11-server 3-class MSJ setting.}
    \label{fig:msj_pair3}
\end{figure}



In the three-class setting, all three dynamic SEB policies have competitive performance across all loads. Although both serverFilling-SRPT and MaxWeight-Queue SRPT outperform dynamic SEB policies by small margins when loads are moderate, dynamic SEB policies regain advantages as load increases. Dynamic SEB-best and dynamic SEB-good-enough outperform dynamic SEB-any under high loads, suggesting that in this setting, equalization benefits performance. Similar to the two-class setting, serverFilling-SRPT shows deteriorating performance at high loads, while dynamic SEB policies remain robust to increasing loads.

\subsection{Evaluation of multiresource-job scheduling}
\label{sec:empirical-mrj}

In the multiresource-job (MRJ) setting, we consider a system with two resources and two job classes. The system capacity is 10 units of resource A and 8 units of resource B. Each class-1 job requires 1 unit of resource A and 3 units of resource B. Each class-2 job requires 2 units of resource A and 1 unit of resource B. Each job is of class 1 with probability $1/2$ and class 2 with probability $1/2$, independent of the job priority index distribution. The limiting facet is bounded by points $[1,4]$ and $[2,2]$. The multiserver SRPT lower bounding system for this setting has 4 servers with service rates $1/3$, $1/3$, $1/6$, and $1/6$. Results are shown in \Cref{fig:MRJ_2_classes_mean_response_time,fig:MRJ_2_classes_ratio_to_srpt,fig:MRJ_2_classes_gap_closed}.

\begin{figure}
    \centering
    \begin{subfigure}{0.49\textwidth}
        \centering
        \includegraphics[width=\textwidth]{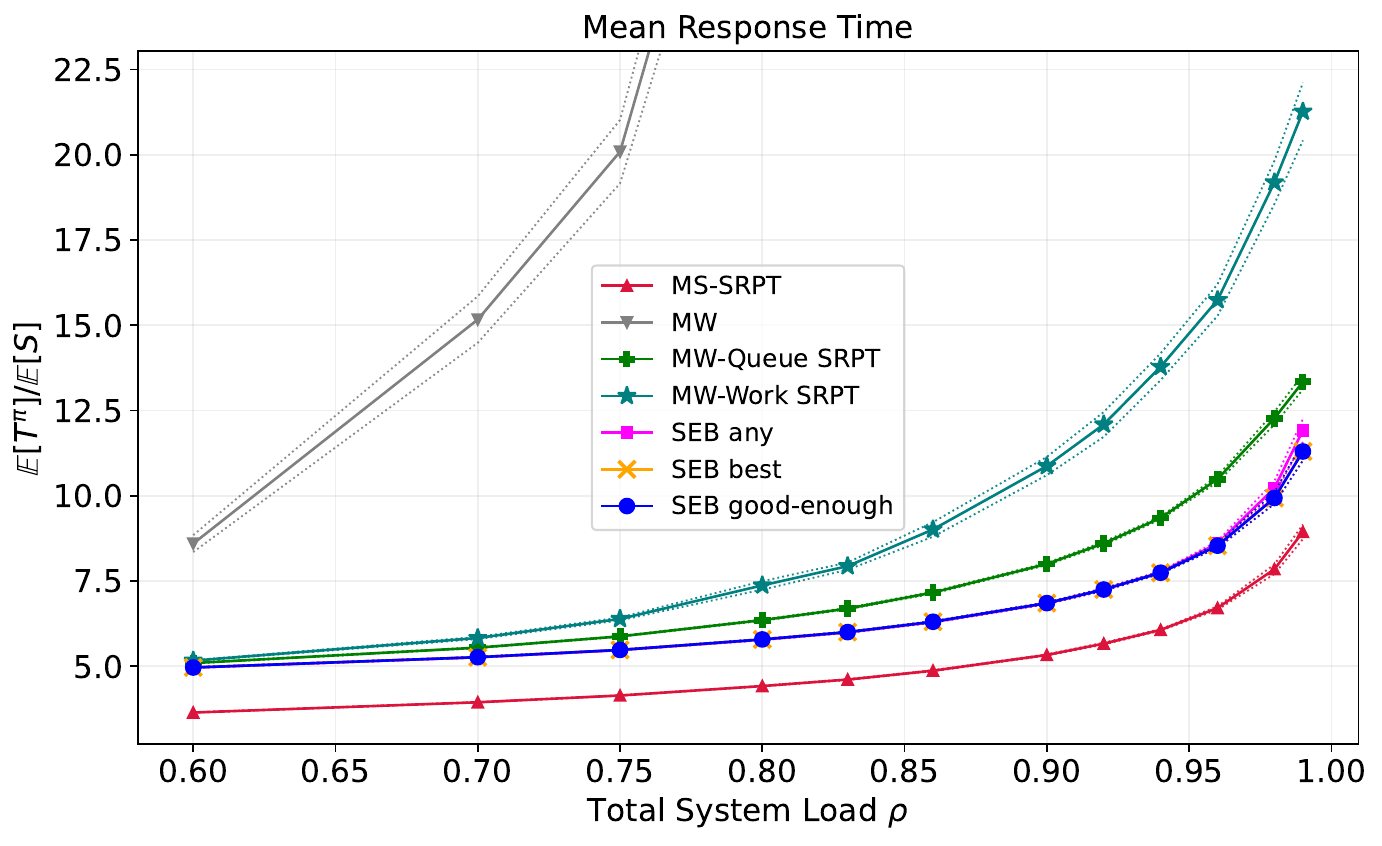}
        \caption{Mean response time.}
        \label{fig:MRJ_2_classes_mean_response_time}
    \end{subfigure}
    \hfill
    \begin{subfigure}{0.49\textwidth}
        \centering
        \includegraphics[width=\textwidth]{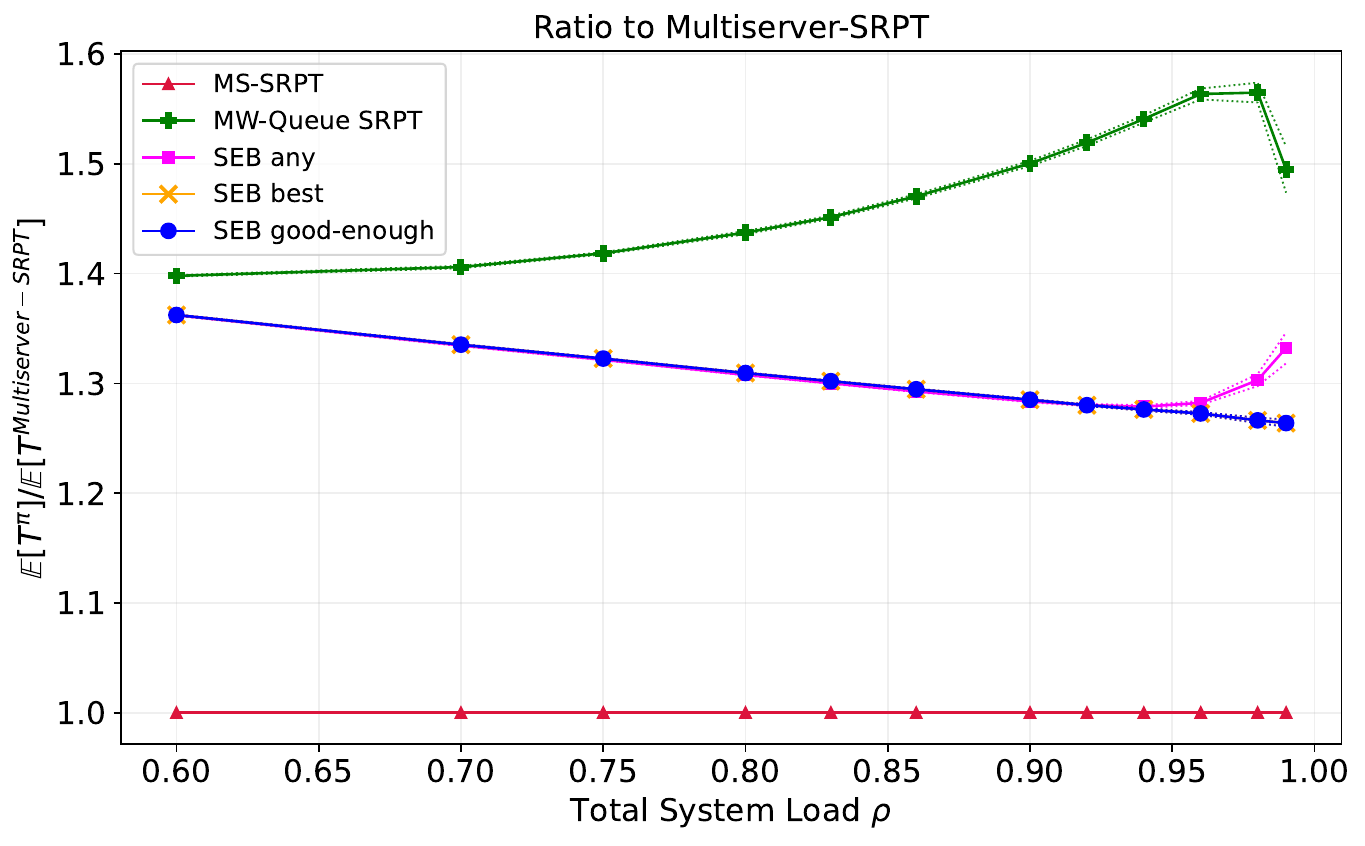}
        \caption{Ratio to Multiserver-SRPT.}
    \label{fig:MRJ_2_classes_ratio_to_srpt}
    \end{subfigure}
    \caption{2-class MRJ setting.}
    \label{fig:mrj_pair1}
\end{figure}



\begin{figure}[h]
    \centering
    \includegraphics[width=0.8\textwidth]{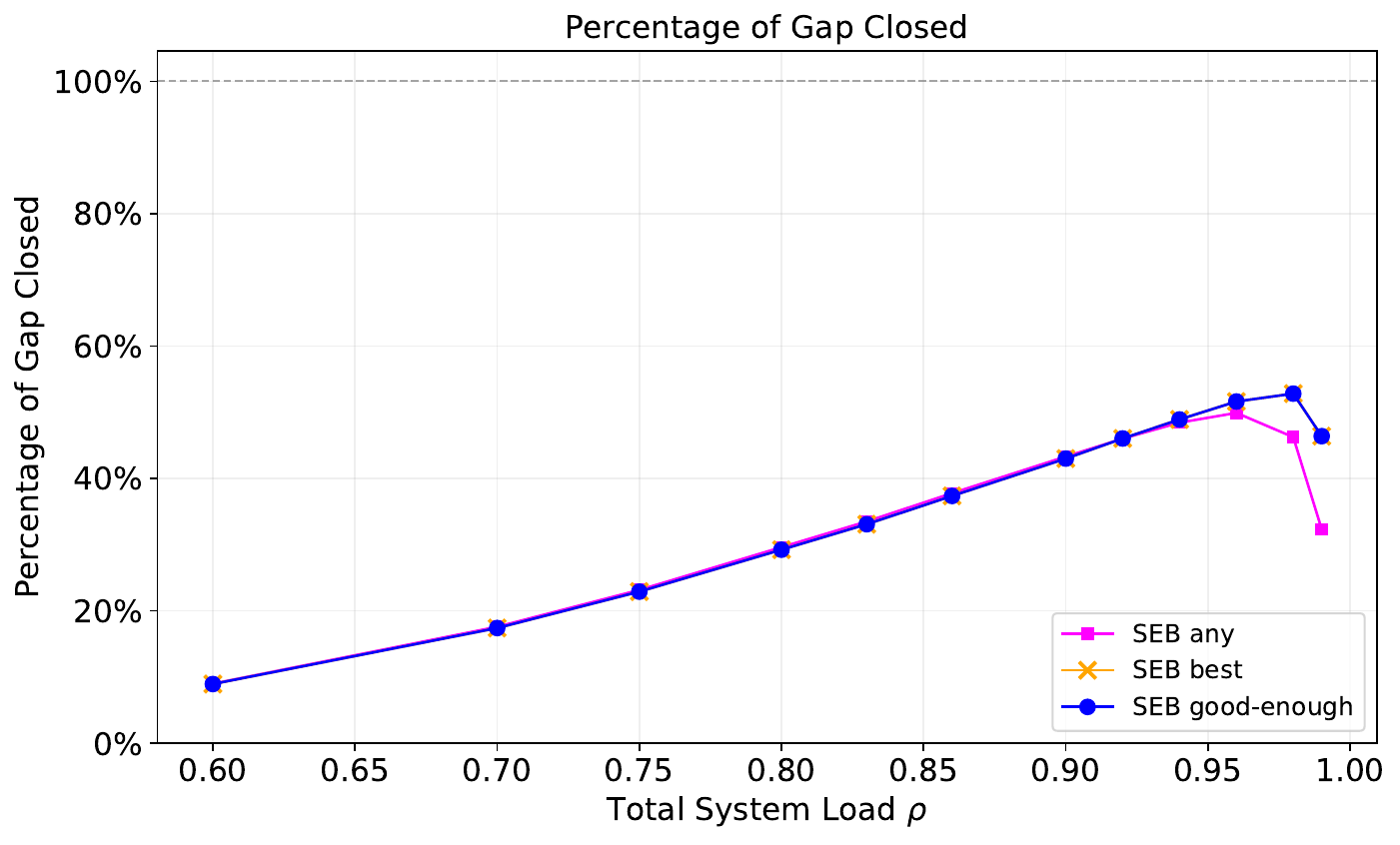}
    \caption{Evaluation results in the 2-class MRJ setting: percentage of gap closed. The curves for SEB any, SEB best and SEB good-enough overlap across most loads because the two policies have nearly identical response times throughout the evaluated range.}
    \label{fig:MRJ_2_classes_gap_closed}
\end{figure}

In this setting, all dynamic SEB policies outperform all benchmark policies across all loads. All dynamic SEB policies outperform MaxWeight-Queue SRPT by noticeable margins under most loads. Dynamic SEB-best and dynamic SEB-good-enough outperform dynamic SEB-any under high loads, suggesting that in this setting, equalization benefits performance.

In summary, our simulations show that dynamic SEB policies consistently demonstrate superior performances in the compability, MSJ, and MRJ systems that are important special cases of the generalized switch across a wide range of system load, often outperforming the benchmark policies by substantial margins. 

\section{Conclusion}
In this paper, we design a new scheduling policy, called SEB (Smallest Equalizing Bucket) and show that it is heavy-traffic optimal, thus making SEB the first proven heavy-traffic optimal scheduling policy in multi-server systems with general service constraints. SEB strikes the right balance between prioritizing small jobs and keeping all servers busy, a critical component to optimality that no previous policies are able to achieve. Finally, we propose three dynamic SEB policies and confirm their excellent performances via simulations.

Despite the success of SEB, we have left in place a few assumptions. Most importantly, we have assumed bounded job duration distribution and independence of job priority index and job class. Designing and analyzing a good policy without these assumptions is important both in theory and in practice. Our explorations suggest that substantial new ideas must be introduced to achieve optimality in the most general setting. We leave this to future work.

\bibliographystyle{abbrvnat}
\bibliography{reference.bib}

@article{eryilmaz2012asymptotically,
  title={Asymptotically tight steady-state queue length bounds implied by drift conditions},
  author={Eryilmaz, Atilla and Srikant, Rayadurgam},
  journal={Queueing Systems},
  volume={72},
  pages={311--359},
  year={2012},
  publisher={Springer}
}

@article{hajek1982hitting,
  title={Hitting-time and occupation-time bounds implied by drift analysis with applications},
  author={Hajek, Bruce},
  journal={Advances in Applied probability},
  volume={14},
  number={3},
  pages={502--525},
  year={1982},
  publisher={Cambridge University Press}
}

@article{miyazawa1994rate,
  title={Rate conservation laws: a survey},
  author={Miyazawa, Masakiyo},
  journal={Queueing Systems},
  volume={15},
  pages={1--58},
  year={1994},
  publisher={Springer}
}

@article{grosof2018srpt,
  title={{SRPT} for multiserver systems},
  author={Grosof, Isaac and Scully, Ziv and Harchol-Balter, Mor},
  journal={Performance Evaluation},
  volume={127},
  number={128},
  pages={154--175},
  year={2018}
}

@article{stolyar2004maxweight,
author = {Alexander L. Stolyar},
title = {{MaxWeight scheduling in a generalized switch: State space collapse and workload minimization in heavy traffic}},
volume = {14},
journal = {The Annals of Applied Probability},
number = {1},
publisher = {Institute of Mathematical Statistics},
pages = {1 -- 53},
year = {2004},
doi = {},
URL = {}
}

@article{grosof2022optimal,
  title={Optimal scheduling in the multiserver-job model under heavy traffic},
  author={Grosof, Isaac and Scully, Ziv and Harchol-Balter, Mor and Scheller-Wolf, Alan},
  journal={Proceedings of the ACM on Measurement and Analysis of Computing Systems},
  volume={6},
  number={3},
  pages={1--32},
  year={2022},
  publisher={ACM New York, NY, USA}
}

@article{psychas2018randomized,
  author={Psychas, Konstantinos and Ghaderi, Javad},
  journal={IEEE/ACM Transactions on Networking},
  title={Randomized Algorithms for Scheduling Multi-Resource Jobs in the Cloud},
  year={2018},
  volume={26},
  number={5},
  pages={2202-2215},
  _doi={}}

@article{grosof2022wcfs,
  title={{WCFS}: A new framework for analyzing multiserver systems},
  author={Grosof, Isaac and Harchol-Balter, Mor and Scheller-Wolf, Alan},
  journal={Queueing Systems},
  volume={102},
  number={1},
  pages={143--174},
  year={2022},
  publisher={Springer}
}

@article{grosof2023reset,
  title={The {RESET} and {MARC} techniques, with application to multiserver-job analysis},
  author={Grosof, Isaac and Hong, Yige and Harchol-Balter, Mor and Scheller-Wolf, Alan},
  journal={Performance Evaluation},
  volume={162},
  pages={102378},
  year={2023},
  publisher={Elsevier}
}

@book{brondsted2012introduction,
  title={An introduction to convex polytopes},
  author={Brondsted, Arne},
  volume={90},
  year={2012},
  publisher={Springer Science \& Business Media}
}

@article{harchol2022multiserver,
  title={The multiserver job queueing model},
  author={Harchol-Balter, Mor},
  journal={Queueing Systems},
  volume={100},
  number={3},
  pages={201--203},
  year={2022},
  publisher={Springer}
}

@inproceedings{morozov2016stability,
  title={Stability analysis of a {MAP/M/s} cluster model by matrix-analytic method},
  author={Morozov, Evsey and Rumyantsev, Alexander},
  booktitle={Computer Performance Engineering: 13th European Workshop, EPEW 2016, Chios, Greece, October 5-7, 2016, Proceedings 13},
  pages={63--76},
  year={2016},
  organization={Springer}
}

@article{rumyantsev2017stability,
  title={Stability criterion of a multiserver model with simultaneous service},
  author={Rumyantsev, Alexander and Morozov, Evsey},
  journal={Annals of Operations Research},
  volume={252},
  pages={29--39},
  year={2017},
  publisher={Springer}
}

@article{afanaseva2020stability,
  title={Stability analysis of a multi-server model with simultaneous service and a regenerative input flow},
  author={Afanaseva, Larisa and Bashtova, Elena and Grishunina, Svetlana},
  journal={Methodology and Computing in Applied Probability},
  volume={22},
  pages={1439--1455},
  year={2020},
  publisher={Springer}
}

@article{brill1984queues,
  title={Queues in which customers receive simultaneous service from a random number of servers: a system point approach},
  author={Brill, Percy H and Green, Linda},
  journal={Management Science},
  volume={30},
  number={1},
  pages={51--68},
  year={1984},
  publisher={INFORMS}
}

@article{filippopoulos2007m,
  title={An {M/M/2} parallel system model with pure space sharing among rigid jobs},
  author={Filippopoulos, Dimitrios and Karatza, Helen},
  journal={Mathematical and Computer Modelling},
  volume={45},
  number={5-6},
  pages={491--530},
  year={2007},
  publisher={Elsevier}
}

@book{kim1979m,
  title={{M/M/s} queueing system where customers demand multiple server use},
  author={Kim, Sung Shick},
  year={1979},
  publisher={Southern Methodist University}
}

@inproceedings{ghaderi2016randomized,
  title={Randomized algorithms for scheduling {VMs} in the cloud},
  author={Ghaderi, Javad},
  booktitle={IEEE INFOCOM 2016-The 35th Annual IEEE International Conference on Computer Communications},
  pages={1--9},
  year={2016},
  organization={IEEE}
}

@inproceedings{jones1999scheduling,
  title={Scheduling for parallel supercomputing: A historical perspective of achievable utilization},
  author={Jones, James Patton and Nitzberg, Bill},
  booktitle={Workshop on job scheduling strategies for parallel processing},
  pages={1--16},
  year={1999},
  organization={Springer}
}

@inproceedings{wang2009application,
  title={The application of backfilling in cluster systems},
  author={Wang, Juan and Guo, Wenming},
  booktitle={2009 WRI International Conference on Communications and Mobile Computing},
  volume={3},
  pages={55--59},
  year={2009},
  organization={IEEE}
}

@inproceedings{carastan2019one,
  title={One can only gain by replacing {EASY Backfilling}: A simple scheduling policies case study},
  author={Carastan-Santos, Danilo and De Camargo, Raphael Y and Trystram, Denis and Zrigui, Salah},
  booktitle={2019 19th IEEE/ACM International Symposium on Cluster, Cloud and Grid Computing (CCGRID)},
  pages={1--10},
  year={2019},
  organization={IEEE}
}

@article{guo2018optimal,
  title={Optimal scheduling of {VMs} in queueing cloud computing systems with a heterogeneous workload},
  author={Guo, Mian and Guan, Quansheng and Ke, Wende},
  journal={IEEE Access},
  volume={6},
  pages={15178--15191},
  year={2018},
  publisher={IEEE}
}

@article{maguluri2014heavy,
  title={Heavy traffic optimal resource allocation algorithms for cloud computing clusters},
  author={Maguluri, Siva Theja and Srikant, Rayadurgam and Ying, Lei},
  journal={Performance Evaluation},
  volume={81},
  pages={20--39},
  year={2014},
  publisher={Elsevier}
}

@article{anton2024efficient,
  title={Efficient scheduling in redundancy systems with general service times},
  author={Anton, Elene and Righter, Rhonda and Verloop, Ina Maria},
  journal={Queueing Systems},
  volume={106},
  number={3},
  pages={333--372},
  year={2024},
  publisher={Springer}
}

@article{zylchlinski2023managing,
author = {Zychlinski, Noa and Chan, Carri W. and Dong, Jing},
title = {Managing Queues with Different Resource Requirements},
journal = {Operations Research},
volume = {71},
number = {4},
pages = {1387-1413},
year = {2023},
doi = {},
}

@inproceedings{hong2022sharp,
author = {Hong, Yige and Wang, Weina},
title = {Sharp waiting-time bounds for multiserver jobs},
year = {2022},
isbn = {9781450391658},
publisher = {Association for Computing Machinery},
address = {New York, NY, USA},
url = {},
doi = {},
booktitle = {Proceedings of the Twenty-Third International Symposium on Theory, Algorithmic Foundations, and Protocol Design for Mobile Networks and Mobile Computing},
pages = {161–170},
numpages = {10},
location = {Seoul, Republic of Korea},
series = {MobiHoc '22}
}

@article{grosof2019load,
  title={Load balancing guardrails: Keeping your heavy traffic on the road to low response times},
  author={Grosof, Isaac and Scully, Ziv and Harchol-Balter, Mor},
  journal={Proceedings of the ACM on Measurement and Analysis of Computing Systems},
  volume={3},
  number={2},
  pages={1--31},
  year={2019},
  publisher={ACM New York, NY, USA}
}

@article{scully2020gittins,
  title={The Gittins policy is nearly optimal in the {M/G/k} under extremely general conditions},
  author={Scully, Ziv and Grosof, Isaac and Harchol-Balter, Mor},
  journal={Proceedings of the ACM on Measurement and Analysis of Computing Systems},
  volume={4},
  number={3},
  pages={1--29},
  year={2020},
  publisher={ACM New York, NY, USA}
}

@phdthesis{scully2022new,
  title={A New Toolbox for Scheduling Theory},
  author={Scully, Ziv},
  school={Carnegie Mellon University},
  year={2022},
  month = aug,
  address = {{Pittsburgh, PA}},
  url = {https://ziv.codes/pdf/scully-thesis.pdf},
  urldate = {2022-12-01},
}

@article{righter1990extremal,
  title={On extremal service disciplines in single-stage queueing systems},
  author={Righter, Rhonda and Shanthikumar, J George and Yamazaki, Genji},
  journal={Journal of Applied Probability},
  volume={27},
  number={2},
  pages={409--416},
  year={1990},
  publisher={Cambridge University Press}
}

@article{banerjee2022heavy,
  title={Heavy traffic scaling limits for shortest remaining processing time queues with heavy tailed processing time distributions},
  author={Banerjee, Sayan and Budhiraja, Amarjit and Puha, Amber L},
  journal={The Annals of Applied Probability},
  volume={32},
  number={4},
  pages={2587--2651},
  year={2022},
  publisher={Institute of Mathematical Statistics}
}

@article{lin2011heavy,
  title={Heavy-traffic analysis of mean response time under shortest remaining processing time},
  author={Lin, Minghong and Wierman, Adam and Zwart, Bert},
  journal={Performance Evaluation},
  volume={68},
  number={10},
  pages={955--966},
  year={2011},
  publisher={Elsevier}
}

@book{harchol2013performance,
  title={Performance modeling and design of computer systems: queueing theory in action},
  author={Harchol-Balter, Mor},
  year={2013},
  publisher={Cambridge University Press}
}

@article{rutten2023load,
  title={Load balancing under strict compatibility constraints},
  author={Rutten, Daan and Mukherjee, Debankur},
  journal={Mathematics of Operations Research},
  volume={48},
  number={1},
  pages={227--256},
  year={2023},
  publisher={INFORMS}
}

@inproceedings{tsitsiklis2013queueing,
  title={Queueing system topologies with limited flexibility},
  author={Tsitsiklis, John N and Xu, Kuang},
  booktitle={Proceedings of the ACM SIGMETRICS/international conference on Measurement and modeling of computer systems},
  pages={167--178},
  year={2013}
}

@article{weng2020optimal,
  title={Optimal load balancing with locality constraints},
  author={Weng, Wentao and Zhou, Xingyu and Srikant, Rayadurgam},
  journal={Proceedings of the ACM on Measurement and Analysis of Computing Systems},
  volume={4},
  number={3},
  pages={1--37},
  year={2020},
  publisher={ACM New York, NY, USA}
}

@article{mukherjee2018asymptotically,
  title={Asymptotically optimal load balancing topologies},
  author={Mukherjee, Debankur and Borst, Sem C and Van Leeuwaarden, Johan SH},
  journal={Proceedings of the ACM on Measurement and Analysis of Computing Systems},
  volume={2},
  number={1},
  pages={1--29},
  year={2018},
  publisher={ACM New York, NY, USA}
}

@article{harrison1999heavy,
  title={Heavy traffic resource pooling in parallel-server systems},
  author={Harrison, J Michael and L{\'o}pez, Marcel J},
  journal={Queueing systems},
  volume={33},
  pages={339--368},
  year={1999},
  publisher={Springer}
}

@article{hurtado2022heavy,
author = {Hurtado Lange, Daniela Andrea and Maguluri, Siva Theja},
title = {Heavy-Traffic Analysis of Queueing Systems with No Complete Resource Pooling},
journal = {Mathematics of Operations Research},
volume = {47},
number = {4},
pages = {3129-3155},
year = {2022},
doi = {10.1287/moor.2021.1248},
}

@ARTICLE{jhunjhunwala2022low,
  author={Jhunjhunwala, Prakirt Raj and Maguluri, Siva Theja},
  journal={IEEE/ACM Transactions on Networking}, 
  title={Low-Complexity Switch Scheduling Algorithms: Delay Optimality in Heavy Traffic}, 
  year={2022},
  volume={30},
  number={1},
  pages={464-473},
  doi={10.1109/TNET.2021.3116606}}

@article{schrage1968proof,
  title={A proof of the optimality of the shortest remaining processing time discipline},
  author={Schrage, Linus},
  journal={Operations Research},
  volume={16},
  number={3},
  pages={687--690},
  year={1968},
  publisher={INFORMS}
}

@article{mckeown2002achieving,
  title={Achieving 100\% throughput in an input-queued switch},
  author={McKeown, Nick and Mekkittikul, Adisak and Anantharam, Venkat and Walrand, Jean},
  journal={IEEE Transactions on Communications},
  volume={47},
  number={8},
  pages={1260--1267},
  year={2002},
  publisher={IEEE}
}

@inproceedings{tassiulas1990stability,
  title={Stability properties of constrained queueing systems and scheduling policies for maximum throughput in multihop radio networks},
  author={Tassiulas, Leandros and Ephremides, Anthony},
  booktitle={29th IEEE Conference on Decision and Control},
  pages={2130--2132},
  year={1990},
  organization={IEEE}
}

@article{chen2025improving,
  title={Improving multiresource job scheduling with markovian service rate policies},
  author={Chen, Zhongrui and Grosof, Isaac and Berg, Benjamin},
  journal={Proceedings of the ACM on Measurement and Analysis of Computing Systems},
  volume={9},
  number={2},
  pages={1--36},
  year={2025},
  publisher={ACM New York, NY, USA}
}

@article{anton2022scheduling,
  title={Scheduling under redundancy},
  author={Anton, Elene and Righter, Rhonda and Verloop, Ina Maria},
  journal={ACM SIGMETRICS Performance Evaluation Review},
  volume={50},
  number={2},
  pages={30--32},
  year={2022},
  publisher={ACM New York, NY, USA}
}

@inproceedings{meyn1993survey,
  title={A survey of Foster-Lyapunov techniques for general state space Markov processes},
  author={Meyn, Sean P and Tweedie, RL},
  booktitle={Proceedings of the Workshop on Stochastic Stability and Stochastic Stabilization, Metz, France},
  year={1993}
}

\begin{APPENDICES}
\section{Deferred Proofs}

\subsection{Proof of \cref{prop:WINE}}
\label{app:WINE}
\restate*\ref{prop:WINE}
\begin{proof}{\textit{Proof.}~}
We will work with $\mathbf{k}^{\boldsymbol{\nu}}$-mapped priority index here. Let $N^{\pi}$ be the number of jobs in the queueing system under $\pi$ in steady state. Let $u_1,\ldots, u_{N^{\pi}}$ be the remaining priority indices of these $N^{\pi}$ jobs. Then we have
\[
    W_{\remduration \leq x}^\pi = \sum_{i=1}^{N^{\pi}} u_i\mathbf{1}(u_i\leq x)
\]
It follows that
\[
    \int_0^\infty \frac{W_{\remduration \leq x}^\pi}{x^2}\,dx = \sum_{i=1}^{N^{\pi}}\int_0^\infty\frac{u_i\mathbf{1}(u_i\leq x)}{x^2}\,dx = \sum_{i=1}^{N^{\pi}}\int_{u_i}^\infty\frac{u_i}{x^2}\,dx=N^{\pi}
\]
Taking expectation with respect to the stationary distribution and applying Tonelli's theorem give us
\[
    \E[N^{\pi}] = \E\left[\int_0^\infty \frac{W_{\remduration \leq x}^\pi}{x^2}\,dx\right] \stackrel{(a)}{=} \int_{0}^{\infty} \frac{\E[W_{\remduration \leq x}^\pi]}{x^2} \, dx
\]
The proposition now follows from Little's law.
\hfill\Halmos
\end{proof}

\subsection{Proof of \cref{thm:stability}}
\label{app:stability}
Throughout this section, we consider the space $\mathbb{R}_{\geq0}^{n_b\times n_c}$ which contains all possible values of bucket work vectors $\{\mathbf{w}^{(1)},\ldots, \mathbf{w}^{(n_b)}\}$. For simplicity, we will use $\{\mathbf{w}^{(i)}\}$ to denote this collection of bucket work vectors. Moreover, we define, for a work vector $\mathbf{w}\in\mathbb{R}_{\geq0}^{n_c}$,
\[
H(\mathbf{w}):=\left(\|\mathbf{w}_{\perp {\boldsymbol{\rho}^{\boldsymbol{\nu}}}}\|_2-\|\mathbf{w}_{\parallel{\boldsymbol{\rho}^{\boldsymbol{\nu}}}}\|_2\tan\varphi\right)^+
\]
which roughly measures the distance between $\mathbf{w}$ and the cone $\mathcal{C}$. 

Before we establish the stability of the system under SEB, we first prove two lemmas. The first lemma implies that if the total work in system is large and all bucket work vectors are close to the cone, then at least one bucket is eligible for service.

\begin{lemma}
\label{lemma:bucket_service_eligibility}
For any $R>0$, if $H(\mathbf{w}^{(i)})\leq R$ for all $i=\{1,\ldots,n_b\}$ and 
\[
    \sum_{i=1}^{n_b}\sum_{j=1}^{n_c}w^{(i)}_j\geq n_b\cdot\max\left\{\sqrt{n_c}\left(\frac{(1+\tan\varphi)(n_s z_{\max} + R)}{\frac{\rho_{\min}}{\|{\boldsymbol{\rho}^{\boldsymbol{\nu}}}\|_2} - \tan\varphi} + R\right),\frac{n_s\sqrt{n_c} z_{\max}}{\frac{\rho_{\min}}{\|{\boldsymbol{\rho}^{\boldsymbol{\nu}}}\|_2}\cos\varphi-\sin\varphi}\right\}\eqqcolon L
\]
then at least one bucket is eligible for service.
\end{lemma}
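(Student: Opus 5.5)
The plan is to locate a single bucket whose work vector is both heavy and close to the cone, and then show that its preferred service option (indeed, any option in $\mathcal{R}^{\boldsymbol{\nu}}$) can be filled. This mirrors \Cref{thm:eligibility_work_amount}, but only assumes $H(\mathbf{w}^{(i)})\le R$ rather than $\mathbf{w}^{(i)}\in\mathcal{C}$.

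\textbf{Step 1: pigeonhole.} Since the total work is at least $L$ and there are $n_b$ buckets, some bucket $i$ satisfies $\|\mathbf{w}^{(i)}\|_1 \ge L/n_b$. This lower bound is at least each of the two terms in the maximum defining $L/n_b$.

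\textbf{Step 2: a coordinatewise lower bound.} Fix this bucket and write $\mathbf{w}=\mathbf{w}^{(i)}$, $p=\|\mathbf{w}_{\parallel\boldsymbol{\rho}^{\boldsymbol{\nu}}}\|_2$, $q=\|\mathbf{w}_{\perp\boldsymbol{\rho}^{\boldsymbol{\nu}}}\|_2$. The hypothesis $H(\mathbf{w})\le R$ gives $q\le p\tan\varphi+R$. Each coordinate satisfies
\[
w_k\ \ge\ p\,\frac{\rho_k}{\|\boldsymbol{\rho}^{\boldsymbol{\nu}}\|_2}-q\ \ge\ p\left(\frac{\rho_{\min}}{\|\boldsymbol{\rho}^{\boldsymbol{\nu}}\|_2}-\tan\varphi\right)-R .
\]
The factor $\frac{\rho_{\min}}{\|\boldsymbol{\rho}^{\boldsymbol{\nu}}\|_2}-\tan\varphi$ is positive by the third cone condition in \Cref{sec:cone_SSC}. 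So every class has work at least $n_s z_{\max}\ge n_s b_i$ as soon as
\[
p\ \ge\ \frac{n_s z_{\max}+R}{\frac{\rho_{\min}}{\|\boldsymbol{\rho}^{\boldsymbol{\nu}}\|_2}-\tan\varphi}.
\]

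\textbf{Step 3: from $\|\mathbf{w}\|_1$ to $p$.} We have $\|\mathbf{w}\|_1\le\sqrt{n_c}\,\|\mathbf{w}\|_2\le\sqrt{n_c}\,(p+q)\le\sqrt{n_c}\,\bigl(p(1+\tan\varphi)+R\bigr)$. Hence the first term of the maximum in $L/n_b$ forces $p(1+\tan\varphi)\ge (1+\tan\varphi)\frac{n_s z_{\max}+R}{\rho_{\min}/\|\boldsymbol{\rho}^{\boldsymbol{\nu}}\|_2-\tan\varphi}$, which is exactly the threshold from Step 2.

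\textbf{Step 4: eligibility.} Each class in bucket $i$ now has work at least $n_s b_i$. Every job in bucket $i$ has remaining priority index at most $b_i$, so each class contains at least $n_s$ jobs. Since no service option serves more than $n_s$ jobs, every option in $\mathcal{R}^{\boldsymbol{\nu}}$, and in particular the preferred $\widetilde{\mathbf{r}}^{(i)*}$, can be fulfilled. Therefore bucket $i$ is eligible.

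\textbf{The second term of the maximum.} It covers the case $\mathbf{w}\in\mathcal{C}$, where one invokes \Cref{thm:eligibility_work_amount} directly with $b_i\le z_{\max}$. It is also what is needed later in the Lyapunov argument.

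The main obstacle is the bookkeeping of the norm conversions in Steps 2 and 3, namely $\ell_1$ versus $\ell_2$ and parallel versus perpendicular components. In particular, one must check that the constant $L$ as written dominates the threshold obtained; the $(1+\tan\varphi)$ factor and the additive $R$ inside $\sqrt{n_c}(\cdot)$ are what make Step 3 close.
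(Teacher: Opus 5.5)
Your proof is correct and follows the paper's argument essentially step for step. Both go through the same chain: pigeonhole to a bucket with $\|\mathbf{w}\|_1\ge L/n_b$, the coordinatewise bound $w_k \ge p\bigl(\rho_{\min}/\|\boldsymbol{\rho}^{\boldsymbol{\nu}}\|_2 - \tan\varphi\bigr) - R$, the conversion $\|\mathbf{w}\|_1 \le \sqrt{n_c}\bigl(p(1+\tan\varphi)+R\bigr)$, and the conclusion that each class holds at least $n_s$ jobs. The only difference is that the paper splits into the cases $\mathbf{w}\notin\mathcal{C}$ (your Steps 2--3) and $\mathbf{w}\in\mathcal{C}$ (handled via \Cref{thm:eligibility_work_amount}), whereas your main argument, as you essentially note, uses only $q\le p\tan\varphi+R$, which $H\le R$ gives in both cases, so the cone case and the second term of the maximum are already covered.
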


\begin{proof}{\textit{Proof.}}
Since the total work in system is no less than $L$, there exists $j\in\{1,\ldots,n_b\}$ such that $\|\mathbf{w}^{(j)}\|_1\geq L/n_b$. We now show that bucket $j$ must be eligible for service. 

If $\mathbf{w}^{(j)}\not\in\mathcal{C}$, we have $\|\mathbf{w}^{(j)}_{\perp{\boldsymbol{\rho}^{\boldsymbol{\nu}}}}\|_2\leq R+\|\mathbf{w}^{(j)}_{\parallel{\boldsymbol{\rho}^{\boldsymbol{\nu}}}}\|_2\tan\varphi$. Since $\mathbf{w}^{(j)}=\mathbf{w}^{(j)}_{\|{\boldsymbol{\rho}^{\boldsymbol{\nu}}}}+\mathbf{w}^{(j)}_{\perp{\boldsymbol{\rho}^{\boldsymbol{\nu}}}}$, letting $\mathbf{e}_k$ be the $k$-th standard basis in $\mathbb{R}^{n_c}$, we have
\[
\inner{\mathbf{w}^{(j)}}{\mathbf{e}_k}&=\inner{\mathbf{w}^{(j)}_{\parallel{\boldsymbol{\rho}^{\boldsymbol{\nu}}}}}{\mathbf{e}_k}+\inner{\mathbf{w}^{(j)}_{\perp{\boldsymbol{\rho}^{\boldsymbol{\nu}}}}}{\mathbf{e}_i}\\
&\geq\frac{\inner{\mathbf{w}^{(j)}}{{\boldsymbol{\rho}^{\boldsymbol{\nu}}}}}{\|{\boldsymbol{\rho}^{\boldsymbol{\nu}}}\|_2^2}\rho_k-\|\mathbf{w}^{(j)}_{\perp{\boldsymbol{\rho}^{\boldsymbol{\nu}}}}\|_2\\
&\geq\frac{\inner{\mathbf{w}^{(j)}}{{\boldsymbol{\rho}^{\boldsymbol{\nu}}}}}{\|{\boldsymbol{\rho}^{\boldsymbol{\nu}}}\|_2^2}\rho_k-R-\frac{\inner{\mathbf{w}^{(j)}}{{\boldsymbol{\rho}^{\boldsymbol{\nu}}}}}{\|{\boldsymbol{\rho}^{\boldsymbol{\nu}}}\|_2^2}\|{\boldsymbol{\rho}^{\boldsymbol{\nu}}}\|_2\tan\varphi\\
&=\frac{\inner{\mathbf{w}^{(j)}}{{\boldsymbol{\rho}^{\boldsymbol{\nu}}}}}{\|{\boldsymbol{\rho}^{\boldsymbol{\nu}}}\|_2^2}(\rho_k-\|{\boldsymbol{\rho}^{\boldsymbol{\nu}}}\|_2\tan\varphi)-R\\
&\geq\|\mathbf{w}^{(j)}_{\parallel{\boldsymbol{\rho}^{\boldsymbol{\nu}}}}\|_2\underbrace{\left(\frac{\rho_{\min}}{\|{\boldsymbol{\rho}^{\boldsymbol{\nu}}}\|_2}-\tan\varphi\right)}_{\mathcal{T}}-R
\]
We note that $\mathcal{T}>0$ by assumption 3 on $\varphi$ in \Cref{sec:cone_SSC}. By norm inequality,
\[
\|\mathbf{w}^{(j)}_{\parallel{\boldsymbol{\rho}^{\boldsymbol{\nu}}}}\|_2&\geq\|\mathbf{w}^{(j)}\|_2-\|\mathbf{w}^{(j)}_{\perp{\boldsymbol{\rho}^{\boldsymbol{\nu}}}}\|_2\geq\frac{1}{\sqrt{n_c}}\|\mathbf{w}^{(j)}\|_1-R-\|\mathbf{w}^{(j)}_{\parallel{\boldsymbol{\rho}^{\boldsymbol{\nu}}}}\|_2\tan\varphi
\]
The following bound follows:
\[
\|\mathbf{w}^{(j)}_{\parallel{\boldsymbol{\rho}^{\boldsymbol{\nu}}}}\|_2\geq\frac{1}{1+\tan\varphi}\left(\frac{1}{\sqrt{n_c}}\|\mathbf{w}^{(j)}\|_1-R\right)
\]
Therefore, because
\[
    \|\mathbf{w}^{(j)}\|_1 \geq \frac{L}{n_b} \geq \sqrt{n_c}\left(\frac{(1+\tan\varphi)(n_s b_j + R)}{\frac{\rho_{\min}}{\|{\boldsymbol{\rho}^{\boldsymbol{\nu}}}\|_2} - \tan\varphi} + R\right)
\]
we have $\mathbf{w}^{(j)}_k\geq n_sb_{j}$ for all classes $k=\{1,\ldots,n_c\}$, which implies that bucket $j$ is eligible for service because there are at least $n_s$ jobs of each class in the bucket.

If $\mathbf{w}^{(j)}\in\mathcal{C}$, then by \Cref{thm:eligibility_work_amount}, bucket $j$ is eligible for service if
\[
\|\mathbf{w}^{(j)}\|_1\geq\frac{n_s\sqrt{n_c} b_{j}}{\frac{\rho_{\min}}{\|{\boldsymbol{\rho}^{\boldsymbol{\nu}}}\|_2}\cos\varphi-\sin\varphi}
\]
\hfill\Halmos
\end{proof}

The second lemma is first proved in \citet{eryilmaz2012asymptotically} under a slightly different setting. We show that this holds here as well.

\begin{lemma}\label{lemma:norm_change_bound}
For vectors $\mathbf{w}$ and $\mathbf{z}$ such that $\mathbf{w}_{\perp{\boldsymbol{\rho}^{\boldsymbol{\nu}}}}\neq{\boldsymbol{\rho}^{\boldsymbol{\nu}}}$, we have
\[
\|(\mathbf{w}+\mathbf{z})_{\perp{\boldsymbol{\rho}^{\boldsymbol{\nu}}}}\|_2-\|\mathbf{w}_{\perp{\boldsymbol{\rho}^{\boldsymbol{\nu}}}}\|_2\leq\frac{1}{2\|\mathbf{w}_{\perp{\boldsymbol{\rho}^{\boldsymbol{\nu}}}}\|_2}\left[(\|\mathbf{w}+\mathbf{z}\|_2^2-\|\mathbf{w}\|_2^2)-(\|(\mathbf{w}+\mathbf{z})_{\parallel{\boldsymbol{\rho}^{\boldsymbol{\nu}}}}\|_2^2-\|\mathbf{w}_{\parallel{\boldsymbol{\rho}^{\boldsymbol{\nu}}}}\|_2^2)\right]
\]
\end{lemma}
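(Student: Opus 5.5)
The plan is to write $a=\|\mathbf{w}_{\perp\boldsymbol{\rho}^{\boldsymbol{\nu}}}\|_2$ and $b=\|(\mathbf{w}+\mathbf{z})_{\perp\boldsymbol{\rho}^{\boldsymbol{\nu}}}\|_2$. The hypothesis should be read as $\mathbf{w}_{\perp\boldsymbol{\rho}^{\boldsymbol{\nu}}}\neq\mathbf{0}$, so that $a>0$ and the right-hand side is well defined. The argument has three steps: an elementary scalar inequality, a Pythagorean rewriting of its numerator, and a check of the degenerate cases.

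\textbf{Scalar inequality.} For real $a>0$ and $b\ge 0$ we have
\[
b-a\le \frac{b^2-a^2}{2a},
\]
since this is equivalent to $2ab-2a^2\le b^2-a^2$, that is, to $0\le (b-a)^2$. Applying it with the above $a,b$ gives
\[
\|(\mathbf{w}+\mathbf{z})_{\perp\boldsymbol{\rho}^{\boldsymbol{\nu}}}\|_2-\|\mathbf{w}_{\perp\boldsymbol{\rho}^{\boldsymbol{\nu}}}\|_2\le \frac{1}{2\|\mathbf{w}_{\perp\boldsymbol{\rho}^{\boldsymbol{\nu}}}\|_2}\left(\|(\mathbf{w}+\mathbf{z})_{\perp\boldsymbol{\rho}^{\boldsymbol{\nu}}}\|_2^2-\|\mathbf{w}_{\perp\boldsymbol{\rho}^{\boldsymbol{\nu}}}\|_2^2\right).
\]

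\textbf{Pythagorean rewriting.} The decomposition $\mathbf{v}=\mathbf{v}_{\parallel\boldsymbol{\rho}^{\boldsymbol{\nu}}}+\mathbf{v}_{\perp\boldsymbol{\rho}^{\boldsymbol{\nu}}}$ is orthogonal: the parallel part is the projection onto $\mathrm{span}(\boldsymbol{\rho}^{\boldsymbol{\nu}})$ and the perpendicular part is the projection onto its orthogonal complement. Hence for every $\mathbf{v}$,
\[
\|\mathbf{v}_{\perp\boldsymbol{\rho}^{\boldsymbol{\nu}}}\|_2^2=\|\mathbf{v}\|_2^2-\|\mathbf{v}_{\parallel\boldsymbol{\rho}^{\boldsymbol{\nu}}}\|_2^2 .
\]
Applying this with $\mathbf{v}=\mathbf{w}+\mathbf{z}$ and with $\mathbf{v}=\mathbf{w}$, then subtracting, turns the numerator above into
\[
\left(\|\mathbf{w}+\mathbf{z}\|_2^2-\|\mathbf{w}\|_2^2\right)-\left(\|(\mathbf{w}+\mathbf{z})_{\parallel\boldsymbol{\rho}^{\boldsymbol{\nu}}}\|_2^2-\|\mathbf{w}_{\parallel\boldsymbol{\rho}^{\boldsymbol{\nu}}}\|_2^2\right),
\]
which is exactly the bracketed expression in the statement of \Cref{lemma:norm_change_bound}.

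\textbf{Degenerate cases.} If $(\mathbf{w}+\mathbf{z})_{\perp\boldsymbol{\rho}^{\boldsymbol{\nu}}}=\mathbf{0}$, then $b=0$ and the scalar inequality still holds, so no separate argument is needed. The only real point of care is interpreting the hypothesis as $\mathbf{w}_{\perp\boldsymbol{\rho}^{\boldsymbol{\nu}}}\neq\mathbf{0}$. This lemma is the same computation as in \citet{eryilmaz2012asymptotically}, transplanted to the projection onto the line spanned by $\boldsymbol{\rho}^{\boldsymbol{\nu}}$, so I expect no substantive obstacle.
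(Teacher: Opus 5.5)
Your proof is correct and follows the paper's route. The paper obtains $b-a\le (b^2-a^2)/(2a)$ from concavity of $x\mapsto\sqrt{x}$ (the tangent-line bound at $a^2$), which is the same inequality you get from $(b-a)^2\ge 0$, and then applies the same Pythagorean identity. Your reading of the hypothesis as $\mathbf{w}_{\perp\boldsymbol{\rho}^{\boldsymbol{\nu}}}\neq\mathbf{0}$ is the intended one, since that is what makes the right-hand side well defined.
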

\begin{proof}{\textit{Proof.}}
Note that we have $\|\mathbf{x}\|_2=\sqrt{\|\mathbf{x}\|_2^2}$ and the square root function $x\mapsto\sqrt{x}$ is concave. Thus,
\[
\|(\mathbf{w}+\mathbf{z})_{\perp{\boldsymbol{\rho}^{\boldsymbol{\nu}}}}\|_2-\|\mathbf{w}_{\perp{\boldsymbol{\rho}^{\boldsymbol{\nu}}}}\|_2&=\sqrt{\|(\mathbf{w}+\mathbf{z})_{\perp{\boldsymbol{\rho}^{\boldsymbol{\nu}}}}\|_2^2}-\sqrt{\|\mathbf{w}_{\perp{\boldsymbol{\rho}^{\boldsymbol{\nu}}}}\|_2^2}\leq\frac{1}{2\|\mathbf{w}_{\perp{\boldsymbol{\rho}^{\boldsymbol{\nu}}}}\|_2}\left(\|(\mathbf{w}+\mathbf{z})_{\perp{\boldsymbol{\rho}^{\boldsymbol{\nu}}}}\|_2^2-\|\mathbf{w}_{\perp{\boldsymbol{\rho}^{\boldsymbol{\nu}}}}\|_2^2\right)
\]
The lemma now follows from the Pythagorean theorem
\[
\|(\mathbf{w}+\mathbf{z})_{\perp{\boldsymbol{\rho}^{\boldsymbol{\nu}}}}\|_2^2-\|\mathbf{w}_{\perp{\boldsymbol{\rho}^{\boldsymbol{\nu}}}}\|_2^2=\|\mathbf{w}+\mathbf{z}\|_2^2-\|(\mathbf{w}+\mathbf{z})_{\parallel{\boldsymbol{\rho}^{\boldsymbol{\nu}}}}\|_2^2-\|\mathbf{w}\|_2^2+\|\mathbf{w}_{\parallel{\boldsymbol{\rho}^{\boldsymbol{\nu}}}}\|_2^2
\]
\hfill\Halmos
\end{proof}

\restate*\ref{thm:stability}

\begin{proof}{\textit{Proof.}}
Our main tool is the continuous-time Foster-Lypunov theorem in a general state space (Theorem 6 in \citep{meyn1993survey}). The key is to find a differentiable nonnegative Lyapunov function $V$ that has bounded drift on a compact set $\mathcal{K}$ and negative drift outside $\mathcal{K}$.

Consider the following Lyapunov function:
\[
V(\{\mathbf{w}^{(i)}\})=\sum_{i=1}^{n_b}H(\mathbf{w}^{(i)})^2+M\sum_{i=1}^{n_b}\sum_{j=1}^{n_c}w^{(i)}_j
\]
That is, $V$ is a sum of total distance of bucket work vectors to cones and the total work in the system.

We define the following compact set:
\[
\mathcal{K}=\{\{\mathbf{w}^{(i)}\}: H(\mathbf{w}^{(i)})\leq R \text{ for all $i=\{1,\ldots,n_b\}$}\}\cap\left\{\{\mathbf{w}^{(i)}\}: \sum_{i=1}^{n_b}\sum_{j=1}^{n_c}w^{(i)}_j\leq L\right\}
\]
where $L$ is as specified in \Cref{lemma:bucket_service_eligibility}.

$M$ and $R$ will be determined at the end of the proof. As the notations quickly become cumbersome, we will introduce simplifying notations over the course of the proof.

Since the job priority index distribution is bounded, $V$ naturally has a bounded drift everywhere. We will show that $\mathcal{G}(\{\mathbf{w}^{(i)}\})\leq-1$ for $\{\mathbf{w}^{(i)}\}\in\mathcal{K}^c$.

We start by analyzing the change in $V$ due to a job arrival. The expected change is
\[
    \lambda\E[\Delta V(\{\mathbf{w}^{(i)}\})] = \sum_{i=1}^{n_b}\lambda_i\E[H(\mathbf{w}^{(i)}+\mathbf{Z}^{(i)})^2 - H(\mathbf{w}^{(i)})^2] + M
\]

We first focus on the change in $H(\mathbf{w}^{(i)})$ for a specific $i$. We have the following cases:

Case 1: $H(\mathbf{w}^{(i)}) > 0$ and $H(\mathbf{w}^{(i)}+\mathbf{z}^{(i)})=0$.

In this case, 
\[
    H(\mathbf{w}^{(i)}+\mathbf{z}^{(i)})^2 - H(\mathbf{w}^{(i)})^2=-H(\mathbf{w}^{(i)})^2
\]

As we will see, since drift terms from other cases are linear in $-H(\mathbf{w}^{(i)})$, we may enlarge $R$ so that this case can be dropped in the final drift bound.\\

Case 2: $H(\mathbf{w}^{(i)}) = 0$ and $H(\mathbf{w}^{(i)}+\mathbf{z}^{(i)})=0$.

In this case, 
\[
H(\mathbf{w}^{(i)}+\mathbf{z}^{(i)})^2 - H(\mathbf{w}^{(i)})^2=0
\]
\\

Case 3: $H(\mathbf{w}^{(i)}) = 0$ and $H(\mathbf{w}^{(i)}+\mathbf{z}^{(i)}) > 0$.

This case will be handled as we establish the overall arrival drift bound \cref{eq:arrival_crude_bound}.\\

Case 4: $H(\mathbf{w}^{(i)}) > 0$ and $H(\mathbf{w}^{(i)}+\mathbf{z}^{(i)})>0$.
\begin{align}
\lambda_i\E[H(\mathbf{w}^{(i)}+\mathbf{Z}^{(i)})^2 - H(\mathbf{w}^{(i)})^2] &= \lambda_i\E[(H(\mathbf{w}^{(i)}+\mathbf{Z}^{(i)}) + H(\mathbf{w}^{(i)}))(H(\mathbf{w}^{(i)}+\mathbf{Z}^{(i)}) - H(\mathbf{w}^{(i)}))] \nonumber \\
& \stackrel{(a)}{\leq}(2H(\mathbf{w}^{(i)})+\sqrt{n_c}z_{\max})\lambda_i\E[H(\mathbf{w}^{(i)}+\mathbf{Z}^{(i)}) - H(\mathbf{w}^{(i)})] \nonumber\\
& \stackrel{(b)}{\leq}(2\|\mathbf{w}^{(i)}_{\perp{\boldsymbol{\rho}^{\boldsymbol{\nu}}}}\|_2+\sqrt{n_c}z_{\max})\left(\frac{\lambda_i n_cz_{\max}^2}{2\|\mathbf{w}_{\perp{\boldsymbol{\rho}^{\boldsymbol{\nu}}}}^{(i)}\|_2}-\|{\boldsymbol{\rho}^{\boldsymbol{\nu}}}^{(i)}\|_2\tan\varphi\right) \nonumber\\
& \stackrel{(c)}{\leq}-\alpha H(\mathbf{w}^{(i)})+\frac{\beta}{H(\mathbf{w}^{(i)})}+\gamma   \label{eq:arrival_drift}
\end{align}

where (a) follows from
\[
    H(\mathbf{w}^{(i)}+\mathbf{z}^{(i)}) + H(\mathbf{w}^{(i)}) &= \|(\mathbf{w}^{(i)}+\mathbf{z}^{(i)})_{\perp{\boldsymbol{\rho}^{\boldsymbol{\nu}}}}\|_2 - \tan\varphi\left(\|\mathbf{w}^{(i)}_{\parallel{\boldsymbol{\rho}^{\boldsymbol{\nu}}}}\|_2+\|\mathbf{z}^{(i)}_{\parallel{\boldsymbol{\rho}^{\boldsymbol{\nu}}}}\|_2\right)+H(\mathbf{w}^{(i)})\\
    &\leq (\|\mathbf{w}_{\perp{\boldsymbol{\rho}^{\boldsymbol{\nu}}}}^{(i)}\|_2-\tan\varphi\|\mathbf{w}_{\parallel{\boldsymbol{\rho}^{\boldsymbol{\nu}}}}^{(i)}\|_2)+\|\mathbf{z}_{\perp{\boldsymbol{\rho}^{\boldsymbol{\nu}}}}^{(i)}\|_2-\tan\varphi\|\mathbf{z}_{\parallel{\boldsymbol{\rho}^{\boldsymbol{\nu}}}}^{(i)}\|_2+H(\mathbf{w}^{(i)})\\
    &\leq 2H(\mathbf{w}^{(i)}) + \sqrt{n_c}z_{\max}
\]
(b) follows from the following bound: By \cref{lemma:norm_change_bound},
\[
\|(\mathbf{w}^{(i)}+\mathbf{z}^{(i)})_{\perp{\boldsymbol{\rho}^{\boldsymbol{\nu}}}}\|_2-\|\mathbf{w}^{(i)}_{\perp{\boldsymbol{\rho}^{\boldsymbol{\nu}}}}\|_2&\leq\frac{1}{2\|\mathbf{w}^{(i)}_{\perp{\boldsymbol{\rho}^{\boldsymbol{\nu}}}}\|_2}\left(2\inner{\mathbf{w}^{(i)}}{\mathbf{z}^{(i)}}+\|\mathbf{z}^{(i)}\|_2^2-2\frac{\inner{\mathbf{w}^{(i)}}{{\boldsymbol{\rho}^{\boldsymbol{\nu}}}^{(i)}}\inner{\mathbf{z}^{(i)}}{{\boldsymbol{\rho}^{\boldsymbol{\nu}}}^{(i)}}}{\|{\boldsymbol{\rho}^{\boldsymbol{\nu}}}^{(i)}\|_2^2}-\|\mathbf{z}^{(i)}_{\parallel{\boldsymbol{\rho}^{\boldsymbol{\nu}}}}\|_2^2\right)\\
&\leq\frac{1}{\|\mathbf{w}^{(i)}_{\perp{\boldsymbol{\rho}^{\boldsymbol{\nu}}}}\|_2}\left(\inner{\mathbf{w}^{(i)}}{\mathbf{z}^{(i)}}-\frac{\inner{\mathbf{w}^{(i)}}{{\boldsymbol{\rho}^{\boldsymbol{\nu}}}^{(i)}}\inner{\mathbf{z}^{(i)}}{{\boldsymbol{\rho}^{\boldsymbol{\nu}}}^{(i)}}}{\|{\boldsymbol{\rho}^{\boldsymbol{\nu}}}^{(i)}\|_2^2}+\frac{1}{2}\|\mathbf{z}^{(i)}_{\perp{\boldsymbol{\rho}^{\boldsymbol{\nu}}}}\|_2^2\right)
\]
and
\[
\|(\mathbf{w}^{(i)}+\mathbf{z}^{(i)})_{\parallel{\boldsymbol{\rho}^{\boldsymbol{\nu}}}}\|_2-\|\mathbf{w}^{(i)}_{\parallel{\boldsymbol{\rho}^{\boldsymbol{\nu}}}}\|_2=\frac{\inner{\mathbf{z}^{(i)}}{{\boldsymbol{\rho}^{\boldsymbol{\nu}}}^{(i)}}}{\|{\boldsymbol{\rho}^{\boldsymbol{\nu}}}^{(i)}\|_2}
\]
Thus,
\[
H(\mathbf{w}^{(i)}+\mathbf{z}^{(i)}) - H(\mathbf{w}^{(i)})\leq&\,\frac{1}{\|\mathbf{w}^{(i)}_{\perp{\boldsymbol{\rho}^{\boldsymbol{\nu}}}}\|_2}\left(\inner{\mathbf{w}^{(i)}}{\mathbf{z}^{(i)}}-\frac{\inner{\mathbf{w}^{(i)}}{{\boldsymbol{\rho}^{\boldsymbol{\nu}}}}\inner{\mathbf{z}^{(i)}}{{\boldsymbol{\rho}^{\boldsymbol{\nu}}}^{(i)}}}{\|{\boldsymbol{\rho}^{\boldsymbol{\nu}}}^{(i)}\|_2^2}+\frac{1}{2}\|\mathbf{z}_{\perp{\boldsymbol{\rho}^{\boldsymbol{\nu}}}}\|_2^2\right)-\\
&\,\frac{\inner{\mathbf{z}^{(i)}}{{\boldsymbol{\rho}^{\boldsymbol{\nu}}}^{(i)}}}{\|{\boldsymbol{\rho}^{\boldsymbol{\nu}}}^{(i)}\|_2}\tan\varphi
\]
Since $\lambda_i\E[\mathbf{Z}^{(i)}]={\boldsymbol{\rho}^{\boldsymbol{\nu}}}^{(i)}$, we have
\[
\label{eq:H_arrival_drift_bound}
\lambda_i\E[H(\mathbf{w}^{(i)}+\mathbf{Z}^{(i)}) - H(\mathbf{w}^{(i)})]\leq \frac{\lambda\E[\|\mathbf{Z}_{\perp{\boldsymbol{\rho}^{\boldsymbol{\nu}}}}\|_2^2]}{2\|\mathbf{w}^{(i)}_{\perp{\boldsymbol{\rho}^{\boldsymbol{\nu}}}}\|_2}-\|{\boldsymbol{\rho}^{\boldsymbol{\nu}}}^{(i)}\|_2\tan\varphi
\]
(c) follows from $H(\mathbf{w}^{(i)})\leq \|\mathbf{w}^{(i)}_{\perp\rho}\|_2$ and defining
\[
    \alpha &:= 2(\min_{i}\|{\boldsymbol{\rho}^{\boldsymbol{\nu}}}^{(i)}\|_2)\tan\varphi\\
    \beta &:= \frac{\lambda n_c^{3/2}z_{\max}^3}{2}\\
    \gamma &:= \max\{\max_i\left\{n_cz_{\max}^2 - \sqrt{n_c}z_{\max}\|{\boldsymbol{\rho}^{\boldsymbol{\nu}}}^{(i)}\|_2\tan\varphi\}, 0\right\}
\]

We note that \cref{eq:arrival_drift} may not be desirable when $H(\mathbf{w}^{(i)})$ is small due to the $1/H(\mathbf{w}^{(i)})$ term. So we consider the set:
\[
    B:=\{i: H(\mathbf{w}^{(i)}) \geq \eta\}
\]
for some $\eta>0$. On $B^c$, we use a crude bound for the drift. Note that case 3 is included in the analysis of drift on $B^c$.

We first show that $H(\mathbf{w}^{(i)})$ is Lipschitz.
\begin{align}
\label{eq:H_Lipschitz}
    |H(\mathbf{w}^{(i)}+\mathbf{z}^{(i)})-H(\mathbf{w}^{(i)})|&\stackrel{(a)}{\leq}|\|(\mathbf{w}^{(i)}+\mathbf{z}^{(i)})_{\perp{\boldsymbol{\rho}^{\boldsymbol{\nu}}}}\|_2-\|(\mathbf{w}^{(i)}+\mathbf{z}^{(i)})_{\parallel{\boldsymbol{\rho}^{\boldsymbol{\nu}}}}\|_2\tan\varphi-(\|\mathbf{w}^{(i)}_{\perp{\boldsymbol{\rho}^{\boldsymbol{\nu}}}}\|_2-\|\mathbf{w}^{(i)}_{\parallel{\boldsymbol{\rho}^{\boldsymbol{\nu}}}}\|_2\tan\varphi)| \nonumber\\
    &\leq |\|(\mathbf{w}^{(i)}+\mathbf{z}^{(i)})_{\perp{\boldsymbol{\rho}^{\boldsymbol{\nu}}}}\|_2 - \|\mathbf{w}^{(i)}_{\perp{\boldsymbol{\rho}^{\boldsymbol{\nu}}}}\|_2| + \tan\varphi |\|(\mathbf{w}^{(i)}+\mathbf{z}^{(i)})_{\parallel{\boldsymbol{\rho}^{\boldsymbol{\nu}}}}\|_2-\|\mathbf{w}^{(i)}_{\parallel{\boldsymbol{\rho}^{\boldsymbol{\nu}}}}\|_2|\nonumber\\
    &\stackrel{(b)}{\leq}(1+\tan\varphi)\|\mathbf{z}^{(i)}\|_2
\end{align}

where (a) follows from the inequality $|x^+-y^+|\leq|x-y|$ for real numbers $x$ and $y$ and (b) follows from the reverse triangle inequality.

We have the following bound for a bucket $i$ in $B^c$:
\begin{align}
    H(\mathbf{w}^{(i)}+\mathbf{z}^{(i)})^2 - H(\mathbf{w}^{(i)})^2 &= (H(\mathbf{w}^{(i)}+\mathbf{z}^{(i)}) - H(\mathbf{w}^{(i)}))(H(\mathbf{w}^{(i)}+\mathbf{z}^{(i)}) + H(\mathbf{w}^{(i)}))\nonumber\\
    &\stackrel{(a)}{\leq}(1+\tan\varphi)\sqrt{n_c}z_{\max}(2H(\mathbf{w}^{(i)})+\sqrt{n_c}z_{\max})\nonumber\\
    &=\underbrace{2(1+\tan\varphi)\sqrt{n_c}z_{\max}}_{:=A}H(\mathbf{w}^{(i)})+\underbrace{(1+\tan\varphi)n_cz_{\max}^2}_{:=D}\label{eq:arrival_crude_bound}
\end{align}
where (a) follows from the Lipschitz proof and the bound on $H(\mathbf{w}^{(i)}+\mathbf{z}^{(i)}) + H(\mathbf{w}^{(i)})$ in case 4.
The overall drift bound due to arrival is
\begin{align}
    \lambda \E[\Delta V(\{\mathbf{w}^{(i)}\})]&\leq-\alpha\sum_{i\in B}H(\mathbf{w}^{(i)}) + |B|\left(\frac{\beta}{\eta}+\gamma\right)+\lambda A\sum_{i\in B^c}H(\mathbf{w}^{(i)}) + \lambda|B^c|D \nonumber \\
    &\leq-\alpha\sum_{i\in B}H(\mathbf{w}^{(i)}) + \underbrace{n_b\max\left\{\frac{\beta}{\eta}+\gamma,\lambda(A\eta+D)\right\}}_{:=E}+M \nonumber \\
    &\leq-\alpha\max_{i\in B}H(\mathbf{w}^{(i)}) + E + M \label{eq:overall_arrival_bound}
\end{align}
We set $\max_{i\in B}H(\mathbf{w}^{(i)})=0$ when $B=\varnothing$.

When bucket $i$ is in service\footnote{Here, we do not separate the case where there are more than one preferable service options and SEB implements a weighted-processor-sharing (see \Cref{sec:def-SEB}), which is essentially a convex combination of these preferable service options. As a result, the total service rate is still 1 and the inner product between $\nabla H(\mathbf{w}^{(i)})^2$,the processor-sharing is still negative, and \Cref{eq:overall_drift_bound} continues to hold.}, the total drift bound is:
\[
\mathcal{G}V(\{\mathbf{w}^{(i)}\})=&\,-\inner{\nabla H(\mathbf{w}^{(i)})^2}{\mathbf{r}^*}+\lambda \E[\Delta H(\mathbf{w}^{(i)})]-M(1-\rho^{\boldsymbol{\nu}})\\
=&\,-2H(\mathbf{w}^{(i)})\left(\frac{\inner{\mathbf{w}^{(i)}_{\perp{\boldsymbol{\rho}^{\boldsymbol{\nu}}}}}{\mathbf{r}^*}}{\|\mathbf{w}^{(i)}_{\perp{\boldsymbol{\rho}^{\boldsymbol{\nu}}}}\|_2}-\frac{\inner{{\boldsymbol{\rho}^{\boldsymbol{\nu}}}}{\mathbf{r}^*}}{\|{\boldsymbol{\rho}^{\boldsymbol{\nu}}}\|_2}\tan\varphi\right)+\lambda \E[\Delta H(\mathbf{w}^{(i)})]-M(1-\rho^{\boldsymbol{\nu}})\\
\stackrel{(a)}{\leq}&\,\lambda \E[\Delta H(\mathbf{w}^{(i)})]-M(1-\rho^{\boldsymbol{\nu}})\label{eq:overall_drift_bound}
\]
where (a) follows from the choice of $\varphi$ in \Cref{sec:bucket_SSC}.\\

We now show that the drift is negative on $\mathcal{K}^c$. 

Case 1: There exists a bucket $i$ such that $H(\mathbf{w}^{(i)})>R$. It suffices to consider only the arrival drift, as \cref{eq:overall_drift_bound} shows that service will only make the total drift smaller.

In this case, \cref{eq:overall_arrival_bound} implies
\[
    \mathcal{G}V(\{\mathbf{w}\})&\leq -\alpha R + E + M
\]
Thus, we choose 
\[
\label{eq:SEB_stability_R}
    R = \max\left\{\frac{E+M+1}{\alpha},\eta+1\right\}
\]
for all $i$, which leads to a total drift no larger than $-1$.\\

Case 2: $\sum_{i=1}^{n_b}\sum_{j=1}^{n_c}w_j^{(i)} > L$.

We note that if we also have $H(\mathbf{w}^{(i)})>R$ for a bucket $i$, then we reduce to case 1. So it suffices to look at when $H(\mathbf{w}^{(i)})\leq R$ for all $i=\{1,\ldots,n_b\}$. Notice that one bucket must be in service by \Cref{lemma:bucket_service_eligibility}. 

For $\mathbf{w}^{(i)}$ such that $\eta+1/2 < H(\mathbf{w}^{(i)}) \leq R$, we use \cref{eq:overall_arrival_bound} to bound the drift due to arrival

\[
    \lambda\E[\Delta H(\mathbf{w}^{(i)})]\leq -\alpha \left(\eta+\frac{1}{2}\right) + E + M
\]

For $\mathbf{w}^{(i)}$ such that $H(\mathbf{w}^{(i)}) \leq \eta+1/2$, we use \cref{eq:arrival_crude_bound} to bound the drift due to arrival

\[
    \lambda\E[\Delta H(\mathbf{w}^{(i)})]\leq \lambda n_b\left[A\left(\eta+\frac{1}{2}\right) + D\right] + M
\]

Therefore, the overall drift bound is given by \cref{eq:overall_drift_bound}:

\[
    \mathcal{G}V(\{\mathbf{w}^{(i)}\}) \leq -M(1-\rho^{\boldsymbol{\nu}}) + n_b\max\left\{-\alpha \left(\eta+\frac{1}{2}\right) + E, n_b\left[A\left(\eta+\frac{1}{2}\right) + D\right]\right\}
\]

We choose 
\[
\label{eq:SEB_stability_M}
    M = \frac{n_b\max\left\{-\alpha (\eta+1/2) + E, \lambda n_b[A(\eta+1/2) + D]\right\}+1}{1-\rho^{\boldsymbol{\nu}}}   
\]
which leads to a total drift no larger than $-1$.

The choices of $R$, $M$, and $L$ may seem circular. We now give the order in which these parameters are determined.
\begin{enumerate}
    \item Choose an arbitrary $\eta > 0$.
    \item Choose $M$ as in \cref{eq:SEB_stability_M}.
    \item Choose $R$ as in \cref{eq:SEB_stability_R}
    \item Choose $L$ according to \Cref{lemma:bucket_service_eligibility}.
\end{enumerate}
\hfill\Halmos
\end{proof}

\subsection{Proof of \cref{thm:SSC}}
\label{app:SSC}

The following lemma, which was proved in \citet{hajek1982hitting}, is foundational to discrete-time state-space collapse and will play a critical role in our proof for continuous-time state-space collapse.
\begin{lemma}[\citet{hajek1982hitting} Lemma 2.2]\label{Hajek_lemma}
Suppose that $X$ and $Y$ are random variables such that $|X|\leq_{st}Y$ and $\E[e^{\lambda Y}]<\infty$ for some $\theta>0$. Then for $0\leq\varepsilon\leq\theta$,
\begin{equation}\label{Hajek_inequality}
\E[e^{\varepsilon X}]\leq 1+\varepsilon\E[X]+\varepsilon^2c
\end{equation}
where $c$ is given by 
\[c=\sum_{k=2}^{\infty}\frac{\theta^{k-2}}{k!}\E[Y^k]\]
\end{lemma}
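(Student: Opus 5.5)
The statement to prove is Hajek's inequality: if $|X|\leq_{st}Y$ and $\E[e^{\theta Y}]<\infty$, then for $0\leq\varepsilon\leq\theta$ we have $\E[e^{\varepsilon X}]\leq 1+\varepsilon\E[X]+\varepsilon^2c$. (The hypothesis is typed with $\lambda$ but should read $\theta$.) The plan is a Taylor expansion of $e^{\varepsilon X}$ whose higher-order terms are controlled through $Y$.

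\textbf{Step 1: finiteness.} Since $|X|\leq_{st}Y$ and $e^{\theta y}$ is increasing, $\E[e^{\theta|X|}]\leq\E[e^{\theta Y}]<\infty$. Hence $\E|X|<\infty$ and every moment $\E[|X|^k]$, $\E[Y^k]$ is finite. The series defining $c$ converges, because $\sum_{k\geq 2}\theta^{k}\E[Y^k]/k!\leq\E[e^{\theta Y}]$.

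\textbf{Step 2: pointwise expansion.} Write
\[
e^{\varepsilon X}=1+\varepsilon X+\sum_{k\geq2}\frac{\varepsilon^kX^k}{k!}\leq 1+\varepsilon X+\sum_{k\geq 2}\frac{\varepsilon^k|X|^k}{k!}.
\]
For $0\leq\varepsilon\leq\theta$ and $k\geq 2$ we have $\varepsilon^k=\varepsilon^2\varepsilon^{k-2}\leq\varepsilon^2\theta^{k-2}$. This gives the pointwise bound
\[
e^{\varepsilon X}\leq 1+\varepsilon X+\varepsilon^2\sum_{k\geq2}\frac{\theta^{k-2}|X|^k}{k!}.
\]

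\textbf{Step 3: take expectations.} The tail sum is dominated by $\theta^{-2}e^{\theta|X|}$, which is integrable by Step 1. We may therefore interchange expectation and summation by Tonelli. For each $k$, the map $y\mapsto y^k$ is increasing on $[0,\infty)$, so stochastic dominance gives $\E[|X|^k]\leq\E[Y^k]$. Summing yields the claimed inequality with $c=\sum_{k\geq2}\theta^{k-2}\E[Y^k]/k!$.

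The only delicate point is the case $\varepsilon=0$, where the inequality is trivially an equality, together with justifying the interchange; both are handled by Step 1. No real obstacle is expected.
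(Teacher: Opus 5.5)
Your proof is correct. The paper gives no proof of its own and simply cites Hajek's Lemma 2.2, and your argument is the standard one behind it: a termwise Taylor bound using $X^k\le|X|^k$, $\varepsilon^k\le\varepsilon^2\theta^{k-2}$ and $\E[|X|^k]\le\E[Y^k]$ from stochastic dominance. You are also right that the hypothesis should read $\E[e^{\theta Y}]<\infty$, since the $\lambda$ there is a typo.
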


\restate*\ref{thm:SSC}
\begin{proof}{\textit{Proof.}}
Fix some positive integer $n$. Applying Rate Conservation Law \citep{miyazawa1994rate} to $e^{\eta(V(\mathbf{W})\wedge n)}$ gives us
\begin{equation}\label{RCL_SSC}
\E\left[\eta D_t(V(\mathbf{W})\wedge n)e^{\eta(V(\mathbf{W})\wedge n)}\right]+\lambda\E\left[e^{\eta(V(\mathbf{W}_+)\wedge n)}-e^{\eta(V(\mathbf{W})\wedge n)}\right]=0
\end{equation}
We first look into the second term on the LHS of \eqref{RCL_SSC} conditioned on $\mathbf{W}$. 
\begin{align*}
&\E\left[e^{\eta(V(\mathbf{W}_+)\wedge n)}-e^{\eta(V(\mathbf{W})\wedge n)}\mid \mathbf{W}\right]\\
=&\,\E\left[\left(e^{\eta(V(\mathbf{W}_+)\wedge n - V(\mathbf{W})\wedge n)}-1\right)e^{\eta(V(\mathbf{W})\wedge n)}\mid \mathbf{W}\right]\\
\stackrel{(a)}{\leq}&\,\E\left[\left(e^{\eta\Delta V(\mathbf{W})\cdot\mathbf{1}(V(\mathbf{W})\leq n)}-1\right)e^{\eta(V(\mathbf{W})\wedge n)}\mid \mathbf{W}\right]\\
=&\,\E\left[\left(e^{\eta\Delta V(\mathbf{W})}-1\right)e^{\eta V(\mathbf{W})}\mathbf{1}(V(\mathbf{W})\leq n)\mid \mathbf{W}\right]\\
=&\,\E\left[\eta\Delta V(\mathbf{W})e^{\eta V(\mathbf{W})}\mathbf{1}(V(\mathbf{W})\leq n)\mid \mathbf{W}\right]+\E\left[\left(e^{\eta\Delta V(\mathbf{W})}-\eta\Delta V(\mathbf{W})-1\right)e^{\eta V(\mathbf{W})}\mathbf{1}(V(\mathbf{W})\leq n)\mid \mathbf{W}\right]\\
\leq&\,\E\left[\eta\Delta V(\mathbf{W})\mid \mathbf{W}\right]e^{\eta V(\mathbf{W})}\mathbf{1}(V(\mathbf{W})\leq n)+\E\left[\left(e^{\eta\Delta V(\mathbf{W})}-\eta\Delta V(\mathbf{W})-1\right)\mid \mathbf{W}\right]e^{\eta V(\mathbf{W})}\mathbf{1}(V(\mathbf{W})\leq n)\\
\stackrel{(b)}{\leq}&\,\E\left[\eta\Delta V(\mathbf{W})\mid \mathbf{W}\right]e^{\eta V(\mathbf{W})}\mathbf{1}(V(\mathbf{W})\leq n)+\eta^2c(\mathbf{w})\,e^{\eta V(\mathbf{W})}\mathbf{1}(V(\mathbf{W})\leq n)
\end{align*}
where\\
(a) follows because we have, for all $\mathbf{w}$,
\[V(\mathbf{W}_+)\wedge n - V(\mathbf{W})\wedge n \leq \Delta V(\mathbf{w})\mathbf{1}(V(\mathbf{w})\leq n)\]
(b) follows from assumption (ii) and \eqref{Hajek_inequality} in \Cref{Hajek_lemma} in that for any $0<\eta\leq\theta$ we have
\[\E\left[e^{\eta\Delta V(\mathbf{W})}-\eta\Delta V(\mathbf{W})-1\mid \mathbf{W}\right]\leq\eta^2c(\mathbf{W})\]
where
\[
c(\mathbf{W})=\sum_{k=2}^\infty\frac{\theta^{k-2}}{k!}\E\left[|\Delta V(\mathbf{W})|^k\right]
\]
Here, $c(\mathbf{W})$ depends on $\mathbf{W}$. To obtain a bound independent of $\mathbf{W}$, we note that assumption (ii) gives
\[
c(\mathbf{W})=\frac{\E\left[e^{\theta|\Delta V(\mathbf{W})|}\right]-(1+\theta\E[|\Delta V(\mathbf{W})|])}{\theta^2}\leq\frac{\E\left[e^{\theta|\Delta V(\mathbf{W})|}\right]}{\theta^2}\leq\frac{D}{\theta^2}
\]

Taking expectation on both sides with respect to the stationary distribution of $\mathbf{W}(t)$ gives us 
\[
&\E\left[e^{\eta(V(\mathbf{W}_+)\wedge n)}-e^{\eta(V(\mathbf{W})\wedge n)}\right]\\
\leq&\,\E\left[\E\left[\eta\Delta V(\mathbf{W})\mid \mathbf{W}\right]e^{\eta V(\mathbf{W})}\mathbf{1}(V(\mathbf{W})\leq n)\right]+\frac{\eta^2D}{\theta^2}\,\E\left[e^{\eta V(\mathbf{W})}\mathbf{1}(V(\mathbf{W})\leq n)\right]
\]

Now we turn to analyze the first term on the LHS of \ref{RCL_SSC}. First observe that for any $\mathbf{w}$,
\[\eta D_t(V(\mathbf{w})\wedge n)e^{\eta(V(\mathbf{w})\wedge n)}\leq\eta D_t(V(\mathbf{w}))e^{\eta V(\mathbf{w})}\mathbf{1}(V(\mathbf{w})\leq n)\]
By assumption (i),
\[-\mathcal{G}V(\mathbf{w})=-D_tV(\mathbf{w})-\lambda\E[\Delta V(\mathbf{w})]\geq \alpha-\beta\cdot\mathbf{1}(V(\mathbf{w})\leq K)\]
Therefore, for any $\mathbf{w}$,
\begin{align*}
&-\eta D_t(V(\mathbf{w}))e^{\eta V(\mathbf{w})}\mathbf{1}(V(\mathbf{w})\leq n)\\
\geq&\,\eta(\alpha-\beta\mathbf{1}(V(\mathbf{w})\leq K)+\lambda\E[\Delta V(\mathbf{w})])e^{\eta V(\mathbf{w})}\mathbf{1}(V(\mathbf{w})\leq n)\\
=&\,(\alpha+\lambda\E[\Delta V(\mathbf{w})])\eta e^{\eta V(\mathbf{w})}\mathbf{1}(V(\mathbf{w})\leq n)-\beta\eta e^{\eta V(\mathbf{w})}\mathbf{1}(V(\mathbf{w})\leq n)\mathbf{1}(V(\mathbf{w})\leq K)\\
\geq&\,(\alpha+\lambda\E[\Delta V(\mathbf{w})])\eta e^{\eta V(\mathbf{w})}\mathbf{1}(V(\mathbf{w})\leq n)-\beta\eta e^{\eta K}
\end{align*}

Taking expectation on both sides with respect to $\pi$ gives us 
\begin{align*}
&-\eta\E\left[D_t(V(\mathbf{W}))e^{\eta V(\mathbf{W})}\mathbf{1}(V(\mathbf{W})\leq n)\right]\\
\geq
&\,\alpha\eta\E\left[e^{\eta V(\mathbf{W})}\mathbf{1}(V(\mathbf{W})\leq n)\right]+\lambda\E\left[\E\left[\eta\Delta V(\mathbf{W})\mid \mathbf{W}\right]e^{\eta V(\mathbf{W})}\mathbf{1}(V(\mathbf{W})\leq n)\right]-\beta\eta e^{\eta K}
\end{align*}
Rearranging \eqref{RCL_SSC} and applying all inequalities obtained above give us
\[
&\alpha\eta\E\left[e^{\eta V(\mathbf{W})}\mathbf{1}(V(\mathbf{W})\leq n)\right]+\lambda\E\left[\E\left[\eta\Delta V(\mathbf{W})\mid \mathbf{W}\right]e^{\eta V(\mathbf{W})}\mathbf{1}(V(\mathbf{W})\leq n)\right]-\beta\eta e^{\eta K}\\
\leq&\,-\eta\E\left[D_t(V(\mathbf{W}))e^{\eta V(\mathbf{W})}\mathbf{1}(V(\mathbf{W})\leq n)\right]\\
=&\,\lambda\E\left[e^{\eta(V(\mathbf{W}_+)\wedge n)}-e^{\eta(V(\mathbf{W})\wedge n)}\right]\\
\leq&\,\lambda\E\left[\E\left[\eta\Delta V(\mathbf{W})\mid \mathbf{W}\right]e^{\eta V(\mathbf{W})}\mathbf{1}(V(\mathbf{W})\leq n)\right]+\lambda\frac{\eta^2D}{\theta^2}\,\E\left[e^{\eta V(\mathbf{W})}\mathbf{1}(V(\mathbf{W})\leq n)\right]
\]
Rearranging,
\[\left(\alpha-\lambda\frac{\eta D}{\theta^2}\right)\E\left[e^{\eta V(\mathbf{W})}\mathbf{1}(V(\mathbf{W})\leq n)\right]\leq\beta e^{\eta K}\]
Take $\eta>0$ so small that $\alpha-\lambda\frac{\eta D}{\theta^2}>0$. We have
\[\E\left[e^{\eta V(\mathbf{W})}\mathbf{1}(V(\mathbf{W})\leq n)\right]\leq\frac{\theta^2\beta e^{\eta K}}{\theta^2\alpha-\lambda\eta D}\]
Letting $n\to\infty$ and invoking monotone convergence theorem complete the proof.
\hfill\Halmos
\end{proof}

\subsection{Proof of \Cref{thm:bucket_SSC}}
\label{app:bucket_SSC}
We would like to prove the bucket state-space collapse to the cone $\mathcal{C}$ result as discussed in \Cref{sec:cone_SSC,sec:bucket_SSC}. 

We first note that condition (ii) of \cref{thm:SSC} is satisfied because $|\Delta H_i(\mathbf{w})|\leq(1+\tan\varphi)\sqrt{n_c}b_i$ according to \cref{eq:H_Lipschitz}. 

\restate*\ref{thm:bucket_SSC}

\begin{proof}{\textit{Proof.}}
Under SEB, a bucket alternates between two modes: pure work accumulation mode, where no service is given to the bucket, and service mode, where all servers work on the bucket. We will analyze the drifts under both modes separately.

In the work accumulation mode, the generator for the work vector $\mathbf{w}$ is bounded by
\[
\mathcal{G}H_i(\mathbf{w})=\lambda\E[\Delta H_i(\mathbf{w})]\leq\min\left\{\frac{\lambda_i\E[\|\mathbf{Z}_{\perp{\boldsymbol{\rho}^{\boldsymbol{\nu}}}}\|_2^2]}{2H_i(\mathbf{w})}-\|{\boldsymbol{\rho}^{\boldsymbol{\nu}}}^{(i)}\|_2\tan\varphi, \lambda_i(1+\tan\varphi)\sqrt{n_c}b_i\right\}
\]
according to \cref{eq:H_arrival_drift_bound}.

Now we show that, in the service mode, $\mathbf{w}^{(i)}$ also collapses to cone $\mathcal{C}$. We note that $H_i(\mathbf{w})$ is not everywhere differentiable. Specifically, it is not differentiable when $\|\mathbf{w}_{\perp{\boldsymbol{\rho}^{\boldsymbol{\nu}}}}^{(i)}\|_2-\|\mathbf{w}_{\parallel{\boldsymbol{\rho}^{\boldsymbol{\nu}}}}^{(i)}\|_2\tan\varphi=0$ (i.e. on the surface of the cone $\mathcal{C}$) and when $\mathbf{w}=0$. Since the system idles when $\mathbf{w}=0$, nondifferentiability there does not concern us. Despite the nondifferentiability at the surface of $\mathcal{C}$, $H_i(\mathbf{w})$ remains one-sided differentiable, which is sufficient for generators. Below, we will first handle the differentiable cases, then discuss how to address the one-sided differentiable region.

The generator for the bucket, when $H_i$ is differentiable and the bucket is in service mode, is
\[
\mathcal{G}H_i(\mathbf{w})=-\inner{\nabla H_i(\mathbf{w})}{\mathbf{r}^*}+\lambda\E[\Delta H_i(\mathbf{w})]
\]

In light of the drift analysis of $\lambda\E[\Delta H_i(\mathbf{w})]$, it suffices to show that $\inner{\nabla H_i(\mathbf{w})}{\mathbf{r}^*}>0$. 
We first compute $\nabla H_i(\mathbf{w})$.
\begin{align*}
\nabla H_i(\mathbf{w})&=\nabla\left(\|\mathbf{w}^{(i)}_{\perp{\boldsymbol{\rho}^{\boldsymbol{\nu}}}}\|_2-\|\mathbf{w}^{(i)}_{\parallel{\boldsymbol{\rho}^{\boldsymbol{\nu}}}}\|_2\tan\varphi\right)
\end{align*}
We have
\begin{align*}
\nabla\|\mathbf{w}^{(i)}_{\parallel{\boldsymbol{\rho}^{\boldsymbol{\nu}}}}\|_2&=\frac{{\boldsymbol{\rho}^{\boldsymbol{\nu}}}}{\|{\boldsymbol{\rho}^{\boldsymbol{\nu}}}\|_2^2}\frac{\inner{\mathbf{w}^{(i)}}{{\boldsymbol{\rho}^{\boldsymbol{\nu}}}}}{\|\mathbf{w}^{(i)}_{\parallel{\boldsymbol{\rho}^{\boldsymbol{\nu}}}}\|_2}=\frac{{\boldsymbol{\rho}^{\boldsymbol{\nu}}}}{\|{\boldsymbol{\rho}^{\boldsymbol{\nu}}}\|_2}\\
\nabla\|\mathbf{w}^{(i)}_{\perp{\boldsymbol{\rho}^{\boldsymbol{\nu}}}}\|_2&=\frac{\mathbf{w}^{(i)}_{\perp{\boldsymbol{\rho}^{\boldsymbol{\nu}}}}}{\|\mathbf{w}^{(i)}_{\perp{\boldsymbol{\rho}^{\boldsymbol{\nu}}}}\|_2}
\end{align*}
Thus, by assumptions on $\varphi$ in \Cref{sec:cone_SSC},
\[
\inner{\nabla H_i(\mathbf{w})}{\mathbf{r}^*}=\frac{\inner{\mathbf{w}^{(i)}_{\perp{\boldsymbol{\rho}^{\boldsymbol{\nu}}}}}{\mathbf{r}^*}}{\|\mathbf{w}^{(i)}_{\perp{\boldsymbol{\rho}^{\boldsymbol{\nu}}}}\|_2}-\frac{\inner{{\boldsymbol{\rho}^{\boldsymbol{\nu}}}}{\mathbf{r}^*}}{\|{\boldsymbol{\rho}^{\boldsymbol{\nu}}}\|_2}\tan\varphi>0
\]

When $\mathbf{w}^{(i)}\in\mathcal{C}$, we have 
\[
\mathcal{G}H_i(\mathbf{w})=(-\inner{\nabla H_i(\mathbf{w})}{\mathbf{r}^*})^++\lambda\E[\Delta H_i(\mathbf{w})]
\]
Again by assumptions on $\varphi$ in \Cref{sec:cone_SSC}, 
\[
\mathcal{G}H_i(\mathbf{w})=\lambda\E[\Delta H_i(\mathbf{w})]
\]

Set
\[
K_i=\frac{\lambda_i\E\left[\|\mathbf{Z}^{(i)}_{\perp{\boldsymbol{\rho}^{\boldsymbol{\nu}}}}\|_2^2\right]}{\|{\boldsymbol{\rho}^{\boldsymbol{\nu}}}^{(i)}\|_2}\cot\varphi
\]
as in \Cref{thm:SSC}. We have
\[
\mathcal{G}H_i(\mathbf{w})\leq&-\frac{\|{\boldsymbol{\rho}^{\boldsymbol{\nu}}}^{(i)}\|_2}{2}\tan\varphi+(1+\tan\varphi)\sqrt{n_c}b_i\mathbf{1}(H(\mathbf{w})\leq K_i)
\]
by discarding the nonpositive drift term due to service. We set 
\[
\alpha_i&=\frac{\|{\boldsymbol{\rho}^{\boldsymbol{\nu}}}^{(i)}\|_2}{2}\tan\varphi\\
\beta_i&=\lambda_i(1+\tan\varphi)\sqrt{n_c}b_i\\
\theta_i&=\frac{2}{(1+\tan\varphi)\sqrt{n_c}b_{i}}
\]
as in \Cref{thm:SSC}. It follows that $D=e^2$.

With these parameters in hand, we are ready to bound $\E[e^{\eta H(\mathbf{w})}]$ in terms of $b_{i-1}$ and $b_{i}$. First note that we have
\[
\sqrt{n_c}\lambda_ib_{i-1}\leq \|{\boldsymbol{\rho}^{\boldsymbol{\nu}}}^{(i)}\|_2\leq \sqrt{n_c}\lambda_ib_{i}
\]
which immediately yields the following bounds
\[
\frac{\sqrt{n_c}\lambda_ib_{i-1}}{2}\tan\varphi \leq &\alpha_i \leq \frac{\sqrt{n_c}\lambda_ib_{i}}{2}\tan\varphi\\
&\beta_i \leq \lambda_i(1+\tan\varphi)\sqrt{n_c}b_i \\
&K_i\leq\sqrt{n_c}\frac{b_{i}^2}{b_{i-1}}\cot\varphi
\]
Recall that $\eta_i$ in \Cref{thm:SSC} must be taken so that $0<\eta_i<\min\left\{\frac{\alpha_i\theta_i^2}{\lambda_i D},\theta_i\right\}$. Since we have
\[
\frac{\alpha_i\theta_i^2}{\lambda_i D}\geq\frac{2\tan\varphi}{(1+\tan\varphi)^2}\frac{e^{-2}}{\sqrt{n_c}}\frac{b_{i-1}}{b_{i}^2},
\]
$\eta_i=\frac{\tan\varphi}{(1+\tan\varphi)^2}\frac{e^{-2}}{\sqrt{n_c}}\frac{b_{i-1}}{b_{i}^2}$ is a fine choice. Then we have
\[
\eta_iK_i&\leq\frac{\tan\varphi}{(1+\tan\varphi)^2}\frac{e^{-2}}{\sqrt{n_c}}\frac{b_{i-1}}{b_{i}^2}\cdot\sqrt{n_c}\frac{b_i^2}{b_{i-1}}\cot\varphi=\frac{e^{-2}}{(1+\tan\varphi)^2}\leq e^{-2}\\
\theta_i^2\beta_ie^{\eta_i K_i} &\leq \frac{4}{(1+\tan\varphi)^2n_cb_i^2}\lambda_i\sqrt{n_c}(1+\tan\varphi)b_ie^{e^{-2}}\leq\frac{1}{1+\tan\varphi}\frac{5}{\sqrt{n_c}}\frac{\lambda_i}{b_i}\\
\theta_i^2\alpha_i-\lambda_i\eta_iD &\geq \frac{4}{(1+\tan\varphi)^2n_cb_i^2}\frac{\sqrt{n_c}\lambda_ib_{i-1}}{2}\tan\varphi-\lambda_i\frac{\tan\varphi}{(1+\tan\varphi)^2}\frac{e^{-2}}{\sqrt{n_c}}\frac{b_{i-1}}{b_i^2}e^2\\
&=\lambda_i\frac{\tan\varphi}{(1+\tan\varphi)^2}\frac{1}{\sqrt{n_c}}\frac{b_{i-1}}{b_i^2}
\]
Finally, we invoke \Cref{thm:SSC} to obtain
\[
\E\left[e^{\eta_iH_i(\mathbf{W})}\right]&\leq\frac{5(1+\tan\varphi)}{\tan\varphi}\frac{b_i}{b_{i-1}}
\]
\hfill\Halmos
\end{proof}

\subsection{Proof of \Cref{thm:eligibility_work_amount}}
\label{app:eligibility_work_amount}
\restate*\ref{thm:eligibility_work_amount}

The theorem is a corollary of the following lemma, which relates the $\ell_1$ and $\ell_\infty$ norms for workload vectors in the cone $\mathcal{C}$.

\begin{lemma}\label{lemma:bucket_work_lower_bound}
If $\mathbf{w}^{(i)}\in\mathcal{C}$, then
\[
\min_{1 \leq j \leq n_c} \mathbf{w}^{(i)}_j \geq \|\mathbf{w}^{(i)}\|_1 \cdot \frac{1}{\sqrt{n_c}}\left(\frac{{\boldsymbol{\rho}^{\boldsymbol{\nu}}}_{\min}}{\|{\boldsymbol{\rho}^{\boldsymbol{\nu}}}\|_2}\cos\varphi-\sin\varphi\right)
\]
\end{lemma}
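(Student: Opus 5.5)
Fix a class index $j$ and write $\mathbf{w}=\mathbf{w}^{(i)}$, $\boldsymbol{\rho}=\boldsymbol{\rho}^{\boldsymbol{\nu}}$. The plan is to lower bound the $j$-th coordinate $w_j=\inner{\mathbf{w}}{\mathbf{e}_j}$ by splitting $\mathbf{w}=\mathbf{w}_{\parallel\boldsymbol{\rho}}+\mathbf{w}_{\perp\boldsymbol{\rho}}$, exactly as in the proof of \Cref{lemma:bucket_service_eligibility}, and then use the cone condition to express both pieces in terms of $\|\mathbf{w}\|_2$. The parallel part is $\mathbf{w}_{\parallel\boldsymbol{\rho}}=\|\mathbf{w}_{\parallel\boldsymbol{\rho}}\|_2\,\boldsymbol{\rho}/\|\boldsymbol{\rho}\|_2$; its sign is nonnegative because $\mathbf{w}$ and $\boldsymbol{\rho}$ both lie in the nonnegative orthant. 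Hence its $j$-th coordinate is at least $\|\mathbf{w}_{\parallel\boldsymbol{\rho}}\|_2\,\rho_{\min}/\|\boldsymbol{\rho}\|_2$. The perpendicular part contributes at least $-\|\mathbf{w}_{\perp\boldsymbol{\rho}}\|_2$ by Cauchy--Schwarz against $\mathbf{e}_j$.

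Next, I use membership in $\mathcal{C}$. Set $\psi$ as the angle between $\mathbf{w}$ and $\boldsymbol{\rho}$, so $\psi\in[0,\varphi]$. Then $\|\mathbf{w}_{\parallel\boldsymbol{\rho}}\|_2=\|\mathbf{w}\|_2\cos\psi\ge\|\mathbf{w}\|_2\cos\varphi$ and $\|\mathbf{w}_{\perp\boldsymbol{\rho}}\|_2=\|\mathbf{w}\|_2\sin\psi\le\|\mathbf{w}\|_2\sin\varphi$. Combining,
\[
w_j\ \ge\ \|\mathbf{w}\|_2\left(\frac{\rho_{\min}}{\|\boldsymbol{\rho}\|_2}\cos\varphi-\sin\varphi\right).
\]
The bracket is positive by the third condition on $\varphi$ in \Cref{sec:cone_SSC}, since $\tan\varphi<\rho_{\min}/\|\boldsymbol{\rho}\|_2$. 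This positivity is what lets me pass to a weaker norm without flipping the inequality. Finally, $\|\mathbf{w}\|_2\ge\|\mathbf{w}\|_1/\sqrt{n_c}$ by Cauchy--Schwarz, which gives the lemma after minimizing over $j$.

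The only delicate step is the monotonicity in $\psi$: replacing $\cos\psi$ by $\cos\varphi$ and $\sin\psi$ by $\sin\varphi$ must both go in the right direction. They do, because the coefficient of $\cos\psi$ is positive and the coefficient of $\sin\psi$ is negative. The degenerate case $\mathbf{w}=\mathbf{0}$ is trivial.

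\Cref{thm:eligibility_work_amount} then follows directly. Under its hypothesis, each class has work $w_j\ge n_s b_i$. Since each job in bucket $i$ has remaining priority index at most $b_i$, this means at least $n_s$ jobs of every class are present. That suffices to fulfill any service option, because no option serves more than $n_s$ jobs.
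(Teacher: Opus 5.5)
Your proof is correct and follows the paper's argument essentially line for line. Like the paper, you split $\mathbf{w}^{(i)}$ into components parallel and perpendicular to $\boldsymbol{\rho}^{\boldsymbol{\nu}}$, use the cone condition to bound the $j$-th coordinate of the parallel part below by $\|\mathbf{w}^{(i)}\|_2\cos\varphi\,\rho_{\min}/\|\boldsymbol{\rho}^{\boldsymbol{\nu}}\|_2$ and the perpendicular contribution below by $-\|\mathbf{w}^{(i)}\|_2\sin\varphi$, and then use positivity of the bracket (from the choice of $\varphi$) to pass to $\|\mathbf{w}^{(i)}\|_2\ge\|\mathbf{w}^{(i)}\|_1/\sqrt{n_c}$; your explicit checks that $\langle\mathbf{w}^{(i)},\boldsymbol{\rho}^{\boldsymbol{\nu}}\rangle\ge 0$ and that the bracket is positive are steps the paper uses only implicitly.
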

\begin{proof}{\textit{Proof.}}
Let $\mathbf{e}_j$ be the $j$-th standard basis in $\mathbb{R}^n$, we have
\begin{align*}
\inner{\mathbf{w}^{(i)}}{\mathbf{e}_j}&=\inner{\mathbf{w}^{(i)}_{\|{\boldsymbol{\rho}^{\boldsymbol{\nu}}}}}{\mathbf{e}_j}+\inner{\mathbf{w}^{(i)}_{\perp{\boldsymbol{\rho}^{\boldsymbol{\nu}}}}}{\mathbf{e}_j}\\
&\geq\frac{\inner{\mathbf{w}^{(i)}}{{\boldsymbol{\rho}^{\boldsymbol{\nu}}}}}{\|{\boldsymbol{\rho}^{\boldsymbol{\nu}}}\|_2^2}{\boldsymbol{\rho}^{\boldsymbol{\nu}}}_j-\|\mathbf{w}^{(i)}_{\perp{\boldsymbol{\rho}^{\boldsymbol{\nu}}}}\|_2\\
&\stackrel{(a)}{\geq}\frac{\inner{\mathbf{w}^{(i)}}{{\boldsymbol{\rho}^{\boldsymbol{\nu}}}}}{\|{\boldsymbol{\rho}^{\boldsymbol{\nu}}}\|_2^2}\rho_i-\|\mathbf{w}^{(i)}\|_2\sin\varphi\\
&\stackrel{(b)}{\geq}\frac{{\boldsymbol{\rho}^{\boldsymbol{\nu}}}_{j}}{\|{\boldsymbol{\rho}^{\boldsymbol{\nu}}}\|_2}\|\mathbf{w}^{(i)}\|_2\cos\varphi-\|\mathbf{w}^{(i)}\|_2\sin\varphi\\
&\stackrel{(c)}{\geq}\frac{1}{\sqrt{n_c}}\|\mathbf{w}^{(i)}\|_1\left(\frac{{\boldsymbol{\rho}^{\boldsymbol{\nu}}}_{\min}}{\|{\boldsymbol{\rho}^{\boldsymbol{\nu}}}\|_2}\cos\varphi-\sin\varphi\right)
\end{align*}
where (a) and (b) follow from the assumption that $\mathbf{w}^{(i)}\in\mathcal{C}$ and (c) follows from the norm inequality. Note further that 
\[
\frac{{\boldsymbol{\rho}^{\boldsymbol{\nu}}}_{\min}}{\|{\boldsymbol{\rho}^{\boldsymbol{\nu}}}\|_2}\cos\varphi-\sin\varphi>0
\]
because of assumption in \Cref{sec:cone_SSC} on $\varphi$.
\hfill\Halmos
\end{proof}

\subsection{Proof of \Cref{lem:H_lower_bound}}
\label{app:H_lower_bound}

\restate*\ref{lem:H_lower_bound}

\begin{proof}{\textit{Proof.}}
We begin by showing that, for any $\mathbf{w}^{(j)}\not\in\mathcal{C}$, there exists $\overline{\mathbf{w}}^{(j)}\in\mathcal{C}$ such that $H(\mathbf{w}^{(i)})=\|\mathbf{w}^{(j)} - \overline{\mathbf{w}}^{(j)}\|_2$. One can verify that
\[
    \overline{\mathbf{w}}^{(j)}=\mathbf{w}^{(j)}_{\parallel{\boldsymbol{\rho}^{\boldsymbol{\nu}}}}+\frac{\mathbf{w}^{(j)}_{\perp{\boldsymbol{\rho}^{\boldsymbol{\nu}}}}}{\|\mathbf{w}^{(j)}_{\perp{\boldsymbol{\rho}^{\boldsymbol{\nu}}}}\|_2}\|\mathbf{w}^{(j)}_{\parallel{\boldsymbol{\rho}^{\boldsymbol{\nu}}}}\|_2\tan\varphi
\]
is such a vector. 

Since $\mathbf{w}^{(j)}\not\in\mathcal{E}_i$, there exists $k=1,\ldots,n_c$ such that $\mathbf{w}^{(j)}_k < n_s b_j$. Thus,
\[
H(\mathbf{w}^{(i)})&=\|\mathbf{w}^{(j)} - \overline{\mathbf{w}}^{(j)}\|_2\geq \overline{\mathbf{w}}^{(j)}_k - \mathbf{w}^{(j)}_k \geq \frac{\tau}{\tan\varphi} \|\overline{\mathbf{w}}^{(j)}\|_1
\]
where the last inequality comes from \Cref{lemma:bucket_work_lower_bound}, which relates the $\ell_1$ and $\ell_\infty$ norms for workload vectors in the cone $\mathcal{C}$. It remains to lower bound $\|\overline{\mathbf{w}}^{(j)}\|_1$. Notice that we have
\[
\|\overline{\mathbf{w}}^{(j)}\|_1 = \left\|\mathbf{w}^{(j)}_{\parallel{\boldsymbol{\rho}^{\boldsymbol{\nu}}}}+\mathbf{w}^{(j)}_{\perp{\boldsymbol{\rho}^{\boldsymbol{\nu}}}}\frac{\|\mathbf{w}^{(j)}_{\parallel{\boldsymbol{\rho}^{\boldsymbol{\nu}}}}\|_2}{\|\mathbf{w}^{(j)}_{\perp{\boldsymbol{\rho}^{\boldsymbol{\nu}}}}\|_2}\tan\varphi\right\|_1\stackrel{(a)}{\geq} \min\{\|\mathbf{w}^{(j)}_{\parallel{\boldsymbol{\rho}^{\boldsymbol{\nu}}}}\|_1, \|\mathbf{w}^{(j)}\|_1\}
\]
where (a) follows from the following observations
\* Note that $\frac{\|\mathbf{w}^{(j)}_{\parallel{\boldsymbol{\rho}^{\boldsymbol{\nu}}}}\|_2}{\|\mathbf{w}^{(j)}_{\perp{\boldsymbol{\rho}^{\boldsymbol{\nu}}}}\|_2}$ can be viewed as cotangent of the angle between $\mathbf{w}^{(j)}$ and ${\boldsymbol{\rho}^{\boldsymbol{\nu}}}$. Since $\mathbf{w}^{(j)}\not\in\mathcal{C}$, this angle is larger than $\varphi$. This implies that $\frac{\|\mathbf{w}^{(j)}_{\parallel{\boldsymbol{\rho}^{\boldsymbol{\nu}}}}\|_2}{\|\mathbf{w}^{(j)}_{\perp{\boldsymbol{\rho}^{\boldsymbol{\nu}}}}\|_2}\tan\varphi\in(0,1)$.
\* The $\ell_1$ norm is a linear function, so the minimum is achieved whenever the constant $\frac{\|\mathbf{w}^{(j)}_{\parallel{\boldsymbol{\rho}^{\boldsymbol{\nu}}}}\|_2}{\|\mathbf{w}^{(j)}_{\perp{\boldsymbol{\rho}^{\boldsymbol{\nu}}}}\|_2}\tan\varphi$ is 0 or 1.
\*/

Analyzing $\|\mathbf{w}^{(j)}_{\parallel{\boldsymbol{\rho}^{\boldsymbol{\nu}}}}\|_1$ further gives us
\[
\|\mathbf{w}^{(j)}_{\parallel{\boldsymbol{\rho}^{\boldsymbol{\nu}}}}\|_1 = \frac{\inner{\mathbf{w}^{(j)}}{{\boldsymbol{\rho}^{\boldsymbol{\nu}}}}}{\|{\boldsymbol{\rho}^{\boldsymbol{\nu}}}\|_2^2}\|{\boldsymbol{\rho}^{\boldsymbol{\nu}}}\|_1\geq \|\mathbf{w}^{(j)}\|_1 \frac{{\boldsymbol{\rho}^{\boldsymbol{\nu}}}_{\min}}{\|{\boldsymbol{\rho}^{\boldsymbol{\nu}}}\|_2}\frac{\|{\boldsymbol{\rho}^{\boldsymbol{\nu}}}\|_1}{\|{\boldsymbol{\rho}^{\boldsymbol{\nu}}}\|_2}\stackrel{(b)}{\geq} \|\mathbf{w}^{(j)}\|_1 \frac{{\boldsymbol{\rho}^{\boldsymbol{\nu}}}_{\min}}{\|{\boldsymbol{\rho}^{\boldsymbol{\nu}}}\|_2}\stackrel{(c)}{\geq} \|\mathbf{w}^{(j)}\|_1 \tan\varphi
\]
where (b) follows from the norm inequality and (c) follows from the assumption on $\varphi$ in \Cref{sec:assumptions}. This completes the proof of the lemma.
\hfill\Halmos
\end{proof}

\subsection{Proof of \Cref{lemma:WINE_for_wait_time}}
\label{app:WINE_for_wait_time}

\restate*\ref{lemma:WINE_for_wait_time}

\begin{proof}{\textit{Proof.}}
To bound $\E[T^\pi_{\text{wait}}]$, we apply WINE (\Cref{prop:WINE}) to the queueing system consists only of jobs waiting to enter service. In this system, the remaining priority indices of jobs as in \Cref{prop:WINE} is their original priority indices, as jobs leave the system as soon as they receive any service. 

By \Cref{prop:WINE},
\begin{align*}
    \E[T^\pi_{\text{wait}}]
    &\leq \frac{1}{\lambda} \int_{0}^{\infty} \frac{\E[W^\pi_{\origduration \leq x}]}{x^2} \, dx \\
    &= \frac{1}{\lambda} \int_{0}^{b_{n_b}} \frac{\E[W^\pi_{\origduration \leq x}]}{x^2} \, dx
        + \frac{1}{\lambda} \int_{b_{n_b}}^{\infty} \frac{\E[W^\pi_{\origduration \leq x}]}{x^2} \, dx \\
    &= \frac{1}{\lambda} \sum_{i=1}^{n_b} \int_{b_{i-1}}^{b_{i}} \frac{\E[W^\pi_{\origduration \leq x}]}{x^2} \, dx
        + \frac{1}{\lambda} \frac{\E[W^\pi_{\origduration \leq b_{n_b}}]}{b_{n_b}} \\
    &\leq \frac{1}{\lambda} \sum_{i=1}^{n_b} \int_{b_{i-1}}^{b_{i}} \frac{\E[W^\pi_{\origduration \leq b_i}]}{x^2} \, dx
        + \frac{1}{\lambda} \frac{\E[W^\pi_{\origduration \leq b_{n_b}}]}{b_{n_b}} \\
    &= \frac{1}{\lambda} \sum_{i=1}^{n_b} \E[W^\pi_{\origduration \leq b_i}] \left(\frac{1}{b_{i-1}} - \frac{1}{b_{i}}\right)
        +\frac{1}{\lambda} \frac{\E[W^\pi_{\origduration \leq b_{n_b}}]}{b_{n_b}} \\
    &= \frac{1}{\lambda} \sum_{i=1}^{n_b} \frac{c-1}{b_0 c^i} \E[W^\pi_{\leq i}]
        + \frac{1}{\lambda} \frac{\E[W^\pi]}{b_{n_b}}.
\end{align*}
\hfill\Halmos
\end{proof}

\subsection{Proof of \Cref{thm:response_time_bound}}
\label{app:response_time_bound}

\restate*\ref{thm:response_time_bound}

\begin{proof}{\textit{Proof.}}
By \Cref{lemma:WINE_for_wait_time},
\[
    \E[T^\SEB_{\text{wait}}] \leq \underbrace{\frac{1}{\lambda} \sum_{i=1}^{n_b} \frac{c-1}{b_0 c^i} \E[W^\SEB_{\leq i}]}_{\mathcal{T}_1}
        + \underbrace{\frac{1}{\lambda} \frac{\E[W^\SEB]}{b_{n_b}}}_{\mathcal{T}_2}
\]

Recall from \Cref{prop:WDL} that for any $i=1,\ldots,n_b$,
\[
    \E[W_{\leq i}^\SEB] - \E[W_{\leq i}^{\mgone}] = \frac{\E[\mathcal{I}^\SEB_{\leq i}W^\SEB_{\leq i}]}{1-\rho^{\boldsymbol{\nu}}_{\leq i}}
\]

Using \Cref{prop:WDL} and the waste bound in \Cref{thm:waste_bound}, we obtain
\begin{align*}
\mathcal{T}_1&\leq \frac{1}{\lambda}\sum_{i=1}^{n_b}\frac{c-1}{b_0c^i}\E[W_{\leq i}^{\mgone}]+\frac{A_1+A_2}{\lambda}\sum_{i=1}^{n_b}\frac{c^{i}-1}{b_0c^i}+\frac{A_2}{\lambda}\sum_{i=1}^{n_b}\log\left(\frac{1}{1-\rho^{\boldsymbol{\nu}}_{\leq i}}\right)\frac{c^{i}-1}{b_0c^i}+\frac{A_2}{\lambda}\sum_{i=1}^{n_b}i\frac{c-1}{b_0c^i}\\
&\leq \frac{1}{\lambda}\sum_{i=1}^{n_b}\frac{c-1}{b_0c^i}\E[W_{\leq i}^{\mgone}]+\frac{1}{\lambda}\left(\frac{A_1+A_2}{b_0}+\frac{A_2}{b_0}\log\left(\frac{1}{1-\rho^{\boldsymbol{\nu}}}\right)\right)\sum_{i=1}^{n_b}\frac{c^{i}-1}{c^i}+\frac{1}{\lambda}\frac{A_2}{b_0}n_b\sum_{i=1}^{n_b}\frac{c-1}{c^i}\\
&\leq \frac{1}{\lambda}\sum_{i=1}^{n_b}\frac{c-1}{b_0c^i}\E[W_{\leq i}^{\mgone}]+\frac{1}{\lambda}\left(\frac{A_1+2A_2}{b_0}+\frac{A_2}{b_0}\log\left(\frac{1}{1-\rho^{\boldsymbol{\nu}}}\right)\right)n_b\\
\mathcal{T}_2&\leq\frac{1}{\lambda}\frac{\E[W^{\mgone}]}{b_{n_b}}+\frac{1}{\lambda}\left[\left(\frac{A_1+A_2}{b_{n_b}}+\frac{A_2}{b_{n_b}}\log\left(\frac{1}{1-\rho^{\boldsymbol{\nu}}}\right)\right)\frac{c^{n_b}-1}{c-1}+\frac{A_2}{b_{n_b}}n_b\right]
\end{align*}

Combining bounds for $\mathcal{T}_1$ and $\mathcal{T}_2$ and noting that $b_{n_b}-b_0=b_0(c^{n_b}-1)$ give us
\begin{align}
    \nonumber
    \E[T_{\text{wait}}^\SEB]\leq&\,\frac{1}{\lambda}\sum_{i=1}^{n_b}\frac{c-1}{b_0c^i}\E[W_{\leq i}^{\mgone}]+\frac{1}{\lambda}\frac{\E[W^{\mgone}]}{b_{n_b}}+\frac{1}{\lambda}\left(\frac{A_1+A_2}{b_0}+\frac{A_2}{b_0}\log\left(\frac{1}{1-\rho^{\boldsymbol{\nu}}}\right)+\frac{A_2}{b_0}\right)n_b+\\
    \nonumber
    &\,\frac{1}{\lambda}\left(\frac{A_1+A_2}{b_{n_b}}+\frac{A_2}{b_{n_b}}\log\left(\frac{1}{1-\rho^{\boldsymbol{\nu}}}\right)\right) \frac{b_{n_b}-b_0}{b_0}\frac{1}{c-1}+\frac{1}{\lambda}\frac{A_2}{b_{n_b}}n_b\\
    \label{eq:PSJF-second-to-last}
    \leq&\, \frac{1}{\lambda}\sum_{i=1}^{n_b}\frac{c-1}{b_0c^i}\E[W_{\leq i}^{\mgone}]+\frac{1}{\lambda}\frac{\E[W^{\mgone}]}{b_{n_b}}+\frac{A}{\lambda} \log\left(\frac{1}{1-\rho^{\boldsymbol{\nu}}}\right) \left(2n_b + \frac{1}{c - 1}\right)
\end{align}

Next, we show that the first two terms involving $\E[W^{M/G/1}]$ are upper bounded by $c \E{T^\PSJF}$.

We now consider a single-server Preemptive-Shortest-Job-First (PSJF) policy. Let $T^\PSJF$ be its response time, then we have
\begin{align}\label{eq:PSJF_RT_bound}
\E[T^\PSJF]\stackrel{(a)}{=}&\frac{1}{\lambda}\int_{0}^{\infty}\frac{\E[W^\PSJF_{\remduration \leq x}]}{x^2}\,dx \nonumber \\
\stackrel{(b)}{\geq}& \frac{1}{\lambda}\int_{0}^{\infty}\frac{\E[W^\PSJF_{\origduration \leq x}]}{x^2}\,dx \nonumber\\
\stackrel{(c)}{=}& \frac{1}{\lambda}\frac{1}{c}\int_{0}^{\infty}\frac{\E[W^\PSJF_{\origduration \leq cx}]}{x^2}\,dx \nonumber\\
\stackrel{(d)}{=}& \frac{1}{\lambda}\frac{1}{c}\sum_{i=1}^{n_b}\int_{b_{i-1}}^{b_i}\frac{\E[W^\PSJF_{\origduration \leq cx}]}{x^2}\,dx+\frac{1}{\lambda}\frac{1}{c}\int_{b_{n_b}}^\infty\frac{\E[W^\PSJF_{\origduration \leq cx}]}{x^2}\,dx \nonumber\\
\stackrel{(e)}{\geq}& \frac{1}{\lambda}\frac{1}{c}\sum_{i=1}^{n_b}\int_{b_{i-1}}^{b_i} \frac{\E[W^\PSJF_{\origduration \leq b_i}]}{x^2}\,dx+\frac{1}{\lambda}\frac{1}{c}\int_{b_{n_b}}^\infty\frac{\E[W^\PSJF_{\origduration \leq cx}]}{x^2}\,dx \nonumber\\
\stackrel{(f)}{=}& \frac{1}{\lambda}\frac{1}{c}\sum_{i=1}^{n_b}\int_{b_{i-1}}^{b_i} \frac{\E[W^{\mgone}_{\leq i}]}{x^2}\,dx+\frac{1}{\lambda}\frac{1}{c}\int_{b_{n_b}}^\infty\frac{\E[W^{\mgone}]}{x^2}\,dx \nonumber\\
=& \frac{1}{\lambda}\frac{1}{c}\sum_{i=1}^{n_b}\frac{c-1}{b_0c^i}\E[W^{\mgone}_{\leq i}]+\frac{1}{\lambda}\frac{1}{c}\frac{1}{b_{n_b}}\E[W^{\mgone}]
\end{align}

where (a) follows from the WINE identity. (b) follows from the fact that 
$W^\PSJF_{\remduration \leq x}$ includes work from jobs with original priority indices less than $x$ (i.e. $W^\PSJF_{\origduration \leq x}$) and jobs with original priority indices larger than $x$ but have remaining priority indices less than $x$ because they have received some service. (c) follows from a change of variable $x\mapsto cx$. To better compare $\E[T^\PSJF]$ and $\E[T^\SEB]$, we divide the integral into $b_0, b_1,\ldots,b_{n_b},\infty$ in (d) the same way we define the buckets in MSJ scheduling. (e) follows from $b_{i-1}\leq x\leq b_i\leq cx$. (f) follows from the fact that PSJF is a work-conserving policy.

Comparing \eqref{eq:PSJF_RT_bound} with \eqref{eq:PSJF-second-to-last} completes the proof.
\hfill\Halmos
\end{proof}








    

\end{APPENDICES}

\end{document}